\renewcommand\footnotetextcopyrightpermission[1]{} 
\newcommand{\Periodic}{\textsc{StencilFFT-P}}
\newcommand{\Aperiodic}{\textsc{StencilFFT-A}}
\newcommand{\Boundary}{\textsc{RecursiveBoundary}}
\newcommand{\StencilViaFFTPeriodic}{\textsc{PStencil-1D-FFT}\xspace}
\newcommand{\StencilviaNestedLoopPeriodic}{\textsc{PStencil-Loop}}
\newcommand{\StencilviaNestedLoopAperiodic}{\textsc{AStencil-Loop}}
\newcommand{\MFFT}{\textsc{Multi-FFT}}
\newcommand{\IMFFT}{\textsc{Inverse-Multi-FFT}}
\newcommand{\fft}{\mathcal{F}}
\newcommand{\ifft}{\mathcal{F}^{-1}}
\newcommand{\pluto}{PLuTo}
\newcommand{\plutostandard}{\texttt{standard}}
\newcommand{\plutodiamond}{\texttt{diamond}}
\newcommand{\hide}[1]{}
\newcommand{\Arxiv}[1]{
    \ifbool{arxiv_version}{#1}{}
}
\newcommand{\highlight}[1]{\textit{\textbf{#1}}}
\newcommand{\cnote}[2]{%
  \ifnotes
    {\marginpar{\color{#1}\scriptsize \sf \raggedright{#2}}}%
  \fi}
\newcommand{\citereferences}{
    \ifbool{draft_formatting}{}{\textsc{\textcolor{red}{CITE}}}
}
\newcommand{\autogen}{\ifbool{draft_formatting}{}{\textsc{Autogen}}}
\newcommand{\okay}{
    \ifbool{draft_formatting}{}{\pramod{Okay till here}}
}
\newcommand{\addcitations}{
    \ifbool{draft_formatting}{}{\textcolor{red}{[CITE]}}
}
\newcommand{\pramod}[1]{
    \ifbool{draft_formatting}{}{
        \vspace{0.3cm}
        \hrule{} \textcolor{red}{[\textsc{#1}]} \hrule{} \vspace{0.3cm}
    }
}
\newcommand{\rezaul}[1]{
    \ifbool{draft_formatting}{}{\textcolor{orange}{[\textsc{#1}]}}
}
\newcommand{\aaron}[1]{
    \ifbool{draft_formatting}{}{\textcolor{blue}{[\textsc{#1}]}}
}
\definecolor{orange}{rgb}{1,0.3,0}
\newcommand{\rezaulnote}[1]{
    \ifbool{draft_formatting}{}{\cnote{red}{Rezaul: #1}}
}
\newcommand{\vgap}{\vspace{-0.2cm}}
\newcommand{\para}[1]{\vspace{0.1cm}\noindent{\bf{#1.}}}
\newcommand{\codesize}{\small}
\newcommand{\T}{\hspace{1em}}
\definecolor{gray}{rgb}{0.3,0.3,0.3}
\newcommand{\Oh}[1]{{\mathcal O}\left({#1}\right)}
\newcommand{\LOh}[1]{{\mathcal O}\left({#1}\right.}
\newcommand{\ROh}[1]{\left.{#1}\right)}
\newcommand{\oh}[1]{{o}\left({#1}\right)}
\newcommand{\Om}[1]{{\Omega}\left({#1}\right)}
\newcommand{\Th}[1]{{\Theta}\left({#1}\right)}
\newcommand{\LTh}[1]{{\Theta}\left({#1}\right.}
\newcommand{\RTh}[1]{\left.{#1}\right)}
\newcommand{\xif}{{\bf{{if~}}}}
\newcommand{\xthen}{{\bf{{then~}}}}
\newcommand{\xfor}{{\bf{{for~}}}}
\newcommand{\xto}{{\bf{{to~}}}}
\newcommand{\xdo}{{\bf{{do~}}}}
\newcommand{\xreturn}{{\bf{{return~}}}}
\newcommand{\xparallelfor}{{\bf{{parallel for~}}}}
\newcommand{\xcomment}{\hfill $\rhd$ }
\newcommand{\vsitem}{\vspace{0cm}\item}
\newcommand{\ALOOP}[1]{\ALC@it\algorithmicloop\ #1%
  \begin{ALC@loop}}
\newcommand{\ENDALOOP}{\end{ALC@loop}\ALC@it\algorithmicendloop}
\definecolor{desiredblue}{rgb}{0.803,0.940,0.995}
\definecolor{gray}{rgb}{0.3,0.3,0.3}
\colorlet{lightblue}{desiredblue}
\colorlet{lightred}{red!15}
\colorlet{lightgreen}{green!15}
\colorlet{mediumblue}{blue!25}
\colorlet{mediumred}{red!25}
\colorlet{mediumgreen}{green!25}
\colorlet{lightyellow}{yellow!15}
\colorlet{framecolor}{yellow!20}
\colorlet{frameloopcolor}{yellow!20}
   \newcommand\figcaption{\def\@captype{figure}\caption}
   \newcommand\tabcaption{\def\@captype{table}\caption}
\colorlet{algotitlebarcolor}{green!20}
\colorlet{algotitlebarcolortop}{lightblue}
\colorlet{algotitlebarcolorbottom}{lightblue}
\colorlet{framecolor}{yellow}
\colorlet{framecolortop}{white}
\colorlet{framecolorbottom}{white}
\colorlet{bluetop}{cyan!40}
\colorlet{bluebottom}{cyan!20}
\colorlet{greentop}{white!40}
\colorlet{greenbottom}{white!20}
\colorlet{shadowcolor}{gray}
\colorlet{algotitlecolor}{black}
\colorlet{algoframecolor}{gray}
\colorlet{tabletitlecolor}{desiredblue}
\colorlet{algocolor}{black}
\newtcolorbox{mycolorbox}[1]
{skin=enhanced,colbacktitle=algotitlebarcolor,coltitle=algotitlecolor,colback=algocolor,colframe=algoframecolor,boxrule=1pt,left=5mm,right=1mm,title style={top color=algotitlebarcolortop, bottom color=algotitlebarcolorbottom},interior style={top color=framecolortop, bottom color=framecolorbottom},title={\codesize #1}}
\newcommand{\algotopspace}{\vspace{-0.25cm}}
\newcommand{\algobottomspace}{\vspace{-0.15cm}}
\newsavebox{\tablebox}
  \savebox{\tablebox}{%
    \begin{tabular}{#1}%
	  \rowcolor{tabletitlecolor}
      \BODY%
    \end{tabular}}
\setlist[enumerate,1]{leftmargin=\dimexpr 26pt-0.3cm}
\begin{document}

	\fancyhead{}

\title{Fast Stencil Computations\\ using Fast Fourier Transforms}         


\author{Zafar Ahmad}
\affiliation{%
	\institution{Stony Brook University}
}
\email{zafahmad@cs.stonybrook.edu}

\author{Rezaul Chowdhury}
\affiliation{%
	\institution{Stony Brook University}
}
\email{rezaul@cs.stonybrook.edu}

\author{Rathish Das}
\affiliation{%
	\institution{University of Waterloo}
}
\email{rathish.das@uwaterloo.ca}

\author{Pramod Ganapathi}
\affiliation{%
	\institution{Stony Brook University}
}

\email{pramod.ganapathi@cs.stonybrook.edu}

\author{Aaron Gregory}
\affiliation{%
	\institution{Stony Brook University}
}

\email{afgregory@cs.stonybrook.edu}

\author{Yimin Zhu}
\affiliation{%
	\institution{Stony Brook University}
}
\email{yimzhu@cs.stonybrook.edu }

\begin{abstract}
Stencil computations are widely used to simulate the change of state of physical systems across a multidimensional grid over multiple timesteps. The state-of-the-art techniques in this area fall into three groups: cache-aware tiled looping algorithms, cache-oblivious divide-and-conquer trapezoidal algorithms, and Krylov subspace methods.

In this paper, we present two efficient parallel algorithms for performing linear stencil computations. Current direct solvers in this domain are computationally inefficient, and Krylov methods require manual labor and mathematical training. We solve these problems for linear stencils by using DFT preconditioning on a Krylov method to achieve a direct solver which is both fast and general. Indeed, while all currently available algorithms for solving general linear stencils perform $\Th{NT}$ work, where $N$ is the size of the spatial grid and $T$ is the number of timesteps, our algorithms perform $\oh{NT}$ work.

To the best of our knowledge, we give the first algorithms that use fast Fourier transforms to compute final grid data by evolving the initial data for many timesteps at once. Our algorithms handle both periodic and aperiodic boundary conditions, and achieve polynomially better performance bounds (i.e., computational complexity and parallel runtime) than all other existing solutions.

Initial experimental results show that implementations of our algorithms that evolve grids of roughly $10^7$ cells for around $10^5$ timesteps run orders of magnitude faster than state-of-the-art implementations for periodic stencil problems, and 1.3$\times$ to 8.5$\times$ faster for aperiodic stencil problems.\\

\noindent
{\bf{Code Repository:}} \texttt{https://github.com/TEAlab/FFTStencils}
\end{abstract}


\maketitle

\pagenumbering{arabic}

\section{Introduction}
\label{sec:introduction}

A \highlight{stencil} is a pattern used to compute the value of a cell in a spatial grid at some time step from the values of nearby cells at previous time steps. A \highlight{stencil computation} \cite{frigo2005cache, tang2011pochoir} applies a given stencil to the cells in a spatial grid for some set number of timesteps. Stencil computations have applications in a variety of fields including fluid dynamics \cite{ferziger2002computational, hirsch2007numerical, blazek2015computational}, electromagnetics \cite{komissarov2002time, taflove2005computational, van2012numerical, atangana2015numerical}, mechanical engineering \cite{renson2016numerical, szilard2004theories, rappaz2010numerical}, meteorology \cite{robert1981stable, robert1982semi, avissar1989parameterization, kalnay1990global}, cellular automata \cite{somers1993direct, nkwunonwo2019urban, sitko2016time, mendicino2015stability}, and image processing \cite{weickert2000applications, peyre2011numerical, vese2002numerical, weickert1996theoretical, russ1994image}. In particular, they are widely used for simulating the change of state of physical systems over time \cite{pang1999introduction, barth2013high, vesely1994computational, thijssen2007computational}. 

Due to the importance of stencil computations in scientific computing \cite{press2007numerical, dahlquist2008numerical, golub1992scientific}, various methods have been devised to improve their runtime performance on different machine architectures \cite{nguyen20103, kamil2010auto, sano2011scalable, datta2009optimization, christen2011patus}. All currently available stencil compilers \cite{christen2011patus, hagedorn2018high, henretty2013stencil, luporini2020architecture, ragan2013halide, bondhugula2008pluto, tang2011pochoir} that can accept arbitrary \highlight{linear}\footnote{
A linear stencil is one that uses exclusively linear combinations of grid values from prior timesteps.
} stencils perform $\Th{NT}$ work\footnote{
Let $T_p$ denote a program's runtime on a $p$-processor machine. Then, $T_1$ and $T_{\infty}$ are called \textit{work} and \textit{span}, respectively.\label{fn:workspan}
}, where $N$ is the number of cells in the spatial grid and $T$ is the number of timesteps.

In this paper, we present the first $\oh{NT}$-work stencil computation algorithms that support general linear stencils and arbitrary boundary conditions. Our algorithms have polynomially lower work than all other known options of equivalent or greater generality.

\para{Problem Specification} Consider a stencil computation to be performed over $T$ timesteps on a $d$-dimensional spatial grid of $N$ cells with initial data $a_0[\cdots]$. Cell data at subsequent timesteps are defined via application of the linear stencil $S$ across the grid, formalized as $a_{t+1} = S a_t$, where $a_t$ is the spatial grid data at timestep $t$. The stencil must define the value of a grid cell in terms of a fixed size neighborhood containing cells from only the prior timestep\footnote{
We will later extend the definition of stencils to allow for dependence on multiple prior timesteps.
}. We will not be able to apply the stencil to some cells near the boundaries of the grid; the values of these cells are instead defined via \highlight{boundary conditions}. Our goal is to compute the final grid data $a_T$ by evolving the initial data $a_0$ for $T$ timesteps.

There are two types of boundary conditions we can use: \highlight{periodic} and \highlight{aperiodic}. If the boundary conditions are periodic, it means that every dimension of the spatial grid wraps around onto itself, so the entire grid forms a torus. In this case modular arithmetic is used for all calculations involving spatial indices, and the stencil alone can be used to update all cells. On the other hand, if the boundary conditions are aperiodic, then the cells at the boundary of the grid have to be computed via some method other than straightforward application of the stencil. In this paper, we consider both types of boundary conditions.

\Arxiv{
\para{Example: 1-D Heat Equation with Aperiodic Grid} The temperature $u(x, \tau)$ at location $x$ and time $\tau$ in a homogeneous material is governed by the heat equation $\partial_{\tau} u = \partial_x^2 u$. By discretizing across space and time, we can recover the discrete heat equation,
$$\frac{u(x, \tau+\Delta \tau) - u(x, \tau)}{\Delta \tau} \approx \frac{u(x - \Delta x) - 2 u(x, \tau) + u(x + \Delta x, \tau)}{\Delta x^2}.$$
Further manipulation and substitution of the grid values $u_t[i] = u(i \Delta x, t \Delta \tau)$ give an explicit stencil scheme:
$$u_{t+1}[i] = u_t[i] + (\Delta \tau / \Delta x^2) (u_t[i-1] - 2u_t[i] + u_t[i+1]).$$
This stencil is depicted in Figure \ref{fig:stencil-exampled} (top left). Computing $T = \tau/\Delta \tau$ steps of $u_t$ then gives us an approximation of the exact function $u(x, \tau)$, as shown in Figure \ref{fig:stencil-exampled} (bottom). This approximation becomes increasingly precise as $\Delta x$ and $\Delta \tau$ go to 0.

\begin{figure}[!ht] 
\centering
\begin{minipage}{0.48\textwidth}
\centering
\includegraphics[width=0.48\textwidth]{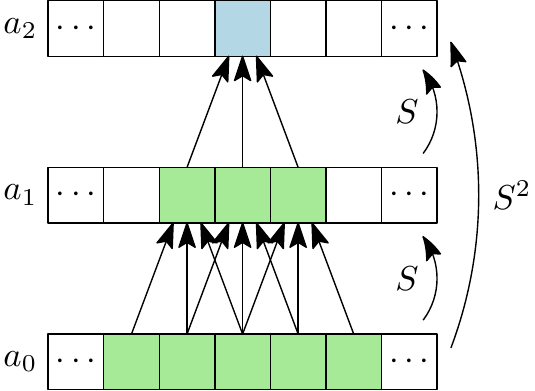}~\\[0.1cm]
\includegraphics[width=0.7\textwidth]{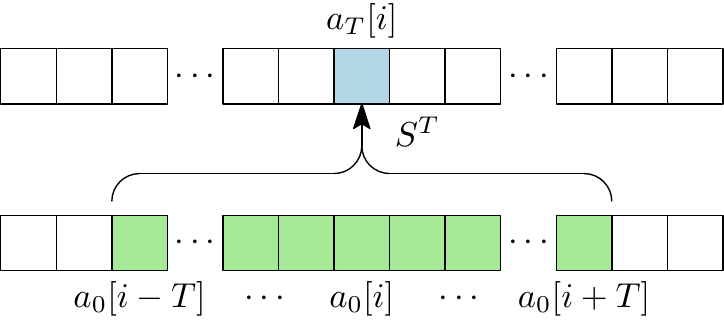}
\vgap{}
\caption{\aaron{This figure should come much later; we're not talking about exponentiation at all yet}\small (Top) Evolving via $S$ in two timesteps is the same as evolving via $S^2$ once. Here $S$ is a three point stencil, using only data from $a_t[i-1]$, $a_t[i]$, and $a_t[i+1]$ to computer $a_{t+1}[i]$. (Bottom) Final values $a_T$ are found by applying $S$ a total of $T$ times to $a_0$, i.e., by applying $S^T$ once to $a_0$.}
\label{fig:stencil-exampled}
\end{minipage}
\vspace{0.2cm}
\end{figure}
}

\para{Existing Algorithms} There are a handful of algorithms commonly used for carrying out general stencil computations, and many more designed for solving problems with specific boundary conditions and stencils. Here we will give an overview of algorithms which can be used for computations with arbitrary boundary conditions and linear stencils. It is worth noting that almost none of the following algorithms require stencil linearity\footnote{
Although some stencil compilers may not be able to apply their low-level optimizations to nonlinear stencils.
}.

\Arxiv{
\begin{figure}[!ht] 
\centering
\begin{minipage}{0.48\textwidth}
\begin{mycolorbox}{\StencilviaNestedLoopPeriodic$(a_0, S, \sigma, N, T)$}
\begin{minipage}{0.97\textwidth}
{\scriptsize
\algotopspace{}
\noindent
\begin{enumerate}
\setlength{\itemindent}{-2em}

\vsitem \xfor $t \gets 1$ \xto $T$ \xdo
\vsitem \T \xparallelfor $i \gets 0$ \xto $N - 1$ \xdo
\vsitem \T \T $a_{t}[i] \gets \sum_{j = -\sigma}^{\sigma} S[i, (i+j)\%N] a_{t-1}[(i + j)\%N]$ 
\vsitem \textbf{return} $a_{T}[0, \ldots, N-1]$ \xcomment final spatial grid data

\algobottomspace{}
\end{enumerate}
}
\end{minipage}
\end{mycolorbox}
\vgap{}
\begin{mycolorbox}{\StencilviaNestedLoopAperiodic$(a_0, S, \sigma, N, T)$}
\begin{minipage}{0.97\textwidth}
{\scriptsize
\algotopspace{}
\noindent
\begin{enumerate}
\setlength{\itemindent}{-2em}

\vsitem \xfor $t \gets 1$ \xto $T$ \xdo
\vsitem \T Apply boundary conditions to compute
\vsitem[] \T \T $a_{t}[0, \ldots, \sigma-1]$ and $a_{t}[(N - \sigma + 1), \ldots, (N - 1)]$
\vsitem \T \xparallelfor $i \gets \sigma$ \xto $N - 1 - \sigma$ \xdo
\vsitem \T \T $a_{t}[i] \gets \sum_{j = -\sigma}^{\sigma} S[i, i+j] a_{t-1}[i+j]$ 
\vsitem \textbf{return} $a_{T}[0, \ldots, N-1]$ \xcomment final spatial grid data

\algobottomspace{}
\end{enumerate}
}
\end{minipage}
\end{mycolorbox}
\vgap{}\vgap{}
\caption{\small Looping implementations of 1-D stencil computations with periodic and aperiodic boundary conditions. Here $\sigma$ is the radius of the neighborhood used by the stencil $S$, and $m \% n = m \bmod n$.}
\label{fig:looping-code}
\end{minipage}
\vgap{}\vgap{}\vgap{}
\end{figure}
}

\vspace{0.1cm}
\noindent
\textbf{1. [\textit{Looping Algorithms.}]} It is a simple matter to implement stencil computations using nested loops, as shown in Figure \ref{fig:looping-code}. However, such computations suffer from poor data locality and hence are inefficient both in theory and in practice. As can be seen from the pseudocode listing, looping codes require $\Th{NT}$ work to evolve a grid of $N$ cells forward $T$ timesteps, assuming that the stencil only uses values from a constant size neighborhood.

\begin{figure}[!t] 
\centering
\begin{minipage}{0.48\textwidth}
\begin{mycolorbox}{\textsc{Stencil-Loop}$(a_0, S, N, T)$}
\begin{minipage}{0.97\textwidth}
{\scriptsize
\algotopspace{}
\noindent
\begin{enumerate}
\setlength{\itemindent}{-2em}

\vsitem \xfor $t \gets 1$ \xto $T$ \xdo
\vsitem \T \xparallelfor $i \gets 0$ \xto $N - 1$ \xdo
\vsitem \T \T compute $a_{t}[i]$ by applying either the stencil $S$
\vsitem[] \T\T\T or the boundary conditions as appropriate
\vsitem \textbf{return} $a_{T}[0, \ldots, N-1]$ \xcomment final spatial grid data

\algobottomspace{}
\end{enumerate}
}
\end{minipage}
\end{mycolorbox}
\vgap{}\vgap{}
\caption{\small Looping code for 1-D stencil computations.}
\label{fig:looping-code}
\end{minipage}
\vgap{}\vgap{}\vgap{}
\end{figure}

\vspace{0.1cm}
\noindent
\textbf{2. [\textit{Tiled Looping Algorithms.}]} Adopting a tiling \cite{Bandishti2012, bondhugula2017} of the spatial grid is a common way to improve the data locality \cite{Wolfe1987,Wolf1991,Wolf1996,Wonnacott2002,bondhugula2016plutoplus} \hide{and parallelism \cite{Andonov1997,Hogstedt1999,bondhugula2017}} of looping algorithms. The tiles' size and shape have a strong influence on the runtime of the algorithm, and generally the best performance is attained when each tile fits into the cache. Most modern multicore machines have a hierarchy of caches; to make better use of the cache hierarchy, loop nests may need to be tiled at multiple levels \cite{kamil2005impact, datta2009optimization}.

The most common framework that can be used to derive tiled looping implementations is the \highlight{polyhedral model}, which uses a set of hyperplanes to partition the grid being solved for. The polyhedral model is extensively used in several code generators \cite{Pluto, Polly, PoCC, yuki2015polybench, verdoolaege2013}.

\vspace{0.1cm}
\noindent
\textbf{3. [\textit{Recursive Divide-and-Conquer Algorithms.}]} Instead of using a predetermined tiling of the spatial grid, these algorithms recursively break the region to be solved for into multiple smaller subregions. The \highlight{trapezoidal decomposition algorithm} \cite{FrigoSt2005, FrigoSt2009, Tang2011} is the most well known divide-and-conquer stencil algorithm. Its recursive approach to tiling allows it to be not only both cache-oblivious \cite{FrigoLePrRa1999} and cache-adaptive \cite{Bender2014, Bender2016}, but also to achieve asymptotic cache performance matching that of an optimally tiled stencil code across all levels of the memory hierarchy in a multicore machine.

\Arxiv{
\aaron{TODO: there should be a heading here} Stencils can be considered as special cases of dynamic programming (DP) recurrences, where the read-write dependencies are local. \highlight{\autogen{}} \addcitations{} and \highlight{Bellmaniac} \addcitations{} are divide-and-conquer algorithms that take as input simple specifications of DP recurrences (or its variants) and output recursive divide-and-conquer DP algorithms, automatically and semi-automatically, respectively. These algorithms apply to a wide class of DP problems, including most stencils. \aaron{This paragraph doesn't add much}

When applied to the 1-D heat equation stencil, all the divide-and-conquer algorithms above perform $\Th{NT}$ work\hide{, $\Th{T N^{\log 3 - 1}}$ span,} and incurs $\Oh{NT/(BM)}$ cache misses. \aaron{already stated multiple times}
}

\vspace{0.1cm}
\noindent
\textbf{4. [\textit{Krylov Subspace Methods.}]} Krylov methods compose a diverse set of mathematical techniques which are extensively used in numerical analysis to find successively better approximations of the exact solution to a stencil problem. Such methods are often used to solve problems for which there is no known direct (tiled-loop or divide-and-conquer) solution technique. Discrete Fourier transforms (DFTs) are frequently used in the analysis \cite{boyd2001chebyshev} and implementations \cite{andreussi2012revised, kabel2014efficient, guan2019two} of these methods. Krylov methods that use DFTs in their implementations are very restricted in their applicability, usually applying only to stencils from specific PDEs that benefit from spectral analysis.

There are several limitations of Krylov methods as a whole: $(i)$ their initial design requires nontrivial manual convergence analysis \cite{kuijlaars2006convergence, notay2008recursive, brown1994convergence}, $(ii)$ they are mostly applicable only to very small classes of problems \cite{kabel2014efficient, garrappa2015solving, asgharzadeh2017newton}, $(iii)$ they generally do not produce exact solutions in finite time, but exhibit a trade-off between runtime and accuracy. Improving this trade-off by finding near-optimal preconditioners \cite{benzi2011relaxed, benzi2004preconditioner, turkel1987preconditioned} is a hard problem \cite{chan1988optimal, knyazev2001toward} in the general case.

These common limitations should not be confused for rules, however: because of their diversity, Krylov methods can take on a variety of useful properties when specially designed. For example, when an optimal preconditioner is selected they can find the exact solution in a finite number of iterations. The algorithms we present in this paper will be partially based on an optimally preconditioned Krylov method that is applicable to a rather large class of stencil problems.

\para{Our FFT-Based Algorithms}
The computation $S^T a_0 = a_T$ (where $S^T$ denotes that the stencil $S$ is applied $T$ times) evolves the initial grid data $a_0$ for $T$ timesteps to produce the final data $a_T$. As will be seen in Section \ref{sec:fft}, any method of computing $a_{t+1} = S a_t$ is mathematically equivalent to a product where $S$ is viewed as a matrix and $a_t$ as a vector. All existing algorithms which find $a_T$ exactly do so by direct computation of $a_T = S ( S (S(\cdots a_0)))$, where $S$ is applied for a total of $T$ timesteps, incurring $\Th{NT}$ work in the process. The looping, tiled, and recursive algorithms we have described differ only in how they break up this series of matrix-vector products. We will instead evaluate this product by diagonalization and repeated squaring of the stencil matrix $S$.

In this paper, we present two FFT-based stencil algorithms: {\Periodic} for problems with periodic boundary conditions, and {\Aperiodic} for those with aperiodic boundary conditions. Our algorithms are applicable to arbitrary uniform linear stencils across vector-valued fields.

\vspace{0.1cm}
\noindent
\textbf{1. [\textit{Periodic Stencil Algorithm.}]}
Let the discrete Fourier transform (DFT) matrix be written $\fft[i,j] = \omega_N^{-ij} / \sqrt{N}$, where $\omega_N = e^{2\pi \sqrt{-1} / N}$, and let $\ifft$ be the inverse DFT matrix. We use fast Fourier transforms to compute $a_T$ as follows:
\begin{equation*}
\fbox{$a_T = \ifft (\fft S \ifft)^T \fft a_0$}
\end{equation*}
where $\fft S \ifft$ is a diagonal matrix, and $T$ = \#timesteps (not a transposition).

To the best of our knowledge, this is the first time that FFT is being applied directly to the problem of computing integral powers of the circulant \cite{gray2006toeplitz} stencil matrix $S$ that appears in linear stencil computations, even though there is a strong history of using FFT to improve the efficiency of matrix computations \cite{gohberg1994fast, tyrtyshnikov1996unifying, dietrich1997fast}.

\vspace{0.1cm}
\noindent
\textbf{2. [\textit{Aperiodic Stencil Algorithm.}]} 
When given aperiodic boundary conditions, we use a recursive divide-and-conquer strategy to solve for the boundary of the spatial grid; {\Periodic} is used as a subroutine to compute cells whose values are independent of the boundary. This method allows us to compute every timestep of the boundary in serial, yet to skip over computing most timesteps of cells near the middle of the grid.

Before we analyze the complexities of our algorithms briefly described above, we give the performance metrics that will be used. 

\para{Performance Metrics} We use the \textit{work-span} model \cite{CormenLeRiSt2009} to analyze the performance of dynamic multithreaded parallel programs. \textit{Work} $T_1(n)$ of an algorithm, where $n$ is the input parameter, denotes the total number of serial computations. \textit{Span} $T_{\infty}(n)$ of an algorithm, also called \textit{critical-path length} or \textit{depth}, denotes the maximum number of operations performed on any single processor when the algorithm is run on a machine with an unbounded number of processors. Our analysis of span is performed according to the \textit{binary-forking} model \cite{blelloch2019optimal}, in which spawning $n$ threads required $\Th{\log n}$ span. This model is stricter than PRAM, so all bounds we give hold in the PRAM model as well. \textit{Parallel running time} $T_p(n)$ of an algorithm when run on $p$ processors under a greedy scheduler is given by $T_p(n) = \Oh{T_1(n) / p + T_{\infty}(n)}$. \textit{Parallelism} of an algorithm is the average amount of work performed in each step of its critical path and is computed as $T_1(n) / T_{\infty}(n)$.

\Arxiv{
We use the ideal-cache model \cite{FrigoLePrRa1999} to measure data locality of algorithms. \textit{Serial cache complexity} $Q_1(n)$ denotes the total number of data block transfers between adjacent levels of memory. Cache-oblivious algorithms do not use cache parameters such as cache size $M$ and cache line size $B$ in their pseudocodes and hence are more portable across machines with different cache parameters.

Our algorithms are cache-oblivious \cite{FrigoLePrRa1999} and cache-adaptive \cite{Bender2014, Bender2016}.
}

\begin{table}[!t]
\centering
\hspace{-0.2cm}
\scalebox{0.62}{
\begin{colortabular}{ | l | l | l | l |}
\hline                       
Algorithm & Work $(T_1)$ & Span $(T_{\infty})$ & Result \\\hline
\rowcolor{lightred} \multicolumn{4}{|l|}{Existing Algorithms}\\ \hline
Nested Loop  & $\Th{NT}$ & $\Th{ T \log N }$ & 
 \\ 
Tiled Loop  & $\Th{NT}$ & $\Th{T \log M +  \frac{T}{M^{1/d}} \log \frac{N}{M} }$ & \cite{Bandishti2012, bondhugula2017}
 \\ 
D\&C  & $\Th{NT}$ & $\Th{ T (N^{1/d})^{\log {(d + 2)} - 1} }$ & \cite{Tang2011}
 \\ \hline
\rowcolor{lightgreen} \multicolumn{4}{|l|}{Our Algorithms}\\ \hline
{\Periodic} & $\Th{N \log (NT)}$ & $\Th{ \log T + \log N \log \log N }$ & Th. \ref{th:periodic-stencil-fft} \\

{\Aperiodic} & $\Th{ \begin{array}{@{}c@{}} TN^{1-1/d} \log \left( TN^{1-1/d} \right) \log T \\ + N \log N \end{array} }$ & 
$\begin{cases} \Th{T} &\mbox{if } d=1 \\
\Th{T \log N} & \mbox{if } d\ge2 \end{cases}$ & Th. \ref{th:aperiodic-stencil}
\end{colortabular} 
}
\caption{\small Complexity bounds for stencil algorithms, where $N =$ spatial grid size, $T =$ \#timesteps, and $M =$ cache size. It is important to note that the bounds given for our algorithms are for computations with $\Th{1}$-size stencils on $d = \Th{1}$ dimensional hypercubic grids, and that we have simplified the span of {\Aperiodic} by assuming $T = \Om{\log N \log \log N}$. The nested loop, tiled loop, and D\&C algorithms work for both periodic and aperiodic boundary conditions. The span for the tiled loop algorithm is $\Om{T \log \log N}$.}
\label{tab:summary}
\vgap{}\vgap{}
\centering
\scalebox{0.59}{
\begin{colortabular}{ | l | l | l | | l l | r  r | r r | r r |}
\hline                       

\multicolumn{5}{|l|}{\begin{tabular}{@{}p{2.3in}@{}}Benchmark\\\hline\end{tabular}} & \multicolumn{4}{c|}{\begin{tabular}{@{~}c@{~}}Parallel runtime in seconds\\\hline\end{tabular}} & \multicolumn{2}{c|}{Speedup factor}\\

\rowcolor{tabletitlecolor} \multicolumn{2}{|l|}{} & &  &  & \multicolumn{2}{c|}{\begin{tabular}{@{~}c@{~}}\pluto{}\\\hline\end{tabular}} & \multicolumn{2}{c|}{\begin{tabular}{@{~}c@{~}}Our algorithm\\\hline\end{tabular}} & \multicolumn{2}{c|}{\begin{tabular}{@{~}c@{~}}over \pluto{}\\\hline\end{tabular}} \\

\rowcolor{tabletitlecolor}\multicolumn{2}{|l|}{} & Stencil & $N$ & $T$ & KNL & SKX & KNL & SKX & KNL & SKX \\ \hline

\multicolumn{2}{|c|}{\multirow{6}{*}{\rotatebox{90}{Periodic}}} & \texttt{heat1d} & $1,600,000$ & $10^6$ & 79 & 19 & 0.25 & 0.03 & 1754.7 & 759.6\\
\multicolumn{2}{|l|}{}& \texttt{heat2d} & $8,000 \times 8,000$ & $ 10^5$ & 1,437 & 222 & 0.48 & 0.61 & 3,025.0 & 367.0\\
\multicolumn{2}{|l|}{}& \texttt{seidel2d} & $8,000 \times 8,000$ & $ 10^5$ & 500 & 808 & 0.48 & 0.64 & 1,032.7 & 1268.6\\ 
\multicolumn{2}{|l|}{}& \texttt{jacobi2d} & $8,000 \times 8,000$ & $ 10^5$ & 2,905 & 1017 &  0.48 & 0.68 & 6,084.7 & 1502.1\\
\multicolumn{2}{|l|}{}& \texttt{heat3d} & $800 \times 800 \times 800$ & $10^4$ & 816 & 1466 & 4.98 & 5.48 & 163.9 & 267.3\\
\multicolumn{2}{|l|}{}& \texttt{19pt3d} & $800 \times 800 \times 800$ & $10^4 $ & 141 & 158 & 4.84 & 5.78 & 29.1 & 27.3\\
\hline

\multirow{12}{*}{\rotatebox{90}{Aperiodic}} & \multirow{6}{*}{\rotatebox{90}{Experiment 1}} & \texttt{heat1d} & $1,600,000$ & $10^6$ & 50 & 35 & 5.85 & 6.69 & 8.5 & 5.2\\
&& \texttt{heat2d}& $8,000 \times 8,000$ & $ 10^5$ & 333 & 530 & 143.25 & 151.37 & 2.3 & 3.5 \\
&& \texttt{seidel2d}& $8,000 \times 8,000$ & $ 10^5$ & 345 & 601 & 145.42 & 132.97 & 2.4 & 4.5\\
&& \texttt{jacobi2d}& $8,000 \times 8,000$ & $ 10^5$ & 567 & 456 & 249.04 & 273.46 & 2.3 & 1.7\\
&& \texttt{heat3d}& $800 \times 800 \times 800$ & $10^4 $ & 513 & 763 & 395.10 & 605.89 & 1.3 & 1.3\\
&& \texttt{19pt3d}& $800 \times 800 \times 800$ & $10^4 $ & 645 & 848 & 425.22 & 616.71 & 1.5 & 1.4\\\cline{2-11}

 &\multirow{6}{*}{\rotatebox{90}{Experiment 2}}& \texttt{heat1d} & $1,600,000$ & $N$ & 32 & 23 & 5.63 & 6.87  & 5.7 & 3.3\\
&& \texttt{heat2d} & $8,000 \times 8,000$ & $ \sqrt{N}$ & 210 &  312 & 92.78 & 121.70 & 2.3 & 2.6\\
&& \texttt{seidel2d} & $8,000 \times 8,000$ & $ \sqrt{N}$ & 228 & 375 & 91.59 & 121.46 & 2.5 & 3.1\\
&& \texttt{jacobi2d} & $8,000 \times 8,000$ & $ \sqrt{N}$ & 372 & 281 & 151.31 & 198.00 & 2.5 & 1.4\\
&& \texttt{heat3d} & $800 \times 800 \times 800$ & $\sqrt[3]{N}$ & 45 & 71 & 32.29 & 50.52 & 1.4 & 1.4\\
&& \texttt{19pt3d} & $800 \times 800 \times 800$ & $\sqrt[3]{N}$ & 61 & 71 & 33.82 & 52.27 & 1.8 & 1.4

\end{colortabular}
}
\caption{Performance summary of parallel stencil algorithms on a KNL/SKX node.}
\vspace{-0.3cm}
\label{tab:experimental-results}
\vgap{}\vgap{}\vgap{}
\end{table}

\para{Performance Analysis of Our Algorithms} 
The performance of our periodic and aperiodic stencil algorithms are summarized in Table \ref{tab:summary}. We see that: $(i)$ Both work and span of {\Periodic} have only logarithmic dependence on $T$ compared with the linear dependence on $T$ in the existing algorithms. $(ii)$ For a $d$-D problem, {\Aperiodic} has work quasilinearly dependent on $(T N^{1-1/d} +N)$, whereas all existing algorithms for general linear stencils perform $\Th{NT}$ work -- a polynomially greater amount. This asymptotic improvement makes possible stencil computations over much larger spacetime grids.

Although we do not show explicit analysis of cache complexity in this paper, it is worth noting that our algorithms are cache-oblivious \cite{FrigoLePrRa1999} and cache-adaptive \cite{Bender2014, Bender2016}.

\para{Our Contributions}
Our key contributions are as follows:\\
\textbf{1. [\textit{Theory.}]} We present the first algorithms for general linear stencil computations (for both periodic and aperiodic boundary conditions) with $\oh{NT}$ work and low span, achieving polynomial speedups over the best existing stencil algorithms.\\
\textbf{2. [\textit{Practice.}]} We experimentally analyze the numerical accuracy and runtime of our algorithms as compared to PLuTo \cite{Pluto} code. Implementations of our algorithms for on the order of $10^7$ grid cells and $10^5$ timesteps suffer no more loss in accuracy from floating point arithmetic than PluTo code, yet run orders of magnitude faster than the best existing implementations of state-of-the-art algorithms for periodic stencil problems, and 1.3$\times$ to 8.5$\times$ faster for aperiodic stencil problems. This is shown in Table \ref{tab:experimental-results}. Our code is publicly available at\\
\texttt{https://github.com/TEAlab/FFTStencils}.

\tikzset{%
>={},
base/.style = {rectangle, rounded corners, draw=black,
text centered},
bluerect/.style = {base, fill=lightblue},
redrect/.style = {base, fill=lightred},
greenrect/.style = {base, fill=lightgreen},
yellowrect/.style = {base, fill=lightyellow}
}

\section{Related Work \& Its Limitations}
There is substantial literature devoted to the applications and analysis of stencils and Discrete Fourier Transforms (DFTs). Here we give a background of the relationship between these two areas and examine some of the limitations which appear in the current approaches to stencil codes.

\para{Discrete Fourier Transforms} DFTs are widely used in numerical analysis, with examples including Von Neumann stability analysis \cite{wesseling1996neumann, pereda2001analyzing} to show validity of numerical schemes, DFT-based preconditioning to optimize Krylov iterations \cite{chan1993fft, gutknecht2007brief, kirby2018solver}, and time-domain analysis to achieve partial solutions of given PDEs \cite{hockney1965fast, choi1986finite, mugler1988fast, schumann1988fast, ostashev2005equations}.

A Fast Fourier Transform (FFT) is an algorithm that quickly computes the DFT of an array. The use of FFTs will be important for our algorithms, as they represent a particularly efficient type of matrix-vector multiplication. Several $\Oh{N \log N}$-work FFT algorithms exist \cite{Good1958, Bruun1978, Winograd1978}, the most famous among which is the Cooley-Tukey algorithm \cite{CooleyTu65}.

\begin{theorem}[Cooley-Tukey Algorithm, \cite{CooleyTu65, FrigoLePrRa1999}]
\label{thm:fft-cooley-tukey}
The generic Cooley-Tukey FFT algorithm computes the DFT of an array of size $N$ in $\Th{N \log N}$ work, $\Oh{\log N \log \log N}$ span, and $\Th{N}$ space.
\end{theorem}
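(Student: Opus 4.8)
The plan is to set up the standard divide-and-conquer recurrences induced by the generic Cooley–Tukey factorization and solve them separately for work, span, and space. The algebraic fact I would establish first is the factorization itself: writing $N = N_1 N_2$ and re-indexing the input index as $j = N_2 j_1 + j_2$ and the output index as $k = N_1 k_2 + k_1$, the sum $X_k = \sum_{j} x_j\, \omega_N^{-jk}$ separates into an outer batch of $N_2$ DFTs of length $N_1$, a pointwise multiplication by the $N$ twiddle factors $\omega_N^{-j_2 k_1}$, and an inner batch of $N_1$ DFTs of length $N_2$. Verifying this (using $\omega_N^{N_1} = \omega_{N_2}$ and the vanishing of the cross terms) is a routine but essential computation, and everything downstream is purely about how the recursion is scheduled. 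For clean bounds I would assume $N$ is a power of two, padding if necessary, and take the balanced split $N_1 = N_2 = \sqrt{N}$.

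For the work bound the balanced split yields $W(N) = 2\sqrt{N}\, W(\sqrt{N}) + \Th{N}$, where the additive $\Th{N}$ absorbs the twiddle multiplications and the inter-stage index permutation (transposition) of the $\sqrt{N}\times\sqrt{N}$ layout. Substituting $N = 2^n$ and dividing through by $N$ reduces this to $g(n) = 2g(n/2) + \Th{1}$ for $g(n) = W(2^n)/2^n$, which by the Master Theorem gives $g(n) = \Th{n}$ and hence $W(N) = \Th{N \log N}$. The matching lower bound of $\Om{N \log N}$ follows from the same level count, since each of the $\Th{\log N}$ recursion levels touches all $N$ entries.

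The span bound is where the balanced split genuinely matters, and it is the step I expect to be the main obstacle to state precisely. Under the binary-forking model the $\sqrt{N}$ sub-DFTs in each batch are mutually independent, so each batch contributes one recursive span $S(\sqrt{N})$ together with $\Th{\log N}$ for forking the threads and for the fully parallel twiddle multiplication; this gives $S(N) = 2\,S(\sqrt{N}) + \Th{\log N}$. Setting $f(n) = S(2^n)$ turns this into $f(n) = 2f(n/2) + \Th{n}$, the critical case of the Master Theorem, so $f(n) = \Th{n \log n}$ and therefore $S(N) = \Oh{\log N \log \log N}$, as claimed. I would emphasize that a naive radix-2 split instead gives $S(N) = S(N/2) + \Th{\log N} = \Th{\log^2 N}$, which is asymptotically worse; the balanced factorization is thus not merely convenient but necessary for the stated span.

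Finally, for space I would argue that the whole computation runs essentially in place: the two batches of sub-DFTs act on disjoint strided views of a single length-$N$ array, the twiddle factors are precomputed once into a table of size $\Th{N}$, and the transposition between stages needs only $\Th{N}$ scratch (or can be performed in place). Summing these contributions and noting that each recursion level reuses the same storage leaves the total at $\Th{N}$, with the trivial lower bound $\Om{N}$ coming from the size of the output alone.
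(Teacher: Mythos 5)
The paper does not actually prove this statement: Theorem~\ref{thm:fft-cooley-tukey} is quoted as a known result and attributed to the cited references, so there is no internal proof to compare against. Your reconstruction is the standard argument behind those sources and is essentially correct. The balanced $\sqrt{N}$-way factorization, the work recurrence $W(N)=2\sqrt{N}\,W(\sqrt{N})+\Th{N}$ solving to $\Th{N\log N}$, the span recurrence $S(N)=2S(\sqrt{N})+\Th{\log N}$ solving to $\Th{\log N\log\log N}$ in the binary-forking model, and your observation that the naive radix-2 split only achieves $\Th{\log^{2}N}$ span are precisely the points that make this recursive (``four-step'') variant the right one to cite here. Two small blemishes, neither of which affects the asymptotic claims: first, the aside ``padding if necessary'' is not sound, since zero-padding an array to a power-of-two length changes which frequencies the DFT samples and therefore does not compute the size-$N$ DFT the paper needs (for composite $N$ one uses a mixed-radix split with $N_1N_2=N$, and for awkward $N$ one falls back on Bluestein or Rader, both of which preserve the stated bounds); second, your labels ``outer'' and ``inner'' for the two batches of sub-DFTs are swapped relative to the sums they arise from, although the counts ($N_2$ transforms of length $N_1$, then $N_1$ transforms of length $N_2$) and the twiddle factors $\omega_N^{-j_2k_1}$ are correct.
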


\para{Stencil Problems} \hide{
A stencil is a locally defined operator that is applied across a grid of data to approximate some function, generally time evolution.
}Stencils are often used in numerical analysis as discretizations of PDEs, since many simple PDEs have prohibitively complex analytical solutions \cite{ramani1989painleve, fokas2000integrability} but allow good numerical approximations with a proper choice of stencil.

There are two major types of methods related to stencils: those for deriving numerical schemes, and those for evolving grid data via a given stencil. A \highlight{discretization method} is a way of converting a PDE, which deals with quantities defined over a continuum, into a stencil, which relates quantities defined over discrete sets of variables. A \highlight{stencil solver} is an algorithm that takes stencils, boundary conditions, and initial data as input and performs stencil computations to output final data. This paper presents a pair of stencil solvers for linear stencils which support, respectively, periodic and aperiodic boundary conditions.

\para{Stencil Solvers} Numerical results from stencils are obtained through two primary paths: \highlight{direct solvers} and \highlight{Krylov methods} \cite{saad1989krylov, ipsen1998idea}.

Direct solvers are those which find the solution to a stencil problem in a finite number of steps. They often involve feeding the stencil into a stencil compiler such as \pluto{} \cite{bondhugula2008pluto}, Pochoir \cite{tang2011pochoir}, or Devito \cite{louboutin2019devito}, which will output optimized code to compute the action of the stencil across some prespecified grid of initial data for multiple timesteps. Cutting-edge stencil code generators feature many improvements over simple looping algorithms, including better cache efficiency \cite{frigo2005cache, korch2020depth}, parallelism \cite{kong2013polyhedral}, and low-level compiler optimizations. These systems all perform the same set of updates on the stencil grid, although they vary in the order that these updates are performed. \textit{In general they make no use of FFTs.}

Krylov subspace methods produce successively better approximations of the exact solution to a given stencil problem. Krylov solvers are often used to solve problems for which there is no known direct solution technique. 
It is common for DFTs to be used either in the analysis \cite{boyd2001chebyshev} or implementations \cite{andreussi2012revised, kabel2014efficient, guan2019two} of Krylov subspace methods, as Fourier analysis is useful for proving scheme stability \cite{chan1984stability, yuste2005explicit} and convergence rates \cite{gmeiner2013optimization}, and DFT matrices are good preconditioners \cite{erlangga2006novel} for a large class of matrix equations \cite{borrell2011parallel}. In some instances choosing the DFT matrix as a Krylov preconditioner can even convert an approximate solver into a direct one \cite{hockney1965fast, fuka2015poisfft}. Krylov subspace methods generally do not produce exact solutions in finite time, but exhibit a \textit{trade-off between runtime and accuracy}.

\vspace{0.2cm}
\para{Limitations of Current Methods for Stencil Problems} 

\vspace{0.1cm}
\noindent
\textbf{(1) \textit{Manual Analysis.}}
Krylov subspace methods are often accompanied by a mathematically nontrivial convergence analysis \cite{kuijlaars2006convergence, notay2008recursive, brown1994convergence}; this requires human labor for every new method developed. Since this analysis is not automated \cite{guttel2014nleigs, musco2015randomized, frommer2017block}, the quantity of time it takes varies widely from case to case. Also, the requirement of mathematical rigour strongly discourages the development of unnecessarily general Krylov methods, thus those in the literature usually only apply to specific stencils \cite{ashrafizadeh2015jacobian, mang2015inexact, garrappa2015solving, asgharzadeh2017newton} in order to simplify analysis.

\vspace{0.1cm}
\noindent
\textbf{(2) \textit{Specialization.}}
The methods \cite{schumann1988fast, kabel2014efficient, garrappa2015solving, asgharzadeh2017newton} published in most of the existing literature on computation with numerical schemes are only applicable to very small classes of problems. DFT preconditioning for Krylov iterations has been used for specific stencils before \cite{janpugdee2006accelerated, fritz2009application, storti2013fft}. However, these techniques have not been generalized to work for higher dimensional grids with general linear stencils.

\vspace{0.1cm}
\noindent
\textbf{(3) \textit{Inexact Solution.}}
Krylov methods often cannot produce exact solutions, even in the absence of floating-point rounding errors. Using them optimally and reliably thus requires expertise in numerical analysis \cite{van2003iterative, ipsen1998idea}, as well as for substantial effort to be put into uncertainty quantification \cite{le2010spectral, koutsourelakis2009accurate}.

\vspace{0.1cm}
\noindent
\textbf{(4) \textit{Nonoptimal Computational Complexity.}}
All currently available code compilers \cite{christen2011patus, hagedorn2018high, henretty2013stencil, luporini2020architecture, ragan2013halide, bondhugula2008pluto, tang2011pochoir} generate code which has linear work complexity in the number of grid cells and number of time steps to compute, no matter what stencil they are given. Improving this bound has not been addressed in the literature, even when only considering linear stencils. 

We show in this paper that when dealing with linear stencils it is possible to produce code that has significantly better asymptotic performance.


\vspace{0.1cm}
\noindent
\textbf{(5) \textit{No Support for Implicit Stencils.}}
Direct solvers for stencil problems usually do not support implicit stencils; stencil compilers in particular are weak in this respect. Neither Pochoir, PLuTo, nor Devito can be used for directly evolving data via an implicit stencil. This is a significant limitation, as several important stencils are implicit \cite{acary2010implicit, kloeden1992higher, najm1998semi, robert1982semi, crank1947practical, van1984enhancements}.

\para{Significance of This Paper}
Current direct solvers for general linear stencil computations are inefficient, and Krylov methods require manual labor and mathematical training. We solve these problems for linear stencils by using DFT preconditioning on a Krylov method to achieve a direct solver which is both fast and general.
\section{Applicability of Our Algorithms}
In this section, we describe the classes of stencil problems on which our FFT-based stencil algorithms do or do not apply.

\para{Supported Stencil Types}
Our algorithms are most directly applicable to \highlight{homogeneous linear stencils across vector-valued fields}. A homogeneous stencil is one that does not vary across the entire spatial grid, and by vector-valued fields we mean we allow each cell value across the spatial grid to be treated as a vector.

All homogeneous linear PDEs can be discretized into supported stencils by using a finite difference approximation \cite{isaacson2012analysis}. Thus all numerical results for these linear PDEs that were previously reached via analytically-motivated numerical schemes, including those set in the Fourier domain \cite{mugler1988fast, moaddy2011non}, can easily be reached computationally by our algorithms.

It is noteworthy that we support both explicit and implicit linear stencils, and also that vector-valued fields can be used to enable stencils that are dependent on more than one previous timestep. In addition, vector-valued fields allow us to support certain types of inhomogeneity, as mentioned at the end of this section.

Linear stencils are quite common in computational numerics. In fact, the majority of stencils currently used to benchmark \cite{Rawat2018, Pluto, bondhugula2008pluto, bondhugula2016plutoplus} stencil compilers are linear.

\Arxiv{
An example of a numerical scheme that can be handled by our algorithms can be found in Yee's method \cite{yee1966numerical, monk1994convergence}, a finite-difference time-domain scheme for solving Maxwell's equations in an isotropic medium. In the 2-D case, Yee's method can be written out \cite{sun2003unconditionally} as updating three variables, $E_x$, $E_y$, and $H_z$, which are the strength of the electric field in the $x$ and $y$ directions and strength of magnetic field, respectively. These fields are updated according to the equations
\begin{align*}
    E_x^{n+1}(i+1/2,j) &= E_x^n(i+1/2, j) + \frac{a_1}{\Delta y} (H^{n+1}_z(i+1/2, j+1/2)\\
    &- H^{n+1}_z(i+1/2, j-1/2) + H^{n}_z(i+1/2, j+1/2)\\
    &- H^{n}_z(i+1/2, j-1/2)),\\
    E_y^{n+1}(i,j+1/2) &= E_y^n(i, j+1/2) - \frac{a_1}{\Delta x} (H^{n+1}_z(i+1/2, j+1/2) \\
    &- H^{n+1}_z(i-1/2, j+1/2) + H^{n}_z(i+1/2, j+1/2)\\
    &- H^{n}_z(i-1/2, j+1/2)),\\
    H_z^{n+1}(i+1/2,j+1/2) &= H_z^n(i+1/2, j+1/2)\\
    &+ \frac{a_2}{\Delta y} (E^{n+1}_x(i+1/2, j+1) - E^{n+1}_x(i+1/2, j)\\
    &+ E^{n}_x(i+1/2, j+1) - E^{n}_x(i+1/2, j))\\
    &- \frac{a_2}{\Delta x} (E^{n+1}_y(i+1, j+1/2) - E^{n+1}_y(i, j+1/2)\\
    &+ E^{n}_y(i+1, j+1/2) - E^{n}_y(i, j+1/2)),
\end{align*}
where the parameters $a_1 = \Delta t / (2\varepsilon)$ and $a_2 = \Delta t / (2\mu)$ are dependent on the permittivity $\varepsilon$ and permeability $\mu$ of the medium. These equations are linear and can be rewritten as a single stencil operator acting on a vector valued field with $(E^n_x(i+1/2, j), E^n_y(i, j+1/2), H^n_z(i+1/2, j+1/2))$ at every point $(i,j)$ and time $n$.
}

\para{Unsupported Stencil Types}
Our algorithms are not applicable to \highlight{nonlinear stencils}. This is because introducing nonlinearity of any sort invalidates our technique of using DFTs to simplify the action of the stencil. Common examples of nonlinearity in stencils include conditionals, i.e. max/min/if-else, and quadratic dependence on cell values. Most discretizations of nonlinear PDEs pass the nonlinearity on to the stencil, so in general our algorithms cannot be used for stencils from nonlinear PDEs.

Our algorithms cannot be applied to \highlight{inhomogeneous stencils} either. There are two ways that a stencil can break homogeneity. The first is spatially, by having the stencil itself be dependent on local field data, such as is used in slope limiter and flux limiter methods \cite{sweby1984high, gammie2003harm, kuzmin2010vertex} and in mixed media problems \cite{saleheen1997new, teixeira1998finite, chen2005time, xu2006simulations, teixeira2008time, virieux1986p}. The second way to break homogeneity is temporally, i.e. using a stencil that is dependent on time \cite{he2011robust, konuk2020modeling}, as would arise from the presence of a forcing term in the original PDE being discretized.

However, we note that there are some special types of inhomogeneity our algorithms can handle, such as those arising from forcing terms which are low-order polynomials in time. These are handled by discretizing homogeneous systems of PDEs to mimic the behaviour of a single inhomogenous PDE.

\Arxiv{
To show how both stencil homogeneity and linearity can be broken through the introduction of a flux limiter, consider a conservation law $u_t + f(u)_x = 0$ being discretized in conservation form \cite{sweby1984high}
$$u_k^{n+1} = u_k^n - \frac{\Delta t}{\Delta x} (h_{k+1/2}^n - h_{k-1/2}^n),$$
where $h_{k+1/2}^n$ represents the edge flux through the right side of cell $k$. When we have both high and low resolution schemes $F$ and $f$ for these edge fluxes, we mix them using a flux limiter $\phi(r)$ like so:
$$h_{k+1/2}^n = f_{k+1/2}^n - \phi\left(\frac{u_k^n - u_{k-1}^n}{u_{k+1}^n - u_k^n}\right) \left(f_{k+1/2}^n - F_{k+1/2}^n\right).$$
Most choices of flux limiter will clearly produce a nonlinear stencil. A prominent example would be the minmod limiter \cite{roe1986characteristic}
$$\phi(r) = \max[0, \min(1, r)].$$

In light of the nonlinearity introduced by flux limiters, the majority of conservative methods are not supported by the algorithms we present here.
}

\section{Periodic Stencil Algorithm}
\label{sec:fft}

In this section we present {\Periodic}, an efficient parallel algorithm for performing stencil computations with periodic boundary conditions using fast Fourier transforms (FFT). We begin by considering explicit linear stencils on one-dimensional spatial grids, after which we give simple extensions to high-dimensional grids, implicit stencils, and grids where cells are vector-valued.

\para{Mathematical Formulation}
Suppose we have a spatial grid of data that evolves in time according to some fixed stencil: cells in the grid at time $t$ are updated as a function of some local neighborhood of cell values at recent times before $t$.
For simplicity's sake, we will assume that the spatial grid is \highlight{one dimensional} until Section \ref{ssec:periodic_generalizations}.

In this section, we will exclusively consider linear stencils. A \highlight{linear stencil} $\mathbf{S}$ defines future array values $a_{t+1}[0,\ldots,N-1]$ as a linear function of current array values $a_t[0,\ldots,N-1]$. We will later allow array values to be higher dimensional, but for now these constraints are enough on their own to make a significant statement about stencils.

An arbitrary linear mapping from arrays of size $N$ to arrays of size $N$ is, by definition, an $N \times N$ \highlight{matrix}\footnote{
Here we are using the word matrix in the strictly \highlight{mathematical} sense, i.e. the object which is a stencil behaves in all respects identically to the way in which a matrix behaves. This should not be taken to mean that we will store stencils using the \highlight{data structure} called a matrix. In fact, we shall show that there are other more efficient ways of storing our stencils.
}, and therefore the update rule can be written $a_{t+1}[i] = \sum_j S[i,j] a_{t}[j]$. We will usually omit the indices in formulas like this, writing $a_{t+1} = S a_{t}$.
As shown in Figure \ref{fig:periodicity_need}, on the surface this update rule may look incomplete -- since we require $S$ to use the same neighborhood around each point for updates, how should we proceed when the neighborhood extends beyond the bounds of our spatial array? As filling in these cells is exactly the purpose of boundary conditions, it should come as no surprise that in this section the resolution to this difficulty will come in the form of periodicity.

\begin{figure}[!ht] 
\centering
\includegraphics[width=0.37\textwidth]{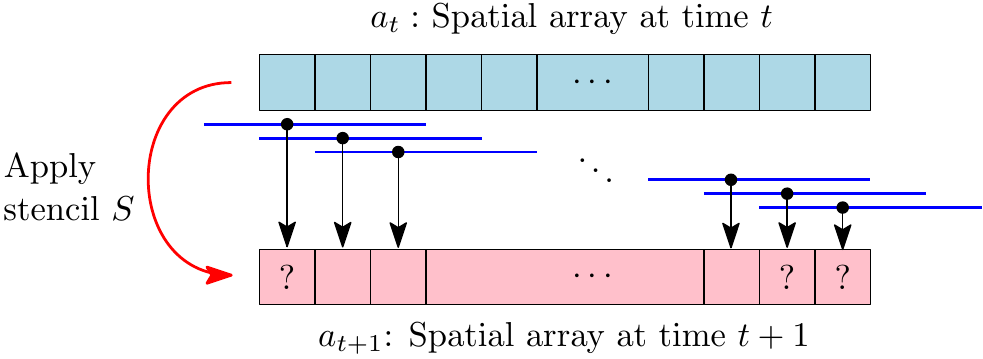}
\vgap{}
\caption{Updating data with a stencil that uses a neighborhood extending one cell to the left and two cells to the right. Cells marked with question marks do not have values defined by the stencil, as the neighborhood required to update them lies partially outside the bounds of the spatial array.}
\label{fig:periodicity_need}
\vgap{}\vgap{}
\end{figure}

Periodic boundary conditions consist of the rule that $a_t[i] = a_t[i+N]$, for all $i$. This allows the spatial grid to be extended arbitrarily far in either direction by wrapping around instead of moving outside of the array bounds; in the presence of periodic boundary conditions the spatial grid is effectively a torus, with no clearly defined boundary.

For a stencil to be uniform across space means that it defines updated cell values only from a set of cells which are selected based on their \highlight{relative} location to the cell being updated. For example, if we were to reindex all cells in the spatial array, incrementing them all so the bounds became $1$ and $N$ rather than $0$ and $N-1$, this change of index ought to be invisible to the stencil. Furthermore, in the presence of periodic boundary conditions, this reindexing is equivalent to cyclically shifting the field data $a_t$, since we have $a_t[0] = a_t[N]$.

We mathematically represent the concept of cyclically permuting grid data by introducing the \highlight{right shift matrix}\footnote{Under periodic boundary conditions, shifting the array is equivalent to rotating the array.} \cite{rota1973foundations} $X$. The array $Xa_t$ is defined to be the result of taking the rightmost element $a_t[0]$ off and appending it to the left side of the array, i.e. its action on arrays is $(Xa_t)[i] = a_t[i - 1]$. An equivalent definition is by $X$'s matrix elements $X[i,j] = \delta_{i, j+1}$, where the \highlight{Kronecker delta} $\delta_{i,j}$ is defined to be 1 if $i = j$ and 0 otherwise, and the arithmetic is understood to be \highlight{modular} with base $N$ in the presence of periodic boundary conditions. 

\begin{figure}[!ht] 
\centering
\includegraphics[width=0.47\textwidth]{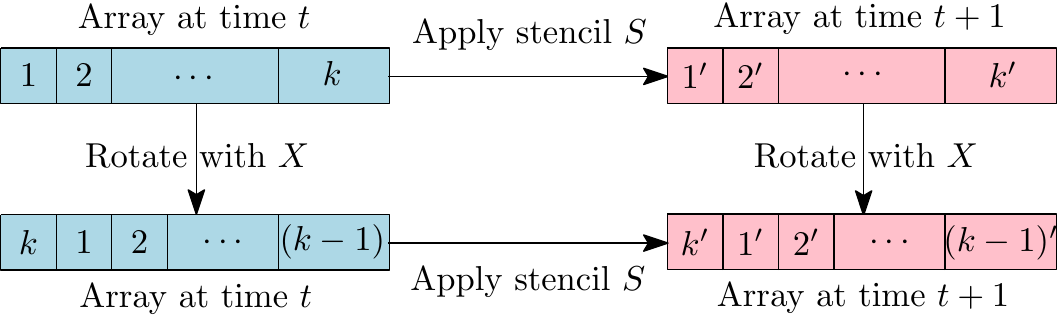}
\vgap{}
\caption{The stencil $S$ does not care where cells are with respect to the start of the array; it only cares about where they are with respect to one another. The shift matrix $X$ does not affect cells' relative locations, so it does not change the values of the updated cells.}
\label{fig:com_diag}
\vgap{}\vgap{}
\end{figure}

Given that we have periodic boundary conditions and the update rule is fixed across space, the action of our stencils must be invariant under spatial shifts of the grid values. As shown in Figure \ref{fig:com_diag}, cyclically permuting $a_t$ and then applying $S$ should give the exact same result as applying $S$ and then cyclically permuting $a_{t+1}$. In symbols, we have $SX = XS$, which implies that $S$ must be a \highlight{circulant matrix} \cite{gray2006toeplitz}, satisfying $S[i,j] = S[i-j,0]$. If we name these elements $S[i-j,0] = s_{i-j}$, we can write out the full update equation as follows:

\newenvironment{sbmatrix}[1]
 {\def\mysubscript{#1}\mathop\bgroup\begin{bmatrix}}
 {\end{bmatrix}\egroup_{\textstyle\mathstrut\mysubscript}}

\noindent
$$\begin{sbmatrix}{a_{t+1}} a_{t+1}[0] \\ a_{t+1}[1] \\ \vdots \\ a_{t+1}[N-2] \\ a_{t+1}[N-1]\end{sbmatrix} = \begin{sbmatrix}{S} s_0 & s_{N-1} & \cdots & s_2 & s_1 \\
s_1 & s_0 & s_{N-1} &   & s_2 \\
\vdots & s_1 & s_0 & \ddots & \vdots \\
s_{N-2} &  & \ddots & \ddots & s_{N-1} \\
s_{N-1} & s_{N-2} & \cdots & s_1 & s_0\end{sbmatrix} \begin{sbmatrix}{a_{t}} a_{t}[0] \\ a_{t}[1] \\ \vdots \\ a_{t}[N-2] \\ a_{t}[N-1]\end{sbmatrix}.$$

A example of this matrix with concrete numbers is constructed in Appendix \ref{ssec:stencil_matrix_walkthrough}.

A useful representation of circulant matrices is found through the right shift matrix. Notice that powers of $X$ have the property $(X^k a_t)[i] = a_t[i-k]$, which means that their matrix elements are given by $(X^k)[i,j] = \delta_{i,j+k}$. Thus powers of $X$ allow us to pick out the individual diagonals that appear in circulant matrices; any circulant matrix $S$ can be expanded in terms of shift operators as
$$S = \sum_i S[i,0]X^i,$$
the proof of which is given in Appendix \ref{ssec:shift_matrix_decomposition}.

The above equation shows that circulant matrices can be uniquely specified by a single one of their columns or rows. We will make use of this fact to avoid performing redundant computations: for all of the algorithms presented in this paper, we will store \highlight{only the first column} of $S$ in memory.

\para{Reformulating the Final Data} We now turn back to the definition of the final data $a_T = S^T a_0$ and the DFT matrix $\mathcal{F}$. Here the exponent $T$ will always denote a matrix power, \highlight{not a transposition}. As before, the DFT matrix has elements $\mathcal{F}[i,j] = \omega_N^{-ij}$, where $\omega_N = e^{2\pi\sqrt{-1}/N}$ is a primitive $N$th root of unity, and $\mathcal{F}$'s inverse has elements $\mathcal{F}^{-1}[i,j] = \omega_N^{ij}/N$. Since we know that $\mathcal{F}^{-1} \mathcal{F}$ is the identity, it can make no difference mathematically to drop it into our equation for the final data:
$$a_T = \mathcal{F}^{-1} \mathcal{F} S^T \mathcal{F}^{-1} \mathcal{F} a_0.$$

Continuing to insert identities and regrouping, we find that $\mathcal{F} S^T \mathcal{F}^{-1} = \mathcal{F} S \mathcal{F}^{-1}\mathcal{F} S \mathcal{F}^{-1} \cdots \mathcal{F} S \mathcal{F}^{-1} = (\mathcal{F} S \mathcal{F}^{-1})^T$, so we can rewrite our equation as
$$a_T = \mathcal{F}^{-1} (\mathcal{F} S \mathcal{F}^{-1})^T \mathcal{F} a_0.$$

This form of the final data equation points to a remarkably efficient way of computing $a_T$. We first apply the \highlight{convolution theorem} \cite{bracewell1986fourier}, which states that if $S$ is a circulant matrix, then $\mathcal{F}S\mathcal{F}^{-1} = \Lambda$ is diagonal. The equation for final data can now be regrouped with $\mathcal{F}S\mathcal{F}^{-1} = \Lambda$ and $\mathcal{F}a_0 = x$:
\begin{equation}\label{eq:final_data}
\boxed{a_T = \mathcal{F}^{-1} \Lambda^T x.}
\end{equation}

This equation may appear to be more complicated than what we started with, but really all we have done is made a change of basis into the frequency domain and performed all actions of the stencil there. This will now be shown to be computable with only a couple calls to highly efficient FFTs and some repeated squarings of scalars.

\subsection{1-D Explicit Stencil Algorithm}

Let $a_0[0,\ldots,N-1]$ be the initial 1-D spatial grid data to be acted on with the stencil $S = \sum_i S[i,0] X^i$. We will impose periodic boundary conditions; these allow us to benefit from pushing the stencil and initial data into the frequency domain with $\mathcal{F}S\mathcal{F}^{-1} = \Lambda$ and $\mathcal{F}a_0 = x$. Our goal is to compute the final data
$$a_T = S^T a_0 = \mathcal{F}^{-1} \Lambda^T x.$$

The prevalent approach to finding $a_T$ is currently through iterative applications of the stencil to field data, grouping $S^T a_0$ into $S(S(\cdots S(a_0)\cdots))$ and evaluating according to parenthesization. Here we will instead compute a power of the diagonalized stencil $(\mathcal{F} S\mathcal{F}^{-1})^T = \Lambda^T$ by repeated squaring, after which we will apply it to $\mathcal{F} a_0$, giving us $\mathcal{F} S^T a_0$, from which we can recover $S^T a_0$ with an inverse FFT.

It is shown in Appendix \ref{ssec:eig_fft_computing} 
that we can write the elements of the diagonal matrix $\Lambda$ as $\Lambda[i,i] = (\mathcal{F}s)[i]$, where $s$ is the column of $S$ that we are storing in memory, i.e. $s[i] = S[i,0]$. Since $\mathcal{F}$ is the DFT matrix, this means that $\Lambda$ can be computed with a single FFT.

Blindly using repeated squaring only allows us to compute $\Lambda^T$ when $T$ is an exact power of 2; arbitrary positive integer powers are computed as follows. Let $\sum_i b_i 2^i$ be the binary representation of $T$. As we compute successive squares of $\Lambda$, i.e. $\Lambda^{2^i}$, we will multiply them into a running total only if $b_i = 1$. Thus the final result will be
$$\Lambda^T = \prod_{i\colon b_i = 1} \Lambda^{2^i}.$$
Since $\Lambda$ is diagonal, elements of the large matrix power $\Lambda^T$ can be computed by taking powers of the original elements, $\Lambda^T[i,i] = \Lambda[i,i]^T$. Evaluating each of these elements in parallel will improve the span of our algorithm.

Wrapping up, we evaluate Equation \ref{eq:final_data} as follows: we find $x = \mathcal{F} a_0$ by applying FFT to the initial data, $\Lambda^T x$ by elementwise multiplication, and then $a_T$ with an inverse FFT.

We now present the {\StencilViaFFTPeriodic} algorithm, which efficiently performs stencil computations with periodic boundary conditions by transferring almost all relevant calculations to within the frequency domain. A diagrammatic outline is shown in Figure \ref{fig:periodic-block-diagram}, and the pseudocode is given in Figure \ref{fig:stencil-fft-algorithm-dD}.

\noindent
[\textbf{Step 1. FFT.}] We compute $(i)$ $\mathcal{F}S\mathcal{F}^{-1}$ from $S$, and $(ii)$ $\mathcal{F}a_0$ from $a_0$.
Since $S$ is circulant, we know that the FFT of $S$'s first column contains exactly the same information as $\mathcal{F}S\mathcal{F}^{-1}$. Thus for $(i)$ an FFT is applied to the first column of $S$ to get $S$'s eigenvalues; this FFT will be computed for $N$ points, irrespective of how many nonzero coefficients are present in the stencil. Note that only the first column of $S$ is needed here, which is why the rest of $S$ is never constructed or stored in memory. Likewise, for $(ii)$ an FFT is applied to $a_0$. \\
\noindent
[\textbf{Step 2. Repeated Squaring.}] We compute $\mathcal{F}S^T\mathcal{F}^{-1}$ from $\mathcal{F}S\mathcal{F}^{-1}$.
Since $\mathcal{F}S\mathcal{F}^{-1}$ is diagonal, the individual elements of $\mathcal{F}S^T\mathcal{F}^{-1} = (\mathcal{F}S\mathcal{F}^{-1})^T$ can be computed in parallel by performing $\lceil \log T \rceil$ sequential squarings for each element along the principal diagonal of $\mathcal{F}S\mathcal{F}^{-1}$ according to the decomposition of $\Lambda^T$ given earlier. \\
\noindent
[\textbf{Step 3. Elementwise Product.}] We compute $\mathcal{F} a_T$ by taking the product of $\mathcal{F}S^T\mathcal{F}^{-1}$ and $\mathcal{F} a_0$.
As in step 2, every element of $\mathcal{F} a_T = (\mathcal{F}S^T\mathcal{F}^{-1})(\mathcal{F} a_0)$ can be computed in parallel, since we are multiplying a vector by a diagonal matrix.\\
\noindent
[\textbf{Step 4. Inverse FFT.}] We now compute $a_T$ by applying an inverse FFT to $\mathcal{F} a_T$.

\begin{theorem}
\label{th:periodic-stencil-fft}
{\Periodic} computes the $T$th timestep of a stencil computation on a periodic grid of $N$ cells in $\Th{N \log (NT)}$ work and $\Th{\log T + \log N \log \log N}$ span.
\end{theorem}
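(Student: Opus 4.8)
The plan is to separate correctness from complexity, dispose of correctness quickly using the reformulation already derived, and then carry out a step-by-step accounting of work and span for the four stages, summing at the end. Correctness is essentially immediate: Equation \ref{eq:final_data} establishes that $a_T = \mathcal{F}^{-1} \Lambda^T x$ with $\Lambda = \mathcal{F}S\mathcal{F}^{-1}$ diagonal and $x = \mathcal{F}a_0$, while the repeated-squaring identity $\Lambda^T = \prod_{i\colon b_i = 1} \Lambda^{2^i}$ combined with the elementwise fact $\Lambda^T[i,i] = \Lambda[i,i]^T$ shows that Step 2 computes the correct diagonal of $\Lambda^T$. Hence the substance of the proof is the complexity bound, which I would obtain by invoking Theorem \ref{thm:fft-cooley-tukey} for all FFT calls and the binary-forking spawn cost for the parallel loops.

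For the work bound I would argue as follows. Steps 1 and 4 each perform a constant number of length-$N$ FFTs, so by Theorem \ref{thm:fft-cooley-tukey} they contribute $\Th{N \log N}$ work. Step 2 performs $\lceil \log T \rceil$ squarings along each of the $N$ principal-diagonal entries, contributing $\Th{N \log T}$ work. Step 3 is an elementwise product of two length-$N$ vectors and contributes $\Th{N}$ work. Summing these gives total work $\Th{N \log N + N \log T + N} = \Th{N(\log N + \log T)} = \Th{N \log(NT)}$, as claimed.

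For the span I would again use the $\Oh{\log N \log \log N}$ FFT span from Theorem \ref{thm:fft-cooley-tukey} for Steps 1 and 4. The one place requiring care is Step 2: under the binary-forking model, forking the $N$ independent per-element squaring chains costs $\Th{\log N}$ span, and each chain then performs $\lceil \log T \rceil$ sequential squarings, so this step has span $\Th{\log N + \log T}$. Step 3 similarly needs $\Th{\log N}$ span to fork its $N$ independent scalar multiplications. Taking the maximum over all steps yields span $\Oh{\log N \log \log N} + \Th{\log N + \log T} = \Th{\log T + \log N \log \log N}$, since the bare $\log N$ contributions from Steps 2 and 3 are absorbed into the $\log N \log \log N$ term.

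The main obstacle is only a minor bookkeeping subtlety: one must not overlook the $\Th{\log N}$ span needed to launch the $N$ parallel squaring chains in Step 2 (and the analogous fork in Step 3), since a naive reading might suggest Step 2 costs only $\Th{\log T}$ span. In the final bound this $\log N$ is harmless — it is dominated by the FFT span term $\log N \log \log N$ — but it should be stated explicitly so that the analysis is valid in the stricter binary-forking model rather than only in PRAM.
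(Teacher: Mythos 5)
Your proposal is correct and follows essentially the same route as the paper: the paper's proof simply cites the per-step work and span bounds in Table \ref{tab:1d-periodic-complexity} (Steps 1,4: $\Th{N\log N}$ work and $\Th{\log N\log\log N}$ span; Step 2: $\Th{N\log T}$ and $\Th{\log N+\log T}$; Step 3: $\Th{N}$ and $\Th{\log N}$), which are exactly the quantities you derive and sum. Your explicit accounting of the $\Th{\log N}$ binary-forking spawn cost in Steps 2 and 3 matches the paper's remark that thread-spawning time is included in those table entries.
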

\begin{proof}
Theorem follows from bounds given in Table \ref{tab:1d-periodic-complexity}.
\end{proof}

\begin{figure}[!t] 
\centering
\includegraphics[width=0.48\textwidth]{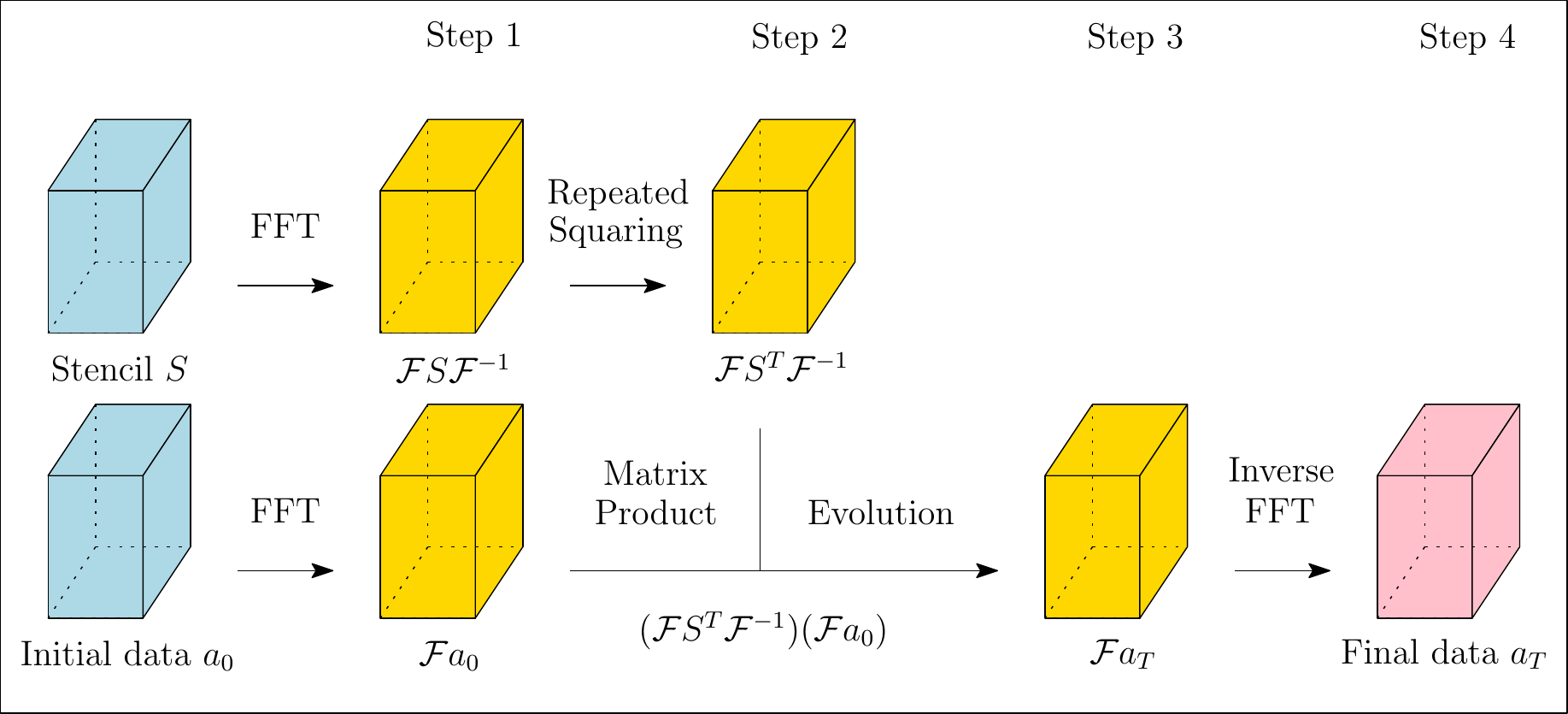}
\vgap{}\vgap{}\vgap{}
\caption{Block diagram of our FFT-based periodic stencil algorithm, which works for all grid sizes in all dimensions.}
\label{fig:periodic-block-diagram}
\begin{minipage}{0.5\textwidth}
\centering
\scalebox{0.95}{
\begin{mycolorbox}{$\Periodic(s, a_0, \ell_1, \ldots, \ell_d, T)$}
\begin{minipage}{0.99\textwidth}
{\scriptsize
\algotopspace{}
\noindent
\begin{enumerate}
\setlength{\itemindent}{-2em}

\vsitem[] [\textbf{Step 1. FFT.}] \hrulefill \; From $S$ to $\Lambda = \mathcal{F}S\mathcal{F}^{-1}$, from $a_0$ to $x = \mathcal{F}a_0$.
\vsitem $\Lambda \gets \MFFT(s)$
\vsitem $x \gets \MFFT(a_0)$

\vsitem[] [\textbf{Step 2. Repeated Squaring.}] \hrulefill \; From $\Lambda$ to $V = \Lambda^T$.
\vsitem $V \gets \text{Array of size $\ell_1 \times \cdots \times \ell_d$ initialized with all 1s}$
\vsitem $R \gets \text{Array of size $\ell_1 \times \cdots \times \ell_d$ initialized with all $T$s}$
\vsitem $\text{Squares} \gets \Lambda$ \xcomment{We'll store $\Lambda^{2^k}$ in Squares}
\vsitem \xparallelfor $j_1 \gets 0$ \xto $\ell_1 - 1$ \xdo \label{line:iter-sqr-start-dD}
\vsitem[] $\cdots$
\vsitem \xparallelfor $j_d \gets 0$ \xto $\ell_d - 1$ \xdo
\vsitem \T \xfor $k \gets 0$ \xto $\log T$ \xdo
\vsitem \T \T \xif $R[j_1 .. j_d]$ is odd \xthen \xcomment{Picking out binary representation of $T$}
\vsitem \T \T \T $V[j_1 .. j_d] \gets V[j_1 .. j_d] \times \text{Squares}[j_1 .. j_d]$
\vsitem \T \T \T $R[j_1 .. j_d] \gets R[j_1 .. j_d] - 1$
\vsitem \T \T $R[j_1 .. j_d] \gets R[j_1 .. j_d] / 2$
\vsitem \T \T $\text{Squares}[j_1 .. j_d] \gets \text{Squares}[j_1 .. j_d]^2$ \label{line:iter-sqr-end-dD}

\vsitem[] [\textbf{Step 3. Convolution.}] \hrulefill \; From $V = \Lambda^T$ and $x$ to $y = \Lambda^T x$.
\vsitem $y \gets \text{Array of size $\ell_1 \times \cdots \times \ell_d$}$
\vsitem \xparallelfor $j_1 \gets 0$ \xto $\ell_1 - 1$ \xdo
\vsitem[] $\cdots$
\vsitem \xparallelfor $j_d \gets 0$ \xto $\ell_d - 1$ \xdo
\vsitem \T $y[j_1 .. j_d] \gets V[j_1 .. j_d] \times x[j_1 .. j_d]$

\vsitem[] [\textbf{Step 4. Inverse FFT.}] \hrulefill \; From $y = \mathcal{F}a_T$ to $a_T$.
\vsitem $a_T \gets \IMFFT(y)$

\algobottomspace{}
\end{enumerate}
}

\end{minipage}
\end{mycolorbox}
}
\hide{
\vgap{}
\begin{mycolorbox}{$\MFFT(A, \ell_1, \ldots, \ell_d)$}
\begin{minipage}{0.97\textwidth}
{\scriptsize
\algotopspace{}
\noindent
\begin{enumerate}
\setlength{\itemindent}{-2em}

\vsitem \xfor $i \gets 1$ \xto $d$ \xdo
\vsitem \T \xparallelfor $j_1 \gets 0$ \xto $\ell_1 - 1$ \xdo
\vsitem[] \T $\cdots$
\vsitem \T \xparallelfor $j_{i-1} \gets 0$ \xto $\ell_{i-1} - 1$ \xdo
\vsitem \T \xparallelfor $j_{i+1} \gets 0$ \xto $\ell_{i+1} - 1$ \xdo
\vsitem[] \T $\cdots$
\vsitem \T \xparallelfor $j_d \gets 0$ \xto $\ell_d - 1$ \xdo
\vsitem \T \T $A[j_1 .. j_{i-1}, j_i[:], j_{i+1} .. j_d] \gets \textsc{FFT}(A[j_1 .. j_{i-1}, j_i[:], j_{i+1} .. j_d])$ \label{line:ffts-over-index}
\vsitem[] \xcomment{Taking the FFT over $j_i$}
\vsitem \textbf{return} $A$

\algobottomspace{}
\end{enumerate}
}
\end{minipage}
\end{mycolorbox}
}
\vgap{}\vgap{}
\caption{\small Arbitrary dimensional periodic stencil algorithm. The {\MFFT} algorithm takes FFTs over every index of the array passed to it. Lines \ref{line:iter-sqr-start-dD}-\ref{line:iter-sqr-end-dD} compute $\Lambda[j_1 .. j_d]^T$, where $T$ need not be an exact power of two.}
\label{fig:stencil-fft-algorithm-dD}
\begin{minipage}{\textwidth}
\centering
\scalebox{0.8}
{
\begin{colortabular}{ | c | l | l |}
\hline                       
Step(s) & $T_1$ & $T_{\infty}$ \\\hline
1,4 & $\Th{N \log N}$ & $\Th{\log N \log \log N}$ \\
2 & $\Th{N \log T}$ & $\Th{\log N + \log T}$ \\
3 & $\Th{N}$ & $\Th{\log N}$ \\
\hline
\end{colortabular}
}
\vgap{}\vgap{}
\tabcaption{Work and span complexity bounds for steps 1-4 of {\Periodic}. Note that we include thread spawning time from the Binary Forking model in these bounds.}
\label{tab:1d-periodic-complexity}
\vgap{}\vgap{}
\end{minipage}
\end{minipage}
\end{figure}

\subsection{Generalizations} \label{ssec:periodic_generalizations}
We now give overviews of some straightforward generalizations of the 1-D algorithm given above. These greatly enhance the scope of stencil problems our algorithm can handle, yet do not require significant conceptual work beyond what we have already presented. There are no changes in the asymptotic complexities of our algorithm when any of these generalizations are applied.

T

\para{Multi-Dimensional Stencils} The first extension of the 1-D version of our algorithm is to support arbitrarily high dimensional grids. The notions involved with solving across a grid of size $\ell_1 \times \cdots \times \ell_d = N$ differ from a 1-D grid only in that we now require indices for every grid dimension; details on the math can be found in Appendix \ref{ssec:math_for_P_Dd}. 
For our complexity analysis we assume that $d = \Th{1}$. 

\para{Implicit Stencils} The second way we expand the set of problems our algorithm can handle is by giving a method for handling implicit stencils \cite{kloeden1992higher}. This is accomplished by mapping them to mathematically equivalent explicit stencils, as explained in Appendix \ref{ssec:implicit_stencils}.

Supporting implicit stencils is a significant capability in a stencil solver, as implicit stencils can often be designed to be more stable than explicit ones.

\para{Vector-Valued Fields} The third way we extend our algorithms is to handle vector-valued fields, i.e. fields where the grid data for each cell is an array of fixed length instead of being just a scalar. We allow a set of scalar-valued fields $\{a^{(i)}_t\}$ to evolve according to linear stencils as
\begin{align*}
    a^{(i)}_{t+1} = \sum_j S^{(i,j)} a^{(j)}_{t},
\end{align*}
where $S^{(i,j)}$ is a circulant stencil matrix describing how $a_{t+1}^{(i)}$ is dependent on $a_t^{(j)}$. The details are given in Appendix \ref{ssec:vector_valued_fields}. 
It is noteworthy that if there are a constant number of scalar fields $a^{(i)}_t$ making up our grid data, then the complexity of finding the final data is only a constant factor greater than if the field were scalar.

Vector-valued grid data allows us to support \highlight{much larger classes of stencils}, such those that are affine or require data from multiple prior timesteps.

\Arxiv{
\begin{figure}[!ht] 
\centering
\includegraphics[width=0.45\textwidth]{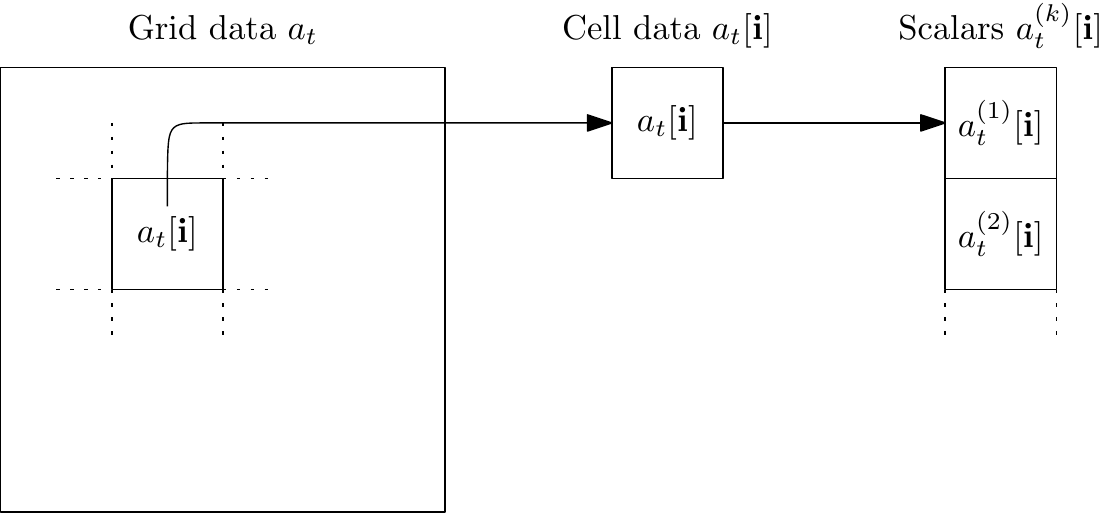}
\vgap{}
\caption{\small Generalized grid data. We allow the data to be evolved to be defined across an arbitrary dimensional grid, indexed by an array $\bm{i} = [i_1, i_2, ...]$. At every cell $a_t[\bm{i}]$ in this grid, there is cell data consisting of a set of scalars $[a_t^{(1)}[\bm{i}], a_t^{(2)}[\bm{i}], ...]$.}
\label{fig:vector-valued-mock}
\end{figure}
}

\section{Aperiodic Stencil Algorithm}
\label{sec:aperiodic}


In this section we will first consider how introducing aperiodicity into our boundary conditions leads to changes in the final data, then present an algorithm for computing the values of the specific cells that are affected by the aperiodic boundary conditions.

\subsection{The Effect of Aperiodicity}
Often numerical computations require boundary conditions that are not periodic \cite{bilbao2013modeling, beggs1992finite, mur1981absorbing, tam1994wall, johnston2002finite}, including well-known classes such as Dirichlet \cite{wang1993finite} and Von Neumann \cite{liao2013high}, as well as more exotic options \cite{strikwerda2004finite}. Indeed, the set of potential aperiodic boundary conditions is extremely diverse. Here we will not attempt to describe \highlight{how} given aperiodic boundary conditions change the final data, but rather \highlight{what sections} of the final data are changed.

\begin{figure}[!t] 
\centering
\includegraphics[width=0.35\textwidth]{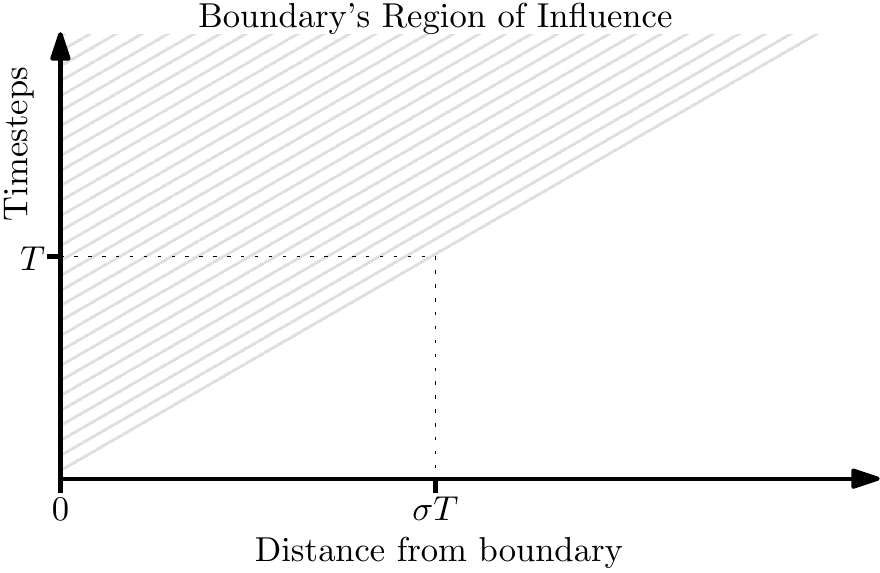}
\vgap{}\vgap{}
\caption{The growth of the boundary's region of influence over time. Here we assume a stencil that uses $\sigma$ cells in each direction. }
\label{fig:roi}
\vgap{}~\\[0.2cm]
\centering
\includegraphics[width=0.4\textwidth]{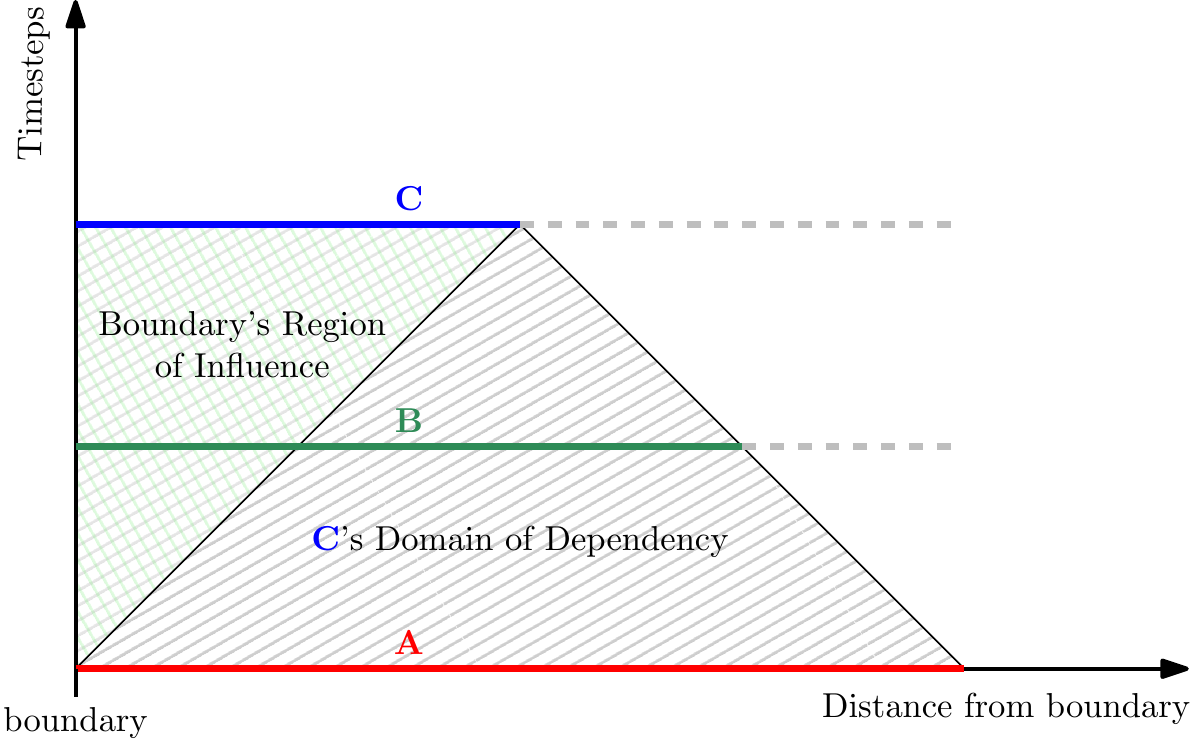}
\vgap{}\vgap{}
\caption{The set of cells relevant to our divide and conquer approach to solving for the boundary's region of influence. We start with set $A$, from which we will compute set $B$ as an intermediate step, after which we will solve for $C$. }
\label{fig:alg_coarse}
\vgap{}~\\[0.2cm]
\centering
\includegraphics[width=0.44\textwidth]{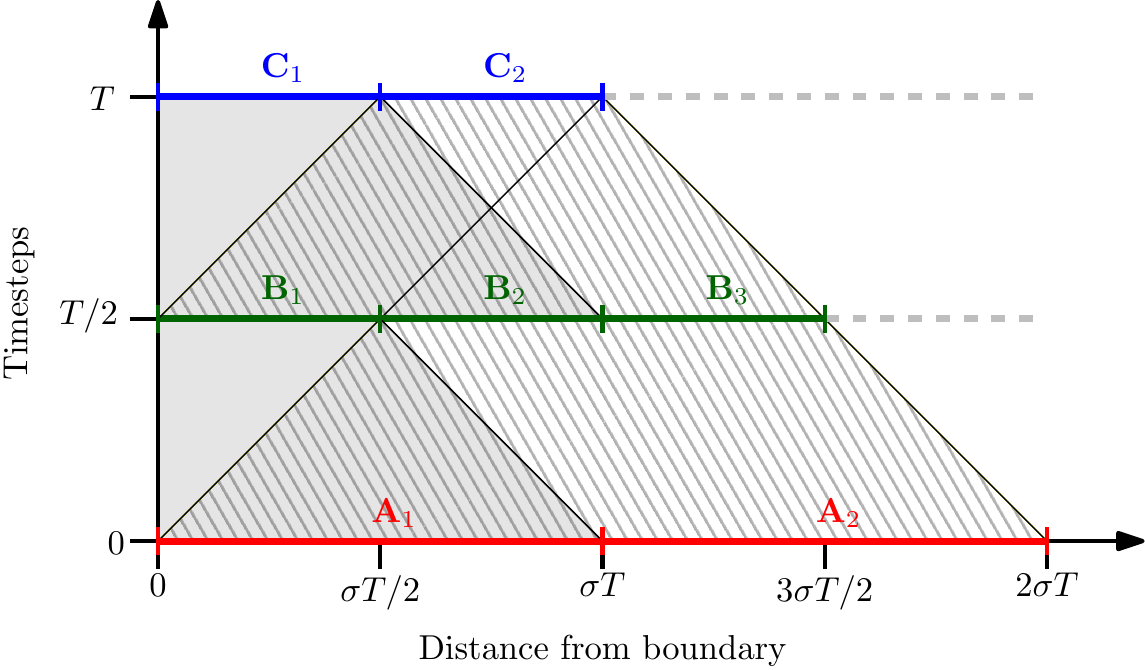}
\vgap{}\vgap{}
\caption{Additional detail on the regions already specified in Figure \ref{fig:alg_coarse}. The dark regions show areas that will be dealt with as {\Boundary} subproblems; the light areas can be handled by {\Periodic}.}
\label{fig:alg_fine}
\vgap{}\vgap{}
\end{figure}

The set of cells that are dependent on values from the grid boundary are called the boundary's \highlight{region of influence (ROI)} \cite{barnes1964technique, van1979towards, chizhonkov2000domain, moretti1979lambda}. It is common for the boundary's region of influence to be smaller than the entire spatial grid, as would be the case in a simulation of a large spatial region for a small time period. A cell's influence on its neighbors is mediated by the stencil, so the larger the stencil is the fewer timesteps it takes for one cell to influence the whole grid.

Consider a stencil with radius $\sigma$, i.e. one which only uses values from a neighborhood extending up to $\sigma$ cells away from its center. After one timestep of a computation based on this stencil, the set of cells which are influenced by aperiodic boundary conditions will be exactly those that are at distance $\sigma$ or less from the grid boundary. After $T$ timesteps, all cells within distance $\sigma T$ of the boundary will be influenced. This suggests that we should visualize how the boundary's region of influence grows by drawing a diagram in two variables (as in Figure \ref{fig:roi}), one being time and the other distance from the boundary. Figures of this type will be of significant use to us while describing our aperiodic algorithm.

We now move to our aperiodic algorithm {\Aperiodic}, at the core of which is {\Boundary}, a divide and conquer technique for correcting the values of the cells in the boundary's region of influence.

\subsection{Correcting Final Values in the Boundary's Region of Influence}

Our aperiodic algorithm breaks up the final data into two regions which will be computed with different methods. The \highlight{interior} region consists of cells whose values will not change no matter what boundary conditions we apply. In particular, periodic boundary conditions would not change the value of these cells. This region will thus be dealt with by using our efficient periodic algorithm. The \highlight{exterior} region is the boundary's region of influence, made up of cells that are dependent on boundary values.

Since we already know how to solve for the interior region, let us turn our attention to the boundary's region of influence, which we solve for with a recursive divide and conquer approach. The core idea here will be to perform a time-cut to reduce the number of cells affected by the aperiodic boundary; by splitting one large step into two smaller steps we will be able to use the periodic solver on more space, at the cost of having to compute cell values at an intermediate time.

Figure \ref{fig:alg_coarse} diagrammatically shows how {\Boundary} solves for the boundary region. First all values in region $A$ are used to solve for region $B$, then region $B$ is used to solve for region $C$. These regions can be much smaller than the entire spatial grid, since even though they wrap around the entirety of the boundary, they are only a fixed number of cells thick.

Suppose again that we are performing computations with a stencil of radius $\sigma$. Then set $C$ is the boundary's region of influence after $T$ timesteps, which includes all cells within distance $\sigma T$ of the boundary. Since we want to be able to fully compute $C$ from $B$ and $B$ from $A$, we must continue to extend our regions as we go $T/2$ steps back for each; $B$ includes everything within $3 \sigma T / 2$ cells of the boundary, and $A$ includes everything out to a distance of $2 \sigma T$. We now refine our naming of these regions, defining subregions $A_1, A_2, B_1, B_2, B_3, C_1$, and $C_2$, whose sizes and locations are shown in Figure \ref{fig:alg_fine}.

The language we have used up to this point has been dimension-free, in that every statement made has been independent of the number of dimensions in the spatial grid we are performing computations on. Now, to make clear the regions shown in Figures \ref{fig:alg_coarse} and \ref{fig:alg_fine}, we will consider what exactly they look like when we make a choice of spatial grid dimension.

In \textbf{1-D}, when the spatial grid $a[0, \dots, N-1]$ is just a linear array of $N$ cells, the boundary lies to the left of $a[0]$ and to the right of $a[N-1]$. The set of all cells within distance $d$ of the 1-D grid's boundary is given by the union of $a[0, \dots, d-1]$ and $a[N-d, \dots, N-1]$. For example, the region $C$ is the union of $a[0, \dots, \sigma T - 1]$ and $a[N-\sigma T, \dots, N-1]$.

In \textbf{2-D} things become more complicated, since the boundary of the $n_1 \times n_2$ spatial grid $a[0..(n_1-1),0..(n_2-1)]$ lies on the outside of the connected set of cells made up of the 1-D subarrays $a[0,0..(n_2-1)]$, $a[n_1 - 1,0..(n_2-1)]$, $a[0..(n_1-1), 0]$, and $a[0..(n_1-1), n_2 - 1]$. The set of all cells within distance $d$ of the 2-D grid's boundary is the union of $a[0..(d-1),0..(n_2-1)]$, $a[(n_1-d)..(n_1-1),0..(n_2-1)]$, $a[0..(n_1-1), 0..(d-1)]$, and $a[0..(n_1-1), (n_2 - d)..(n_2 - 1)]$. Note that we are listing some cells twice here, specifically those in the corners of the grid.

Now we are prepared to describe in detail\footnote{
For brevity here we use the region names as shown in Figure \ref{fig:alg_fine}. The reader who would prefer to see these regions specified directly in terms of their distance from the boundary is referred to Appendix \ref{ssec:ell_shells}.
} the {\Boundary} algorithm for computing the correct values of cells in the boundary's region of influence. Consider again Figure \ref{fig:alg_fine}. The {\Boundary} algorithm solves for two time-slices which are done sequentially. Each time-slice is divided into two distinct regions: one which is solved for with a periodic solver, and one which is solved for recursively. These two regions are handled in parallel.
\begin{enumerate}
    \item \highlight{Solving for $\bm{B}$ from $\bm{A}$.} We will begin by feeding data from region $A = A_1 \cup A_2$ into our periodic solver to find values for region $B_2 \cup B_3$. For a $d$-D grid this will take $2d$ calls to the periodic solver. At the same time (in parallel), a recursive call is made to the {\Boundary} algorithm with $A_1$ as input, writing the output to $B_1$. This completes all the values of $B$.
    \item \highlight{Solving for $\bm{C}$ from $\bm{B}$.} Now $B = B_1 \cup B_2 \cup B_3$ is fed into our periodic algorithm to find $C_2$, while in parallel we feed $B_1 \cup B_2$ into another recursive call to {\Boundary}, which will find the values for $C_1$. Thus all cell values in $C = C_1 \cup C_2$ are computed.
\end{enumerate}
The base cases in recursion occur when the only cells to be computed are within a constant distance of the boundary. In this case a standard looping algorithm is applied. Theoretically we can set the cutoff distance to any constant greater than or equal to $\sigma$, but in practice the constant is chosen such that the cost of computing the base case is balanced with the cost of further recursion. See Figure \ref{fig:stencil-fft-algorithm-1D-aperiodic} for pseudocode.

Combining our periodic solver for the interior region with this recursive solver for the boundary region gives our aperiodic algorithm {\Aperiodic}. A diagrammatic outline of {\Aperiodic} is given in Figure \ref{fig:stencil-fft-algorithm-1D-aperiodic}, and pseudocode is given in Figure \ref{fig:stencil-fft-algorithm-1D-aperiodic}.

\begin{theorem}
\label{th:aperiodic-stencil}
The {\Aperiodic} algorithm can compute the $T$th timestep of a stencil computation on a grid of size $N$ with aperiodic boundary conditions in $\LTh{bT \log (bT) \log T}$ $\RTh{+~N \log N}$ work and $\Th{T \log b + \log N \log \log N}$ span, where $b$ is the number of grid cells defined by boundary conditions.
\end{theorem}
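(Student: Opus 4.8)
The plan is to analyze {\Aperiodic} by splitting the final grid into the \emph{interior}, which is computed by a single call to {\Periodic}, and the boundary's \emph{region of influence} (ROI), which is corrected by {\Boundary}. The total work will be the sum of the costs of these two pieces, while the span will be their maximum, since the interior solve and {\Boundary} both read only from the initial data and may therefore run in parallel. Throughout I take the stencil radius $\sigma$ and the dimension $d$ to be $\Th{1}$ and use two geometric facts from Section \ref{sec:aperiodic}: a $d$-dimensional hypercubic grid of $N$ cells has $b = \Th{N^{1-1/d}}$ boundary cells, and after $t$ timesteps the ROI is a shell of thickness $\Th{\sigma t}$ wrapping the boundary, hence contains $\Th{bt}$ cells.

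First I would bound the work. The interior is handled by one invocation of {\Periodic} on the full grid, costing $\Th{N \log (NT)}$ work by Theorem \ref{th:periodic-stencil-fft}. For {\Boundary}, let $W(t)$ be the work to correct the ROI over $t$ timesteps. Each call makes two recursive calls on half the time horizon and a constant number ($2d$ for the $A \to B$ step, plus $\Th{1}$ more for $B \to C$) of invocations of {\Periodic}, each acting on a $\Th{bt}$-cell shell over $\Th{t}$ steps and therefore costing $\Th{bt \log(bt)}$ work (using $\log(bt \cdot t) = \Th{\log(bt)}$). This gives the recurrence
$$W(t) = 2\,W(t/2) + \Th{bt \log (bt)},$$
with a constant-size base case. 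Summing the driving term over the $\Th{\log T}$ levels — level $k$ has $2^k$ nodes, each contributing $\Th{b(T/2^k)\log(bT/2^k)}$, so each level totals $\Th{bT \log(bT)}$ — yields $\Th{bT \log(bT) \log T}$. Adding the interior cost and noting that the $N\log T$ summand of $\Th{N\log(NT)}$ is dominated in the regime of interest leaves the claimed work $\Th{bT \log(bT)\log T + N\log N}$.

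Next I would bound the span. The interior solve contributes $\Th{\log T + \log N \log\log N}$ span by Theorem \ref{th:periodic-stencil-fft}. For {\Boundary}, the two recursive calls lie in the two time-slices that are executed \emph{sequentially}, while inside each time-slice the periodic solve and the recursive call run \emph{in parallel}; writing $P(t)$ for the span of an internal periodic solve over $t$ steps on a $\Th{bt}$-cell shell, the span obeys
$$S(t) = 2\max\!\big(P(t/2),\, S(t/2)\big),$$
with base case $S(\Th{1}) = \Th{\log b}$, a parallel looping sweep over the $\Th{b}$ shell cells for $\Th{1}$ steps. The sequential pair of recursive calls is responsible for the factor of two per level, which over $\Th{\log T}$ levels produces the linear-in-$T$ term $\Th{T\log b}$, while the single dominant FFT inside the periodic solves (and inside the interior solve) adds $\Th{\log N\log\log N}$. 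Taking the maximum with the interior span and simplifying under $T = \Om{\log N\log\log N}$ yields the claimed span $\Th{T\log b + \log N\log\log N}$.

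The main obstacle is the span analysis, specifically justifying that the FFT spans $P(t/2)$ of the internal periodic solves do not \emph{compound} along the critical path. Because the recursion doubles the span at every level, any additive cost recurring at every level threatens to blow up to $\Th{T \cdot (\cdots)}$; I must argue that on the critical path the recursive term $S(t/2)$ dominates $P(t/2)$ at all but $\Th{1}$ levels, so the $\log N\log\log N$ contribution enters only additively. This is precisely where the hypothesis $T = \Om{\log N\log\log N}$ is invoked and where the shell sizes $\Th{bt}$ must be tracked carefully as $t$ shrinks toward the base case. The remaining obstacles are routine geometric bookkeeping: confirming that exactly $2d$ periodic calls suffice for the $A\to B$ step, that corner cells double-counted across faces do not disturb the $\Th{bt}$ shell-size estimate, and that the recursion bottoms out correctly once the shell thickness falls to $\Th{\sigma}$.
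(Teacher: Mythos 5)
Your proposal is essentially the paper's own proof: the same interior/ROI decomposition with work summed and span maxed, the identical work recurrence $2W(t/2)+\Th{bt\log(bt)}$ summed over $\Th{\log T}$ levels, and essentially the same span recurrence (the paper writes it as $T_\infty(T)=\max\{2T_\infty(T/2),\,c'\log T\log\log T\}+\Th{\log b}$, which differs from your $2\max(P(t/2),S(t/2))$ only in where the $\Th{\log b}$ spawning cost is charged). The one step you flag as an obstacle --- showing that the internal periodic-solve spans $P(t/2)$ do not compound through the doubling recursion --- is genuinely the delicate point, and the paper's proof does not resolve it so much as suppress it: its recurrence records the per-level periodic span as $c'\log T\log\log T$ rather than $\Th{\log(bt)\log\log(bt)}$ for the $\Th{bt}$-cell shell, and if one tracks the $\log b$ dependence, the levels near the base have $P=\Th{\log b\log\log b}$ exceeding the recursive span $\Th{\log b}$, which the doubling then propagates upward to $\Th{T\log b\log\log b}$. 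So your instinct that this needs an argument (or a modified base-case cutoff) is correct; it is the only point where both your sketch and the published proof leave a gap, and it cannot be closed merely by invoking $T=\Om{\log N\log\log N}$, which in any case appears only in the simplified Table~\ref{tab:summary} bound and not in the theorem itself.
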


The proof is presented in Appendix~\ref{ssec:proof-aperiodic-stencil}.

We believe that the bounds we achieve here for work and span are near-optimal (within a polylogarithmic factor) for fully general aperiodic stencil problems. This is because specific choices of nonlinear boundary conditions in combination with linear stencils can result in arbitrary cellular automata being embedded into the boundary of the spatial grid.    These automata can be computationally universal (any computation can be mapped to them in a way that preserves time and space complexity), hence the $\Th{T N^{1 - 1/d}}$ (size of the space-time boundary) component in our work complexity and the $\Th{T}$ component in our span complexity.

\begin{figure}[!t]
\centering
\begin{minipage}{0.47\textwidth}
\begin{mycolorbox}{$\Aperiodic(s, \sigma, a_0, \ell_0, \dots, \ell_d, T)$}
\begin{minipage}{0.97\textwidth}
{\scriptsize
\algotopspace{}
\noindent
\begin{enumerate}
\setlength{\itemindent}{-2em}

\vsitem $\Delta \gets \sigma T / 2$
\vsitem $\text{result} \gets \text{Array of size $\ell_1 \times \cdots \times \ell_d$}$
\vsitem \textbf{parallel (1):} \xcomment{Interior}
\vsitem \T $\text{center} \gets \Periodic(s, a_0, \ell_0, \dots, \ell_d, T)$
\vsitem \T $\text{result}[2 \Delta < \mathbf{dist}] \gets \text{center}[2 \Delta < \mathbf{dist}]$
\vsitem \textbf{parallel (2):} \xcomment{Boundary}
\vsitem \T $\text{boundary} \gets \Boundary(s, \sigma, a_0, \ell_0, \dots, \ell_d, T)$
\vsitem \T $\text{result}[0 < \mathbf{dist} \leq 2 \Delta] \gets \text{boundary}$

\algobottomspace{}
\end{enumerate}
}
\end{minipage}
\end{mycolorbox}
\vgap{}
\begin{mycolorbox}{$\Boundary(s, \sigma, a_0, \ell_0, \dots, \ell_d, T)$}
\begin{minipage}{0.97\textwidth}
{\scriptsize
\algotopspace{}
\noindent
\begin{enumerate}
\setlength{\itemindent}{-2em}

\vsitem $\Delta \gets \sigma T / 2$
\vsitem[] [\textbf{Base Case.}] \hrulefill
\vsitem \xif $T < $ cutoff \xthen
\vsitem \T Iteratively solve for cells within $2\Delta$ of the boundary at time $T$
\vsitem \T \xreturn
\vsitem[] [\textbf{Step 1.}] \hrulefill
\vsitem \textbf{parallel (1):} \xcomment{Interior}
\vsitem \T $A_{12} \gets a_0[0 < \mathbf{dist} \leq 4\Delta]$
\vsitem \T $B_{23} \gets \Periodic(s, A_{12}, \ell_1, \dots, \ell_d, T/2)$
\vsitem \T $\text{result}[\Delta < \mathbf{dist} \leq 3 \Delta] \gets B_{23}[\Delta < \mathbf{dist} \leq 3 \Delta]$
\vsitem \textbf{parallel (2):} \xcomment{Boundary}
\vsitem \T $A_{1} \gets a_0[0 < \mathbf{dist} < 2\Delta]$
\vsitem \T $B_{1} \gets \Boundary(s, A_{1}, \ell_1, \dots, \ell_d, T/2)$
\vsitem \T $\text{result}[0 < \mathbf{dist} \leq \Delta] \gets B_{1}$
\vsitem $B_{123} \gets \text{result}[0 < \mathbf{dist} \leq 3\Delta]$
\vsitem[] [\textbf{Step 2.}] \hrulefill
\vsitem \textbf{parallel (1):} \xcomment{Interior}
\vsitem \T $C_{2} \gets \Periodic(s, B_{123}, \ell_1, \dots, \ell_d, T/2)$
\vsitem \T $\text{result}[\Delta < \mathbf{dist} \leq 2 \Delta] \gets C_{2}[\Delta < \mathbf{dist} \leq 2 \Delta]$
\vsitem \textbf{parallel (2):} \xcomment{Boundary}
\vsitem \T $B_{12} \gets B[0 < \mathbf{dist} \leq 2\Delta]$
\vsitem \T $C_{1} \gets \Boundary(s, B_{12}, \ell_1, \dots, \ell_d, T/2)$
\vsitem \T $\text{result}[0 < \mathbf{dist} \leq \Delta] \gets C_{1}$
\vsitem $C_{12} \gets \text{result}[0 < \mathbf{dist} \leq 2\Delta]$

\algobottomspace{}
\end{enumerate}
}
\end{minipage}
\end{mycolorbox}
\vgap{}\vgap{}
\caption{The {\Aperiodic} algorithm and the {\Boundary} subroutine. The \textbf{parallel} keyword is used here to mark blocks of code which are run on separate processes. Throughout both listings we use a dimension-free indexing notation where $A[a < \mathbf{dist} \leq b]$ represents the set of cells in $A$ that have distance to the boundary in the range $(a,b]$.}
\label{fig:stencil-fft-algorithm-1D-aperiodic}
\end{minipage}
\end{figure}

\newcommand{\insertplotline}[3]{{\includegraphics[width=0.32\textwidth, height=3.7cm, clip = true]{#1}} & {\includegraphics[width=0.32\textwidth, height=3.7cm, clip = true]{#2}} & {\includegraphics[width=0.32\textwidth, height=3.7cm, clip = true]{#3}} \\ }

\section{Experiments}
In this section, we present the experimental evaluation of our algorithms as compared with the state-of-the-art stencil codes. Our experimental setup is shown in Table \ref{tab:experimental-setup}.

\begin{table}[!ht]
\centering
\scalebox{0.8}{
\begin{colortabular}{ |l | l | l|}
\hline                       
 \rowcolor{white}\cellcolor{tabletitlecolor} & \cellcolor{tabletitlecolor}Cores & 68 cores per socket, 1 socket (total: 68 threads) \\
\cellcolor{tabletitlecolor} & \cellcolor{tabletitlecolor}Cache sizes & L1 32 KB, L2 1 MB, L3 16 GB (shared)\\
\cellcolor{tabletitlecolor}\multirow{-3}{*}{\rotatebox{90}{KNL}} & \cellcolor{tabletitlecolor}Memory & 96 GB DDR RAM\\ \hline
\rowcolor{white}\cellcolor{tabletitlecolor}&\cellcolor{tabletitlecolor}Cores & 24 cores per socket, 2 sockets (total: 48 cores) \\
\cellcolor{tabletitlecolor}&\cellcolor{tabletitlecolor}Cache sizes & L1 32 KB, L2 1 MB, L3 33 MB\\
\cellcolor{tabletitlecolor}\multirow{-3}{*}{\rotatebox{90}{SKX}}&\cellcolor{tabletitlecolor}Memory & 144GB /tmp partition on a 200GB SSD\\ \hline
\multicolumn{2}{|l|}{\cellcolor{tabletitlecolor}Compiler} & Intel C++ Compiler (ICC) v18.0.2\\
\multicolumn{2}{|l|}{\cellcolor{tabletitlecolor}Compiler flags} & \texttt{-O3 -xhost -ansi-alias -ipo -AVX512}\\
\multicolumn{2}{|l|}{\cellcolor{tabletitlecolor}Parallelization} & OpenMP 5.0\\
\multicolumn{2}{|l|}{\cellcolor{tabletitlecolor}Thread affinity} & \texttt{GOMP\_CPU\_AFFINITY}\\
\end{colortabular}
}
\tabcaption{\small Experimental setup on the Stampede2 Supercomputer \cite{Stampede2} using Knights Landing (KNL) Intel Xeon Phi 7250 and Skylake (SKX) Intel Xeon Platinum 8160 nodes.}
\label{tab:experimental-setup}
\vgap{}\vgap{}\vgap{}\vgap{}
\end{table}

\hide{
\paragraph{Experimental Setup.}
All experiments were performed on Knights Landing (KNL) Intel Xeon Phi 7250 nodes of Stampede2 Supercomputer \cite{Stampede2}. Each node has 68 cores and 272 hardware threads (4 threads per core) on a single socket. Each KNL node has 32 KB L1 data cache per core, 1 MB L2 cache per two-core tile, 16 GB direct-mapped L3 cache, and a 96 GB DDR4 RAM.

All algorithms were implemented in C++. We used Intel C++ Compiler (ICC) v18.0.2 with \texttt{gnu++98} standard to compile our implementations with the optimization parameters \texttt{-O3 -xhost -ansi-alias -ipo -AVX512}. We used OpenMP 5.0 with ICC for our shared-memory parallel implementations. The threads were pinned to CPU cores using \texttt{GOMP\_CPU\_AFFINITY}.}

\vspace{0.2cm}
\noindent
\textbf{Benchmarks and Numerical Accuracy.} For benchmarks we use a variety of stencil problems including those with periodic and aperiodic boundary conditions and across 1, 2, and 3 dimensions. Our test stencils, primarily drawn from \cite{Rawat2018} and listed with details in Table \ref{tab:benchmarks}, are the following: \texttt{heat1d}, \texttt{heat2d}, \texttt{seidel2d}, \texttt{jacobi2d}, \texttt{heat3d}, and \texttt{19pt3d} (called \texttt{poisson3d} in \cite{Rawat2018}). We test two primary aspects of our algorithms: numerical accuracy and computational complexity.

To evaluate numerical accuracy we use max relative error against analytical solutions for the heat equation in 1, 2, and 3 dimensions. This is shown in Table \ref{tab:numerical_accuracy} against a na\"ive iterative looping implementations which is numerically equivalent to PLuTo. We see that our algorithms show no significant difference in loss from floating point accuracy when compared against standard looping codes.

\begin{table}[!ht]
\centering
\scalebox{0.8}{
\begin{colortabular}{ c | l | r | c | c|}
\hline                       

Dim & Benchmark & Stencil points $(\alpha)$ & Stencil radius $(\sigma)$ \\ \hline

1D & \texttt{heat1d} & 3pts & 1 \\\hline
2D & \texttt{heat2d} & 5pts & 1 \\
& \texttt{seidel2d} & 9pts & 1 \\
& \texttt{jacobi2d} & 25pts & 2 \\\hline
3D & \texttt{heat3d} & 7pts & 1 \\
& \texttt{19pt3d} & 19pts & 2 
\end{colortabular}
}

\tabcaption{Benchmark problems with the number of points in the corresponding stencils.
}
\label{tab:benchmarks}
\vgap{}\vgap{}\vgap{}\vgap{}
\end{table}

\begin{table}[!ht]
\centering
\scalebox{0.75}{
\begin{colortabular}{ l | l | l || l | l|}
\hline                       


Stencil & Grid size & Timesteps & Our algorithm & Looping code \\ \hline

\texttt{heat1d} & $1,000$ & $10^6$ & $5.71632\times 10^{-6}$& $5.71637\times 10^{-6}$\\
\texttt{heat2d} & $500\times500$ & $2.5 \times 10^5$& $2.73253\times 10^{-5}$ & $2.73253\times10^{-5}$\\
\texttt{heat3d} & $200\times200\times200$& $4 \times 10^4$& $1.72981\times10^{-4}$ & $1.72981\times10^{-4}$\\
\end{colortabular}
}

\tabcaption{\small Numerical accuracy comparison between our algorithms and looping code. Our analytical solutions were chosen so that the truth values fell within $[0.5,2]$ everywhere in the solution domain. }
\label{tab:numerical_accuracy}
\vgap{}\vgap{}\vgap{}\vgap{}
\end{table}



\begin{table}
\centering
\scalebox{0.59}{
\begin{colortabular}{ | l | l | l | | l l | r  r | r r | r r |}
\hline                       

\multicolumn{5}{|l|}{\begin{tabular}{@{}p{2.3in}@{}}Benchmark\\\hline\end{tabular}} & \multicolumn{4}{c|}{\begin{tabular}{@{~}c@{~}}Parallel runtime in seconds\\\hline\end{tabular}} & \multicolumn{2}{c|}{Speedup factor}\\

\rowcolor{tabletitlecolor} \multicolumn{2}{|l|}{} & &  &  & \multicolumn{2}{c|}{\begin{tabular}{@{~}c@{~}}\pluto{}\\\hline\end{tabular}} & \multicolumn{2}{c|}{\begin{tabular}{@{~}c@{~}}Our algorithm\\\hline\end{tabular}} & \multicolumn{2}{c|}{\begin{tabular}{@{~}c@{~}}over \pluto{}\\\hline\end{tabular}} \\

\rowcolor{tabletitlecolor}\multicolumn{2}{|l|}{} & Stencil & $N$ & $T$ & KNL & SKX & KNL & SKX & KNL & SKX \\ \hline

\multicolumn{2}{|c|}{\multirow{6}{*}{\rotatebox{90}{Periodic}}} & \texttt{heat1d} & $1,600,000$ & $10^6$ & 79 & 19 & 0.25 & 0.03 & 1754.7 & 759.6\\
\multicolumn{2}{|l|}{}& \texttt{heat2d} & $8,000 \times 8,000$ & $ 10^5$ & 1,437 & 222 & 0.48 & 0.61 & 3,025.0 & 367.0\\
\multicolumn{2}{|l|}{}& \texttt{seidel2d} & $8,000 \times 8,000$ & $ 10^5$ & 500 & 808 & 0.48 & 0.64 & 1,032.7 & 1268.6\\ 
\multicolumn{2}{|l|}{}& \texttt{jacobi2d} & $8,000 \times 8,000$ & $ 10^5$ & 2,905 & 1017 &  0.48 & 0.68 & 6,084.7 & 1502.1\\
\multicolumn{2}{|l|}{}& \texttt{heat3d} & $800 \times 800 \times 800$ & $10^4$ & 816 & 1466 & 4.98 & 5.48 & 163.9 & 267.3\\
\multicolumn{2}{|l|}{}& \texttt{19pt3d} & $800 \times 800 \times 800$ & $10^4 $ & 141 & 158 & 4.84 & 5.78 & 29.1 & 27.3\\
\hline

\multirow{12}{*}{\rotatebox{90}{Aperiodic}} & \multirow{6}{*}{\rotatebox{90}{Experiment 1}} & \texttt{heat1d} & $1,600,000$ & $10^6$ & 50 & 35 & 5.85 & 6.69 & 8.5 & 5.2\\
&& \texttt{heat2d}& $8,000 \times 8,000$ & $ 10^5$ & 333 & 530 & 143.25 & 151.37 & 2.3 & 3.5 \\
&& \texttt{seidel2d}& $8,000 \times 8,000$ & $ 10^5$ & 345 & 601 & 145.42 & 132.97 & 2.4 & 4.5\\
&& \texttt{jacobi2d}& $8,000 \times 8,000$ & $ 10^5$ & 567 & 456 & 249.04 & 273.46 & 2.3 & 1.7\\
&& \texttt{heat3d}& $800 \times 800 \times 800$ & $10^4 $ & 513 & 763 & 395.10 & 605.89 & 1.3 & 1.3\\
&& \texttt{19pt3d}& $800 \times 800 \times 800$ & $10^4 $ & 645 & 848 & 425.22 & 616.71 & 1.5 & 1.4\\\cline{2-11}

 &\multirow{6}{*}{\rotatebox{90}{Experiment 2}}& \texttt{heat1d} & $1,600,000$ & $N$ & 32 & 23 & 5.63 & 6.87  & 5.7 & 3.3\\
&& \texttt{heat2d} & $8,000 \times 8,000$ & $ \sqrt{N}$ & 210 &  312 & 92.78 & 121.70 & 2.3 & 2.6\\
&& \texttt{seidel2d} & $8,000 \times 8,000$ & $ \sqrt{N}$ & 228 & 375 & 91.59 & 121.46 & 2.5 & 3.1\\
&& \texttt{jacobi2d} & $8,000 \times 8,000$ & $ \sqrt{N}$ & 372 & 281 & 151.31 & 198.00 & 2.5 & 1.4\\
&& \texttt{heat3d} & $800 \times 800 \times 800$ & $\sqrt[3]{N}$ & 45 & 71 & 32.29 & 50.52 & 1.4 & 1.4\\
&& \texttt{19pt3d} & $800 \times 800 \times 800$ & $\sqrt[3]{N}$ & 61 & 71 & 33.82 & 52.27 & 1.8 & 1.4

\end{colortabular}
}
\caption{Performance summary of parallel stencil algorithms on a KNL/SKX node.}
\vspace{-0.3cm}
\label{tab:experimental-results}
\vgap{}\vgap{}\vgap{}
\end{table}

\vspace{0.2cm}
\noindent
\textbf{PLuTo-Generated Stencil Programs.} The tiled looping implementations were generated by \pluto{} \cite{Pluto} -- the state-of-the-art tiled looping code generator. The two main types of tiling methods used for performance comparison are: \plutostandard{} and \plutodiamond{}, whose tiles have the shapes of parallelograms and diamonds, respectively. In the plots, we use diamond and square symbols to denote \plutodiamond{} and \plutostandard{}, respectively.  \hide{The \plutostandard{} algorithm exploits better data locality and achieves better scaling than \plutostandard{} \cite{bondhugula2017}.} These parallel implementations were run on 68-core KNL and 48-core SKX nodes. The tile sizes were selected via an autotuning phase, exploring sizes from $\{ 8, 16, 32\}$ for the outer dimensions and from $\{64, 128, 512\}$ for the inner-most dimension to ensure that enough vectorization and multithreaded parallelism were exposed by \pluto{}, while ensuring the tile footprint neared the cache size.

\vspace{0.2cm}
\noindent
\textbf{Our FFT-Based Stencil Programs.}
Implementations of our FFT-based algorithms use FFT implementations available in the Intel Math Kernel Library (Intel MKL) \cite{MKL}. In the plots, we use the triangle symbol to represent our FFT-based implementations. \hide{The base case sizes were chosen from $\{ 8, 16,32,64, 128\}$.} The base case sizes used for \texttt{1d}, \texttt{2d}, \texttt{3d} are $128$, $64\times64$, $16\times16\times16$, respectively. 

\hide{
\paragraph{Number of Timesteps.} In the literature, stencil computations are usually performed for a few hundreds of timesteps. In our experiments, we vary the number of timesteps $T$ from a few thousands to very large values. The motivation for simulating stencil computations for large number of timesteps is as follows: $(i)$ the state of a big physical system (e.g. fluid dynamics) might change quickly which necessitates simulations over a large number of tiny timesteps, $(ii)$ quick simulations of stencil computations over large number of timesteps can help us to iteratively improve our initially chosen stencil to realistically model the real-world, and $(iii)$ rules (or stencils) in problems such as \textit{game of life} will not change over time and hence can be used for simulating a large number of timesteps. 
}

\subsection{Periodic Stencil Algorithms}
Figure \ref{fig:plots-periodic-stencil-algorithms} shows runtime, speedup (w.r.t. PLuTo), and scaling plots for our FFT-based periodic stencil algorithm on Intel KNL nodes with Figure \ref{fig:appendix-plots-periodic-stencil-algorithms} in the Appendix showing additional runtime plots. Figure \ref{fig:appendix-plots-periodic-stencil-algorithms} also includes the same set of plots for SKX nodes. We identify the implementations of our algorithms in the plots by prefixing the stencil name with ``FFT'' while PLuTo-generated implementations are identified by a ``PLuTo'' prefix.

For 1D, 2D, and 3D stencils we keep the value of $N$ fixed to 1.6M, $8K \times 8K$, and $800 \times 800 \times 800$, respectively, and vary $T$, where $1$M$ =10^6$ and $1$K$ =10^3$. 

In all of our 1D and 2D periodic stencil experiments \plutodiamond{} ran faster than \plutostandard{} while in case of 3D \plutostandard{} outperformed \plutodiamond{}. When $N$ is fixed, performance of our algorithm improved over PLuTo's as $T$ increased, significantly outperforming PLuTo for large $T$, e.g., for \texttt{seidel2d} our algorithm ran around $6000\times$ faster on KNL when $T = 10^5$. This increase in speedup with the increase of $T$ follows from theoretical predictions. Indeed, theoretical speedup of our algorithm over any existing stencil algorithms is $\Th{T/\log T}$ when $N$ is fixed (see Table \ref{tab:summary}). Hence, the speedup of our algorithm w.r.t. PLuTo-generated codes increased almost linearly with $T$ when $N$ was kept fixed.

Figures \ref{fig:plots-periodic-stencil-algorithms}$(v)$ and \ref{fig:appendix-plots-periodic-stencil-algorithms}$(xi)$ show the scalability of our algorithm on KNL and SKX nodes, respectively, when the number of threads is varied. Our implementations are highly parallel and should scale accordingly. However, we use FFT computations that are memory bound -- they perform only $\Th{N\log{N}}$ work on an input of size $\Th{N}$ and thus have very little data reuse\footnote{which is evident from their (optimal) $\Oh{(N/B)\log_{M}{N}}$ cache complexity with a low temporal locality}. 
We believe that as a result of this issue, our programs do not scale well beyond 32 threads on KNL and 16 threads on SKX. Indeed, we observe that the FFT and the inverse FFT computations are the scalability bottlenecks of our algorithms.

\hide{
Figure \ref{fig:plots-periodic-stencil-algorithms} shows the plots for periodic stencil algorithms. In the plots, \texttt{program-problem} represents that the implementation \texttt{program} was executed for the application \texttt{problem}. 

\para{Performance Analysis} We compared our stencil implementations with \plutostandard{} and \plutodiamond{}. For these experiments, $N$ was fixed and $T$ was varied. All \pluto{} programs were parallelized and run on 68 and 48 cores for KNL and SKX nodes, respectively. In contrast, we found that our programs gave the best running times for the number of cores much smaller than total number of available cores on both of the machine architectures. \hide{In contrast, we found that the number of cores that gave the least parallel running time for our programs for \texttt{heat1d}, \texttt{heat2d}, \texttt{heat3d}, and \texttt{seidel2d} were $16,32,32,$ and $64$, respectively. }This is because our periodic algorithm performs asymptotically less work than the tiled algorithm and hence it did not expose much parallelism. 

For all problems, \plutodiamond{} ran faster than \plutostandard{} and our algorithm ran significantly faster than \plutodiamond{}.

\para{Speedup} Figures \ref{fig:plots-periodic-stencil-algorithms}$(iv, vi)$ show that the implementations of our algorithms can achieve any arbitrary speedups over \plutodiamond{}, on KNL and SKX nodes, respectively. This is because the theoretical speedup of our algorithm over, any existing stencil algorithm (tiled or cache-oblivious) is $\Th{T/\log T}$ when $N$ is fixed. Hence, the speedups of our algorithm over any existing stencil algorithm will keep increasing at approximately the same rate as $T$ for larger grid sizes, as evident from the plot. The speedup also increases with the increase in \#dimensions and stencil radius $(\sigma)$ and with the decrease in \#points $(\alpha)$ in stencil. This trend is also captured in the plots.

\vspace{0.1cm}
\noindent
\textit{Scalability.} Figure \ref{fig:plots-periodic-stencil-algorithms}$(v, vii)$ shows the scalability of our algorithm on KNL and SKX nodes, when the number of threads are varied as multiples of four. Our implementations are highly parallel and should scale accordingly. However, we use FFT computations that are memory bound -- they perform only $\Th{N\log{N}}$ work on an input of size $\Th{N}$ and thus have very little data reuse\footnote{which is evident from their (optimal) $\Oh{(N/B)\log_{M}{N}}$ cache complexity with a low temporal locality}. 
As a result, our programs may slow down with the increase of the number of concurrent threads when the memory bandwidth is low. Indeed, we observe that the FFT and the inverse FFT computations are the scalability bottlenecks of our algorithms.
}

\subsection{Aperiodic Stencil Algorithms}
We performed two types of experiments for aperiodic stencils: $(1)$ grid size $N$ was kept fixed while time $T$ was varied, and $(2)$ grid size was set to $N^{1/d} \times  \cdots \times N^{1/d}$ and $T=N^{1/d}$ for $d$ dimensions, and $N$ was varied.

Figure \ref{fig:plots-aperiodic-stencil-algorithms} shows the runtime (w.r.t. PLuTo) plots of our aperiodic stencil algorithm for both experiments on KNL.
\hide{Figure \ref{fig:plots-aperiodic-knl} shows the plots for aperiodic stencil algorithms for KNL nodes} 
Figure \ref{fig:appendix-plots-aperiodic-knl} in the Appendix includes additional runtime plots for both experiments on KNL. Figures \ref{fig:appendix-plots-aperiodic-skx-1} and \ref{fig:appendix-plots-aperiodic-skx-2} in the Appendix show the corresponding plots on SKX for Experiments 1 and Experiment 2, respectively.

In Experiment 1, \plutodiamond{} outperformed \plutostandard{} for all stencils except for \texttt{heat1d} on both machines. Our algorithm always ran faster than PLuTo-generated code, reaching speedup factors of 8.5, 2.3--2.4, and 1.3--1.5 for 1D, 2D, and 3D stencils, respectively, on KNL (see Table \ref{tab:experimental-results} for details). The corresponding figures on SKX were 5.2, 1.7--4.5, and 1.3--1.4, respectively. Theoretical bounds in Table \ref{tab:summary} imply that our algorithm will run around $\Th{N^{1/d} / ( \log{TN^{1-1/d}})} \log{T}$ factor faster than PLuTo code for any given $N$ and $T$. So, for a fixed $N$, the speedup factor will not increase (may even slightly decrease) with the increase of $T$. The speedup plots match this prediction.

In Experiment 2, \plutodiamond{} ran faster than \plutostandard{} for all stencils except for \texttt{heat1d} and \texttt{heat3d} on KNL and \texttt{heat1d} on SKX. Our algorithm ran up to 5.7, 2.3--2.5, and 1.4--1.8 factor faster than PLuTo-generated code for 1D, 2D, and 3D stencils, respectively, on KNL (see Table \ref{tab:experimental-results}). The corresponding speedup factors on SKX were 3.3, 1.4--2.6, and 1.4, respectively. Our theoretical prediction for rough speedup factor from the previous paragraph implies that our speedup over PLuTo code will increase with the increase of $N$ which is confirmed by the speedup plots for this experiment. However, the speedup plot of \texttt{heat2d}, \texttt{seidel2d}, and \texttt{jacobi2d} on KNL, as shown in Figure \ref{fig:plots-aperiodic-stencil-algorithms}$(xi)$ has speedup drops at $N=9000\times9000$ and $N=15000\times15000$. Similar performance drops are also observed on SKX nodes (see Figure \ref{fig:appendix-plots-aperiodic-skx-2}$(viii)$ in the Appendix). We believe that this happens mainly because of a known phenomenon which is the drastic performance variations MKL suffers from when the sizes of the spatial grid dimensions change \cite{Khokhriakov2018}. This performance drop is also partly due to the changes in the base case kernel size of our implementations resulting from the changes in the grid size. 

Figure \ref{fig:plots-scalability-aperiodic-knl-exp2} and Appendix Figure \ref{fig:appendix-plots-scalability-aperiodic-skx-exp2} show the scalability plots of our FFT-based aperiodic stencil algorithm on KNL and SKX nodes, respectively. We used $N = 1M$, $N = 16K \times 16K$, and $N=800\times800\times800$ for 1-D, 2-D, and 3-D stencils, respectively, and set $T = N^{1/d}$ for our scalability analysis, where $d$ is the number of dimensions. Our implementations show highly scalable performance on KNL and almost similar scalability for 1D and 2D stencils on SKX.

\hide{

State-of-the-art FFT libraries like MKL and FFTW demonstrate drastic performance variations based on the size of the spatial dimension and therefore their average performances are considerably lower than their peak
performances \cite{Khokhriakov2018}.

There are two factors which results in these speedup drops: $(i)$ base case kernel's computation structure and $(ii)$ Intel MKL FFT performance variation. 


The base case computation of our recursive divide-and-conquer (D\&C) implementations depends on the sizes of both spatial and time dimensions. Due to the  nature of D\&C, our implementation encounters different base case size structures and for these specific values of $N$, our algorithm performs more computations due to the increased size of the base case kernels. However, it is possible to improve the performance of our algorithm for these specific values using tiling in the base case kernels as that in \pluto{}.

State-of-the-art FFT libraries like MKL and FFTW demonstrate drastic performance variations based on the size of the spatial dimension and therefore their average performances are considerably lower than their peak
performances \cite{Khokhriakov2018}. Our MKL based implementation also suffers from this performance variance issue of MKL library for different spatial dimension sizes; hence resulting arbitrary drops on both KNL and SKX nodes.

\para{Performance Analysis} For all benchmarks except \texttt{heat1d}, \plutodiamond{} ran faster than \plutostandard{}. We conducted two types of experiments: $(1)$ grid size $N^{1/d} \times  \cdots \times N^{1/d}$ for $d$ dimensions and $T=N^{1/d}$ was varied, and $(2)$ grid size $N$ was fixed and time $T$ was varied. Both \plutodiamond{} and our implementation ran on 68 cores and 48 cores on KNL and SKX nodes, respectively. We compared our stencil algorithms with the fastest among \plutostandard{} and \plutodiamond{}.

\para{Speedup} Figures \ref{fig:plots-aperiodic-stencil-algorithms}$(vii, viii, ix)$ and \ref{fig:plots-aperiodic-stencil-algorithms}$(x, xi, xii)$ show speedup over \pluto{} on KNL nodes for both experiments $(1)$ and $(2)$, respectively. Figures \ref{fig:plots-aperiodic-stencil-algorithms-2}$(i, ii, iii)$ and \ref{fig:plots-aperiodic-stencil-algorithms-2}$(iv, v, vi)$ show speedup over \pluto{} on SKX nodes for both experiments $(1)$ and $(2)$, respectively. From the plots, it is evident that our algorithm achieves good speedup over \pluto{}. Our algorithm ran faster than \pluto{} majorly because our algorithm performs asymptotically fewer number of computations compared with \pluto{} for both experiments. The speedup plot of \texttt{heat2d}, \texttt{seidel2d}, and \texttt{jacobi2d}, as shown in Figure \ref{fig:plots-aperiodic-stencil-algorithms}$(viii)$ and  \ref{fig:plots-aperiodic-stencil-algorithms-2}$(ii)$, has speedup drops at $N=9000\times9000$ and $N=15000\times15000$. There are two factors which results in these speedup drops: $(i)$ base case kernel's computation structure and $(ii)$ Intel MKL FFT performance variation. 


The base case computation of our recursive divide-and-conquer (D\&C) implementations depends on the sizes of both spatial and time dimensions. Due to the  nature of D\&C, our implementation encounters different base case size structures and for these specific values of $N$, our algorithm performs more computations due to the increased size of the base case kernels. However, it is possible to improve the performance of our algorithm for these specific values using tiling in the base case kernels as that in \pluto{}.

State-of-the-art FFT libraries like MKL and FFTW demonstrate drastic performance variations based on the size of the spatial dimension and therefore their average performances are considerably lower than their peak
performances \cite{Khokhriakov2018}. Our MKL based implementation also suffers from this performance variance issue of MKL library for different spatial dimension sizes; hence resulting arbitrary drops on both KNL and SKX nodes.
}

\hide{
For experiment $(1)$, \plutodiamond{}'s runtime is proportional to $\Th{N^{1+1/d}}$ and our algorithm's runtime is proportional to $\Th{N \log^2 N}$, as shown in Table \ref{tab:theoretical-predictions}. For experiment $(2)$, \plutodiamond{}'s runtime is proportional to $\Th{T}$ and our algorithm's runtime is proportional to $\Th{T \log^2 T}$. So, our algorithm runs faster than \plutodiamond{} in experiment $(1)$ and \plutodiamond{} runs faster than our algorithm in experiment $(2)$. In practice, our algorithm ran slightly faster than \plutodiamond{} for experiment $(2)$ probably due to smaller constants.
}

\hide{
\textit{Speedup.} Figures \ref{fig:plots-aperiodic-stencil-algorithms-2}$(ii, iii, iv)$ show that our algorithm achieves good speedup over \plutodiamond{} for experiment $(1)$. In the case of experiment $(2)$, our algorithm ran faster than \plutodiamond{} for 1-D and 2-D problems and \plutostandard{} for 3-D problem as shown in Figures \ref{fig:plots-aperiodic-stencil-algorithms-2}$(v,vi , vii)$.  
}

\hide{
\para{Scalability} Figures \ref{fig:plots-aperiodic-stencil-algorithms}$(xiii)$ and \ref{fig:plots-aperiodic-stencil-algorithms-2}$(vii)$ show the scalability plot of our FFT-based aperiodic stencil algorithm on KNL and SKX nodes, respectively. We used $N = 1M$, $N = 16K \times 16K$, and $N=800\times800\times800$ for 1-D, 2-D, and 3-D problems, respectively, and set $T = N^{1/d}$ for our scalability analysis, where $d$ is the number of dimensions. As can be seen from the plot our implementations are highly scalable. \hide{except for \texttt{heat1d}. This is because the spatial-temporal grid size for \texttt{heat1d} is too small for our FFT-based algorithm to exploit high parallelism due to scheduling overhead for more than 16 cores.}

\para{Portability} Our algorithm and implementation are highly portable across different machine architectures. Table \ref{tab:experimental-results}, Figure \ref{fig:plots-aperiodic-stencil-algorithms}, and Figure \ref{fig:plots-aperiodic-stencil-algorithms-2} shows our implementations achieves similar orders of scaling and speedup ratios on both KNL and SKX nodes. 

Source codes of our implementations and the reference source codes can be accessed at-\\ \texttt{http://bit.ly/FastStencil-FFT-SPAA21}
}

\hide{
\pramod{okay till here}

\vspace{0.1cm}
\noindent
\textit{Scalability.} 
}

\begin{figure}[ht!]
\includegraphics[width=0.32\textwidth, height=3.7cm, clip = true]{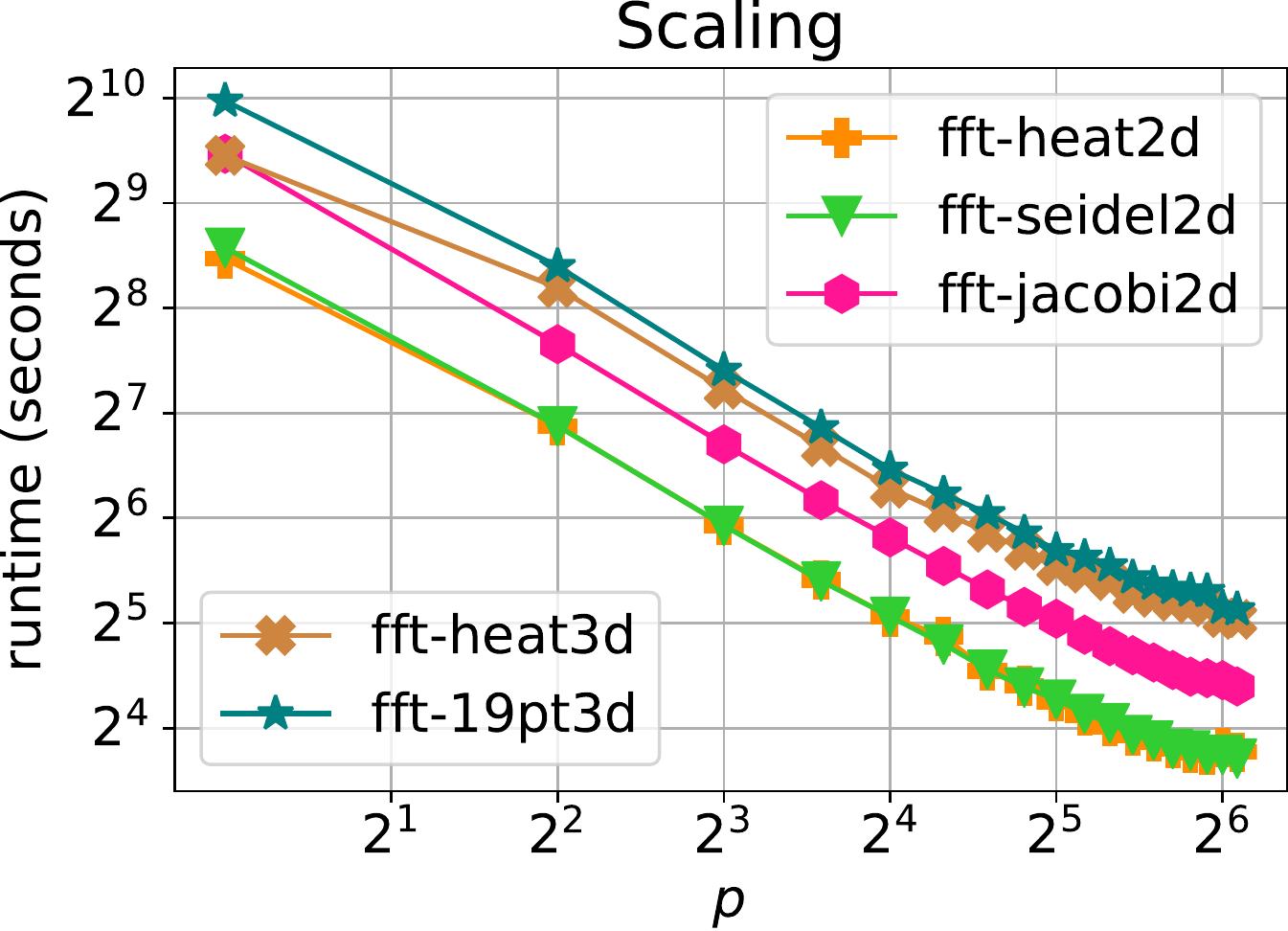}
\vgap{}
\caption{Scalability of our aperiodic algorithms for Experiment 2 on a KNL node.}
\label{fig:plots-scalability-aperiodic-knl-exp2}
\vgap{}\vgap{}\vgap{}
\end{figure}

\begin{table*}[!ht]
\begin{tabular}{c c c}
\hline\\[-1.5ex]

\multicolumn{3}{c}{\textbf{KNL Node (More plots are given in Figure \ref{fig:appendix-plots-periodic-stencil-algorithms})}} \\[0.5ex]

\insertplotline{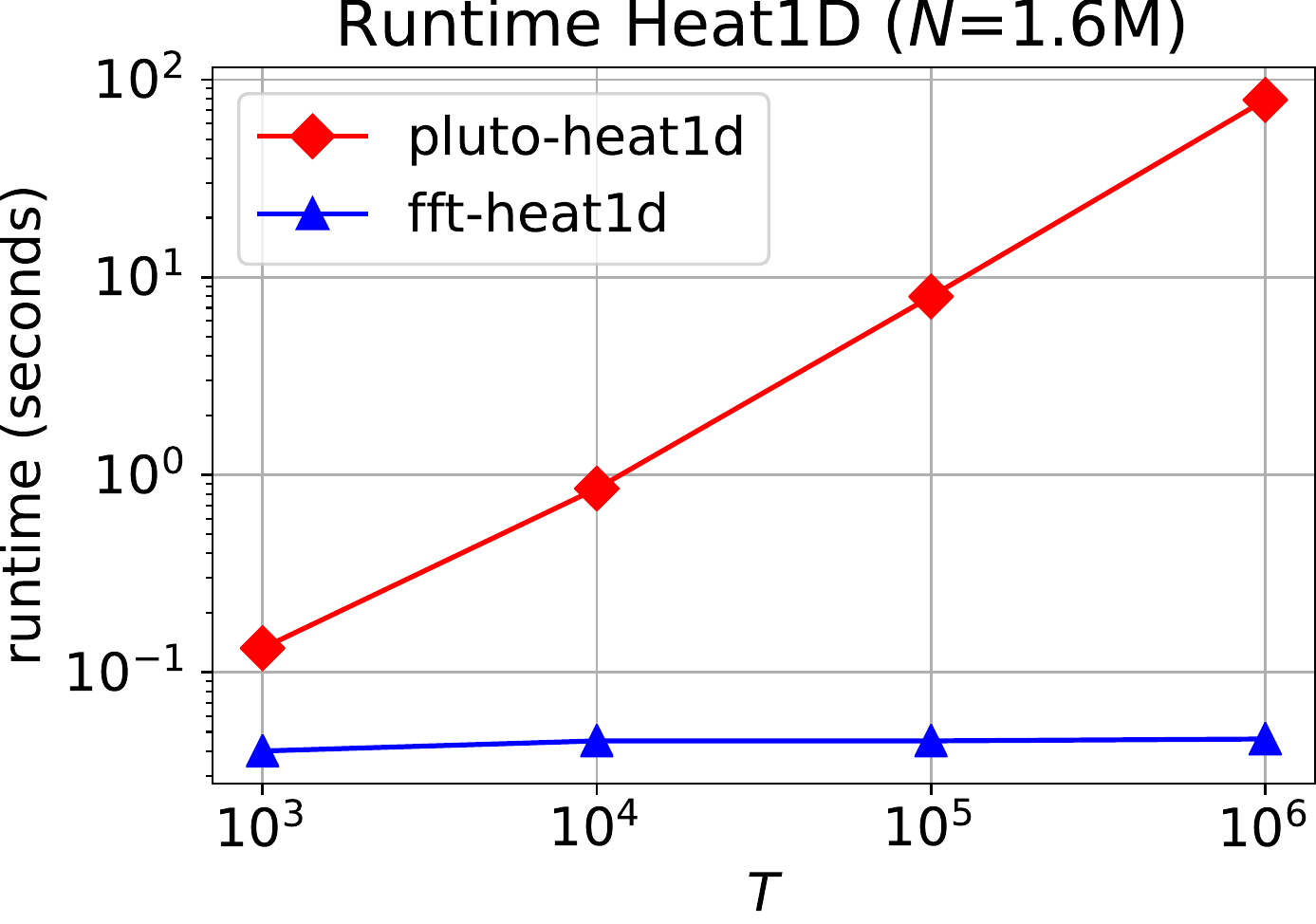}{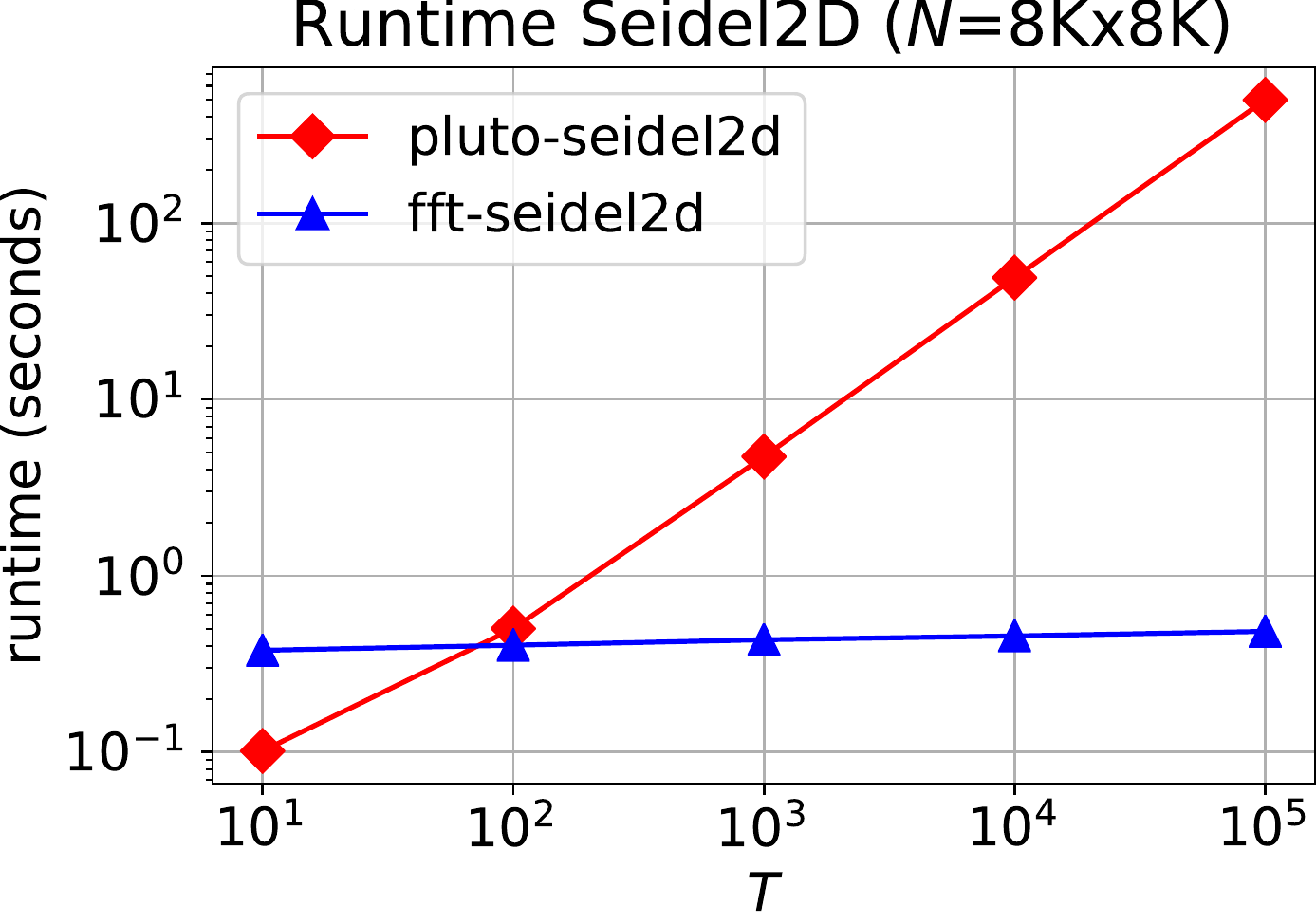}{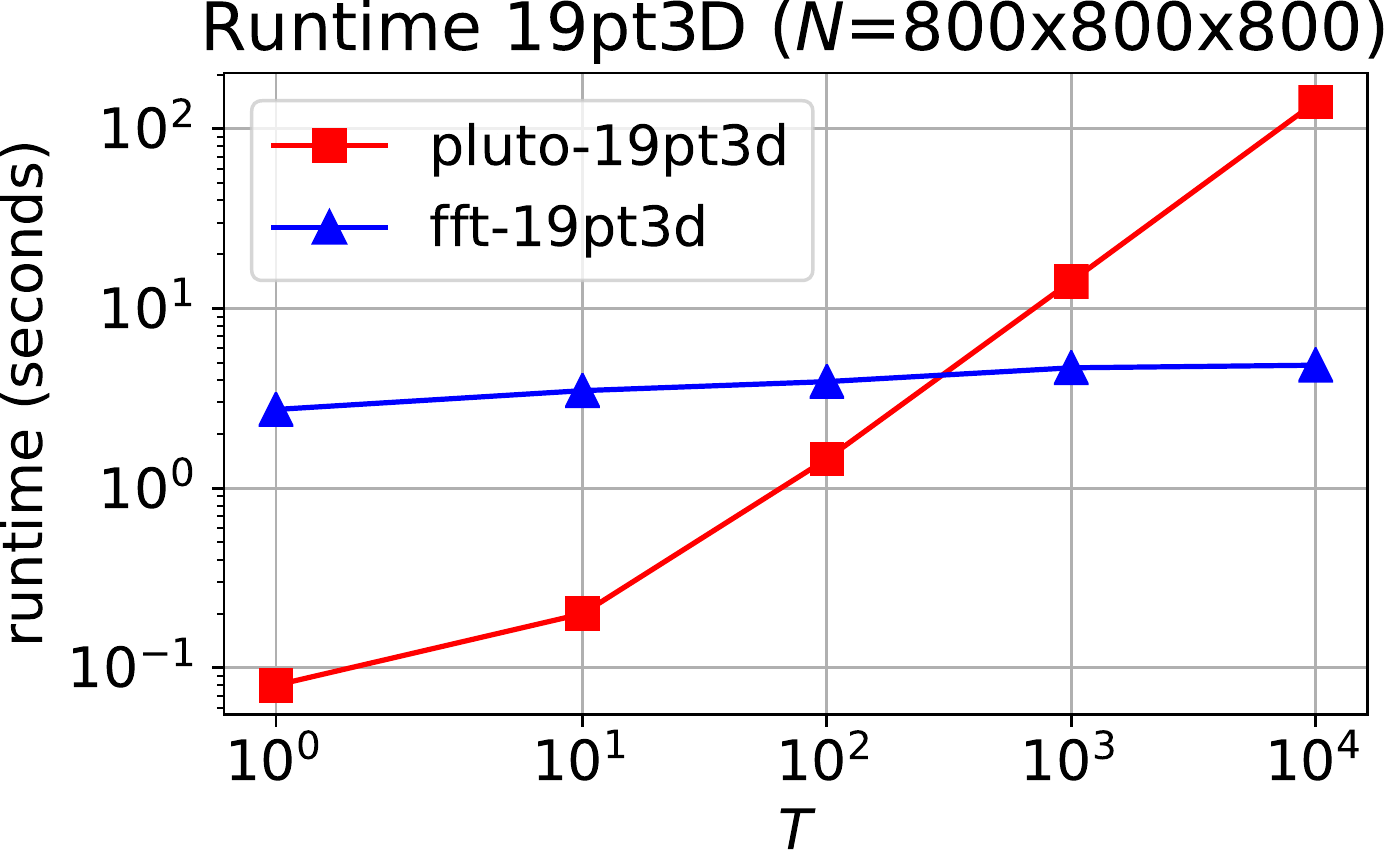}
$(i)$ & $(ii)$ & $(iii)$ \\[1ex]

{\includegraphics[width=0.32\textwidth, height=3.7cm, clip = true]{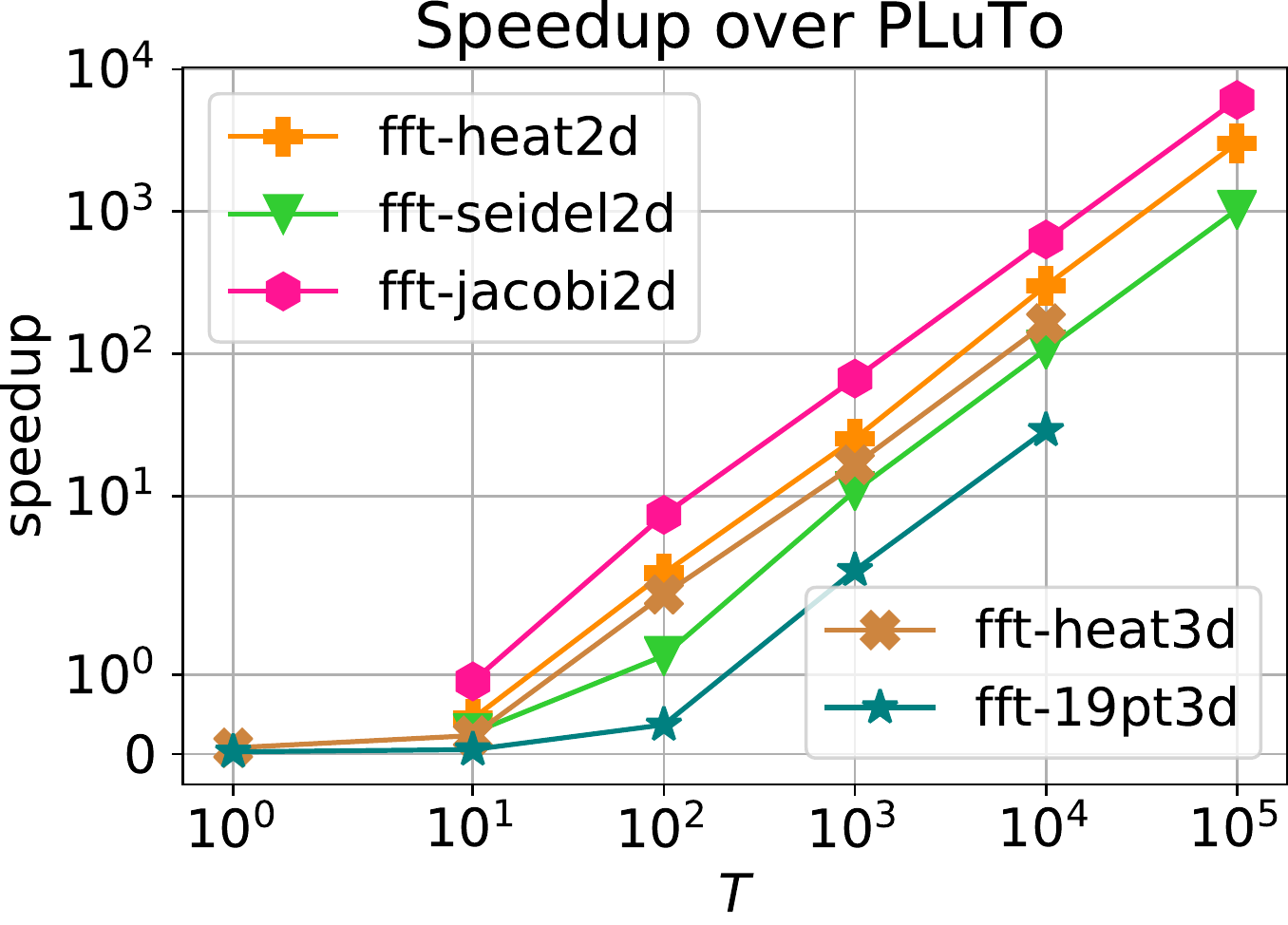}} & {\includegraphics[width=0.32\textwidth, height=3.7cm, clip = true]{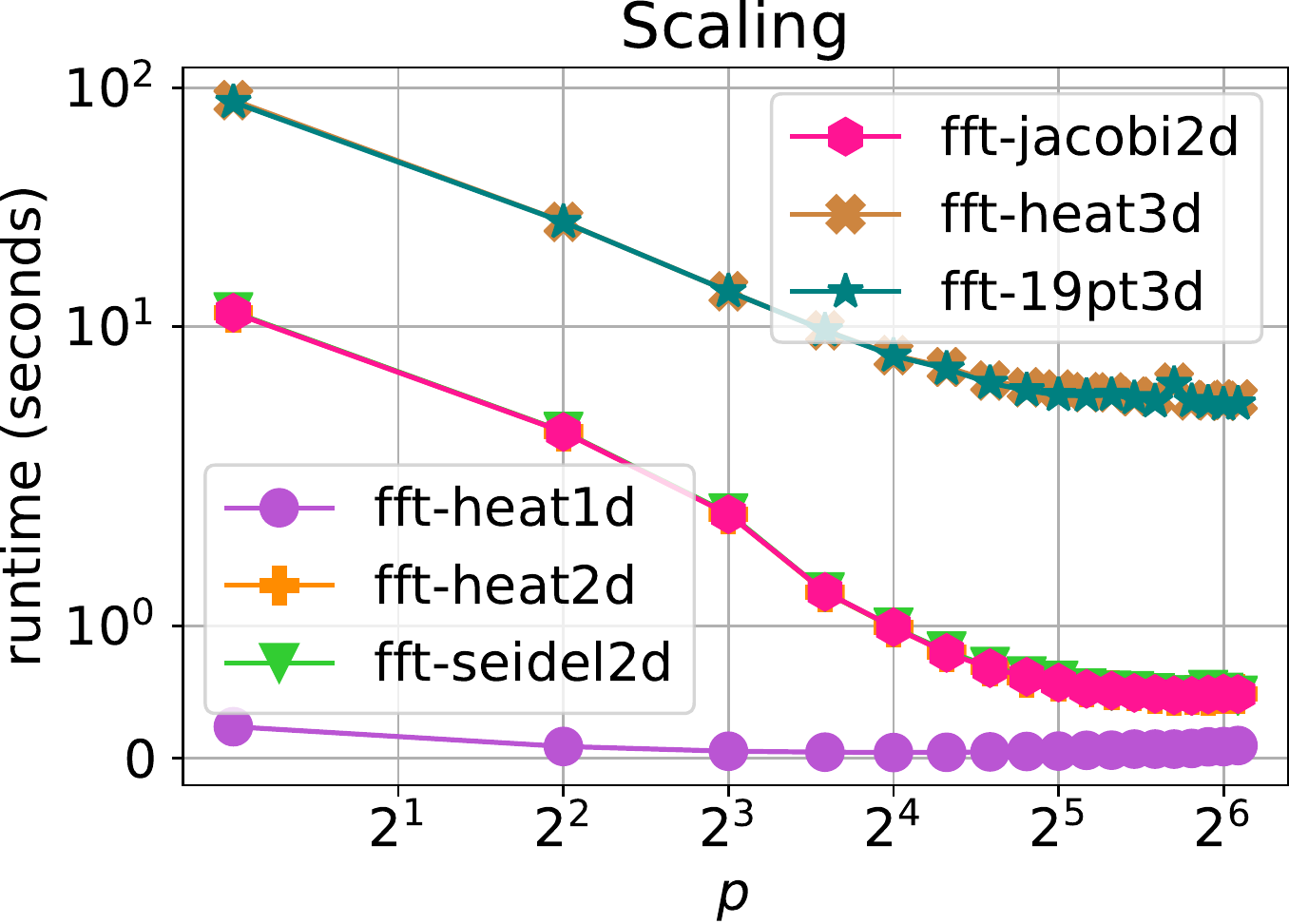}} &  \\
$(iv)$ & $(v)$ & \\\hline 



\end{tabular}
\figcaption{Performance comparison of our FFT-based \highlight{periodic} algorithms with the existing best stencil programs.}
\label{fig:plots-periodic-stencil-algorithms}
\vgap{}\vgap{}\vgap{}
\end{table*}

\begin{table*}[!ht]
\begin{tabular}{ccc}
\hline\\[-1.5ex]

\multicolumn{3}{c}{\textbf{KNL Node, Experiment 1 (More plots are given in Figures \ref{fig:appendix-plots-aperiodic-knl}, \ref{fig:appendix-plots-aperiodic-skx-1}, \ref{fig:appendix-plots-aperiodic-skx-2}, \ref{fig:appendix-plots-scalability-aperiodic-skx-exp2})}} \\[0.5ex]

\insertplotline{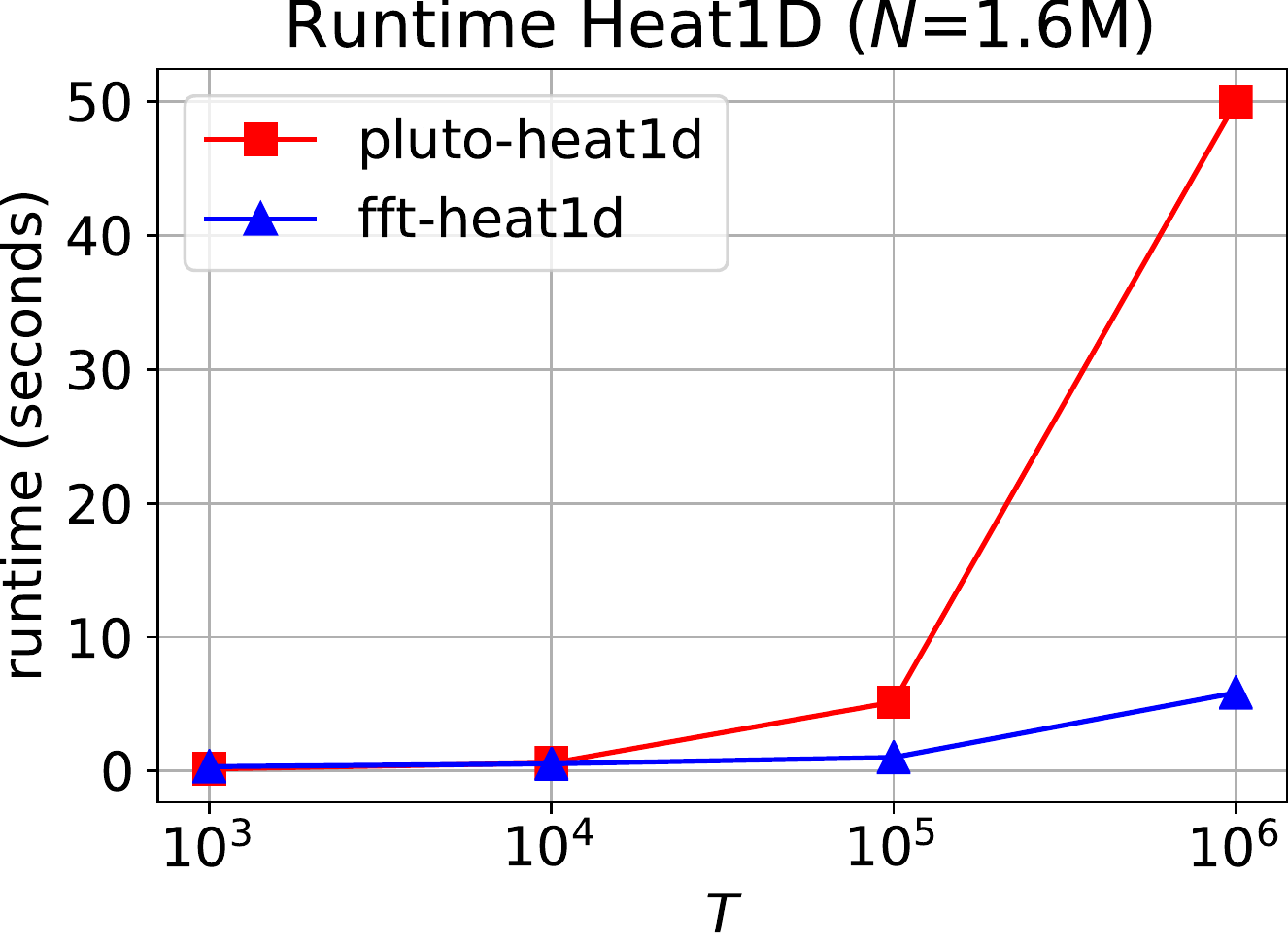}{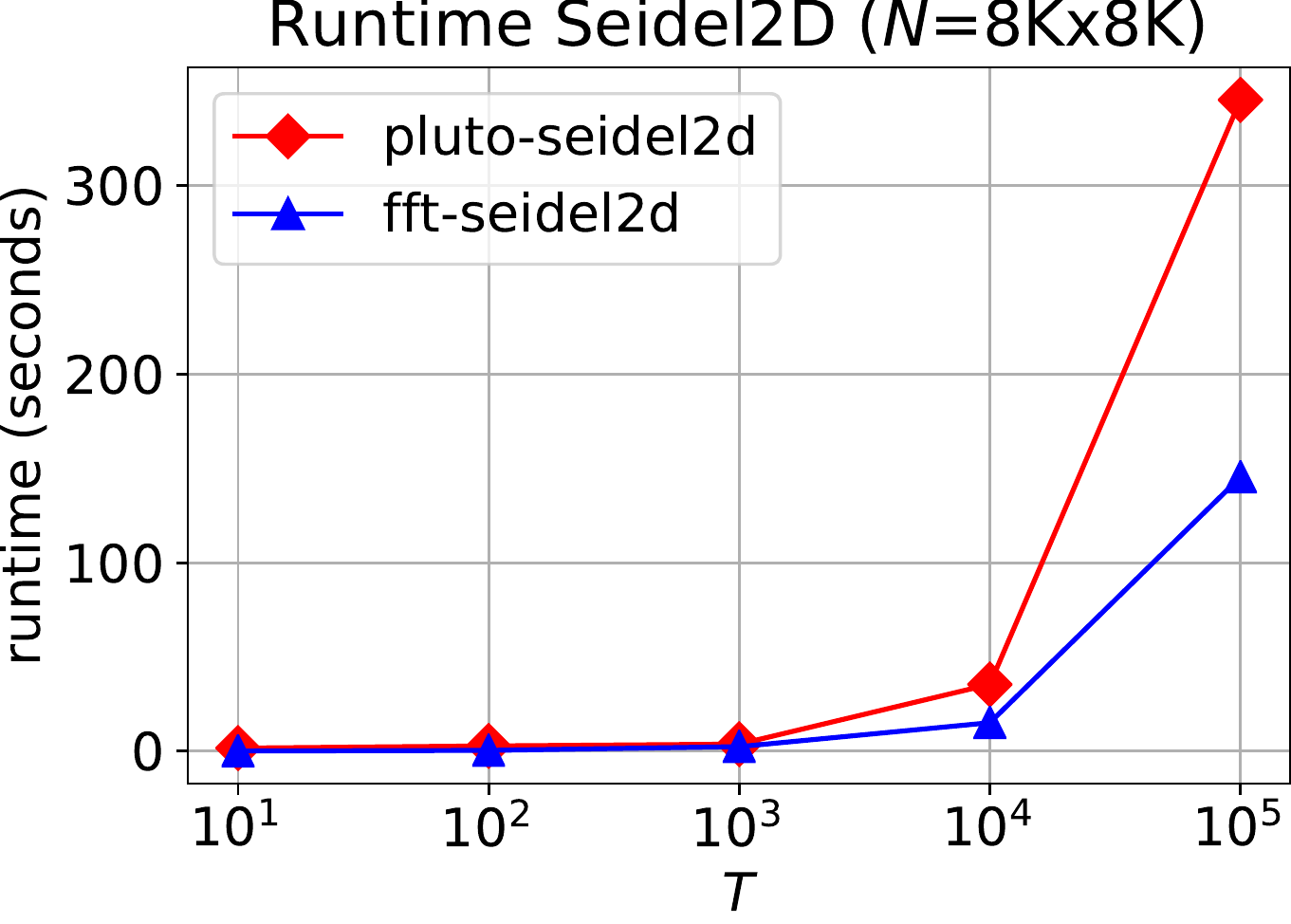}{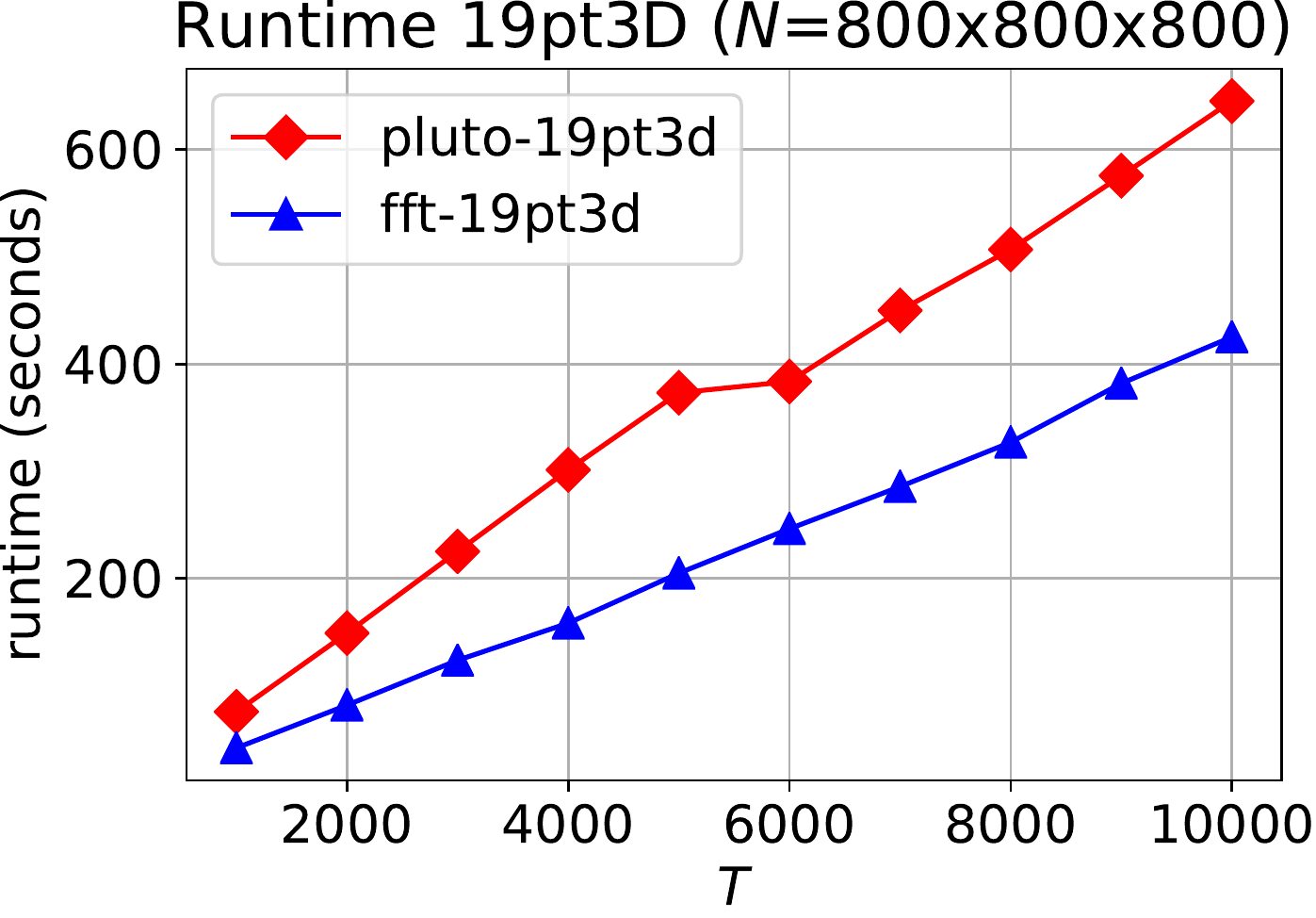}
$(i)$ & $(ii)$ & $(iii)$ \\[0.4ex]

\insertplotline{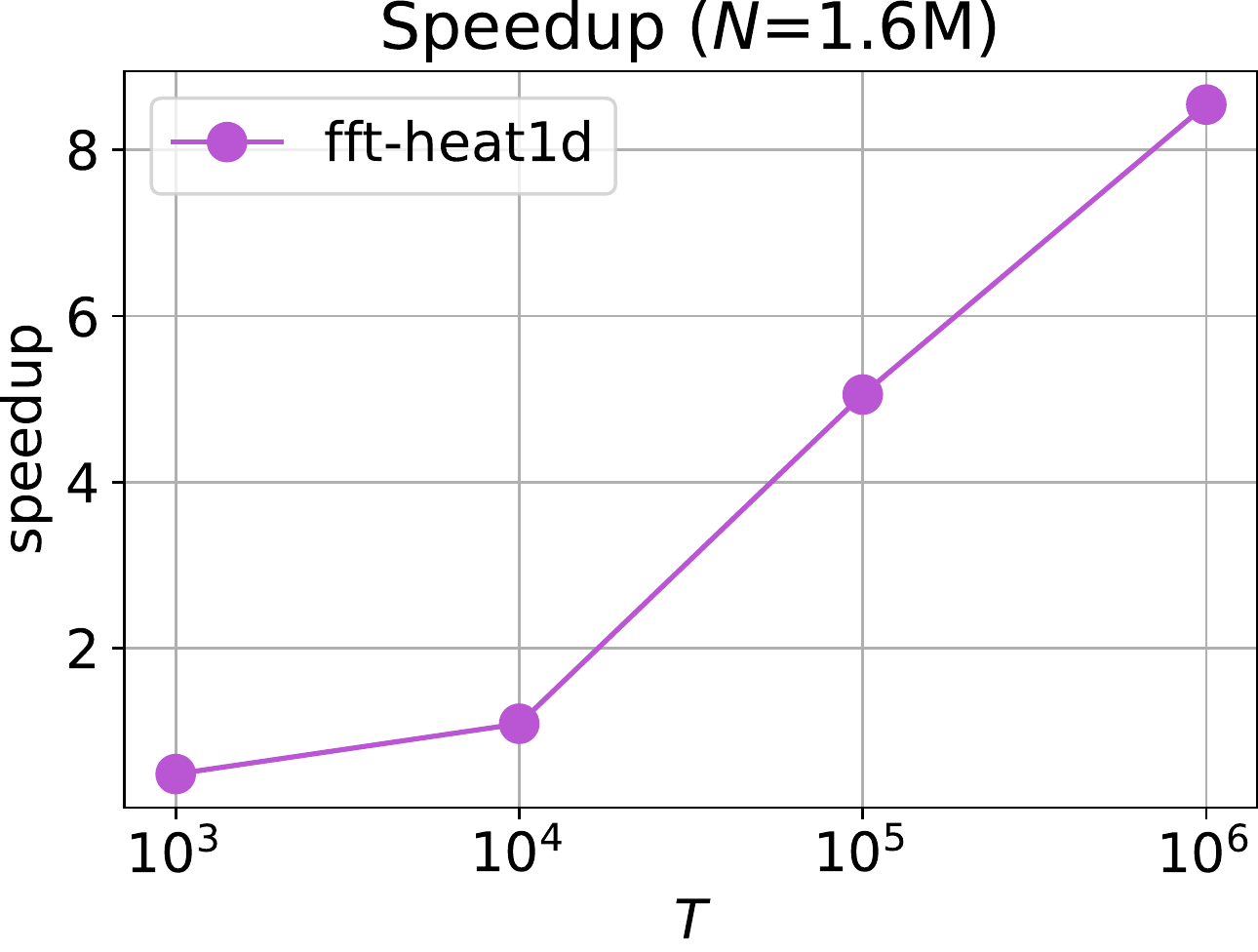}{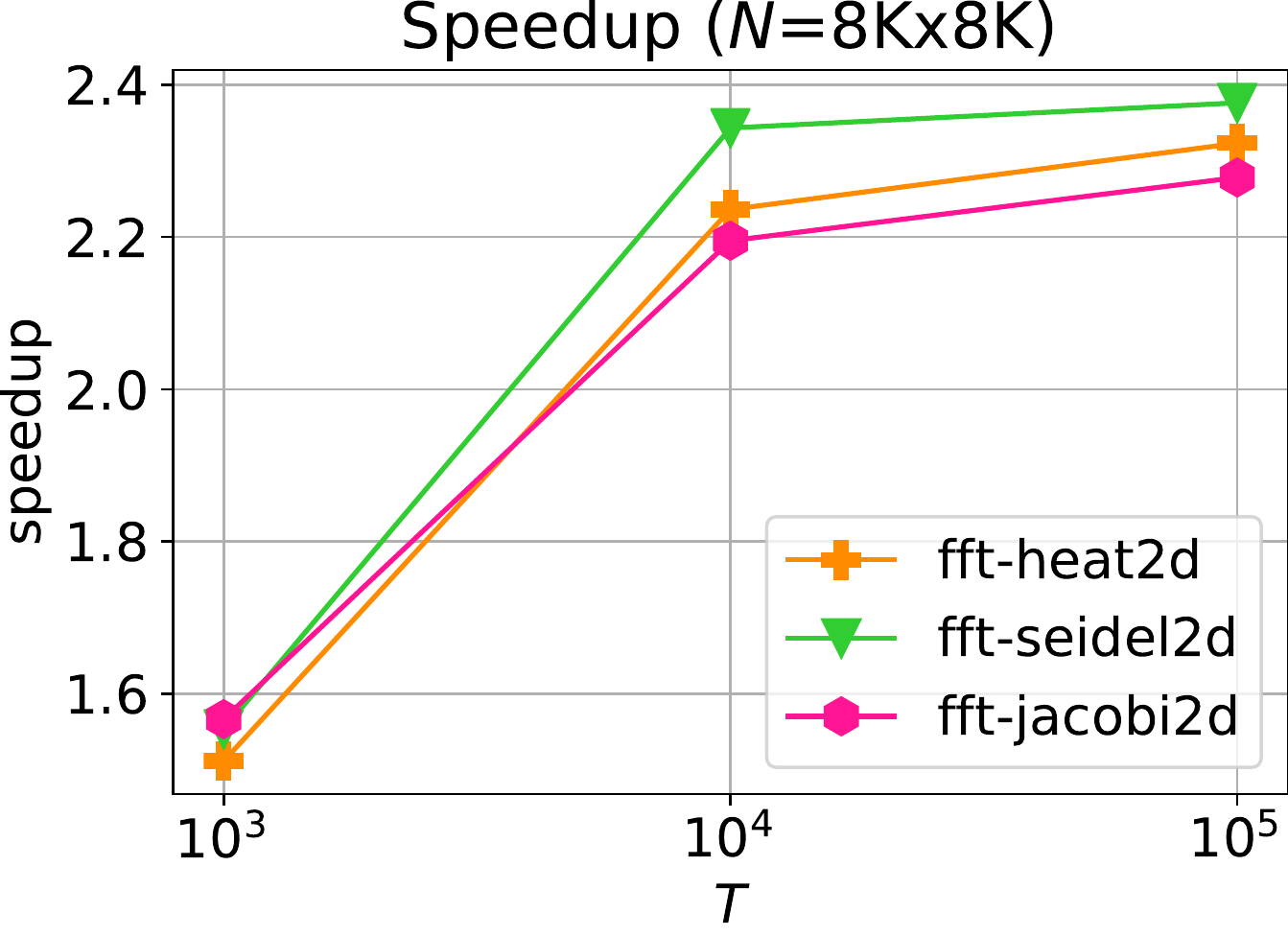}{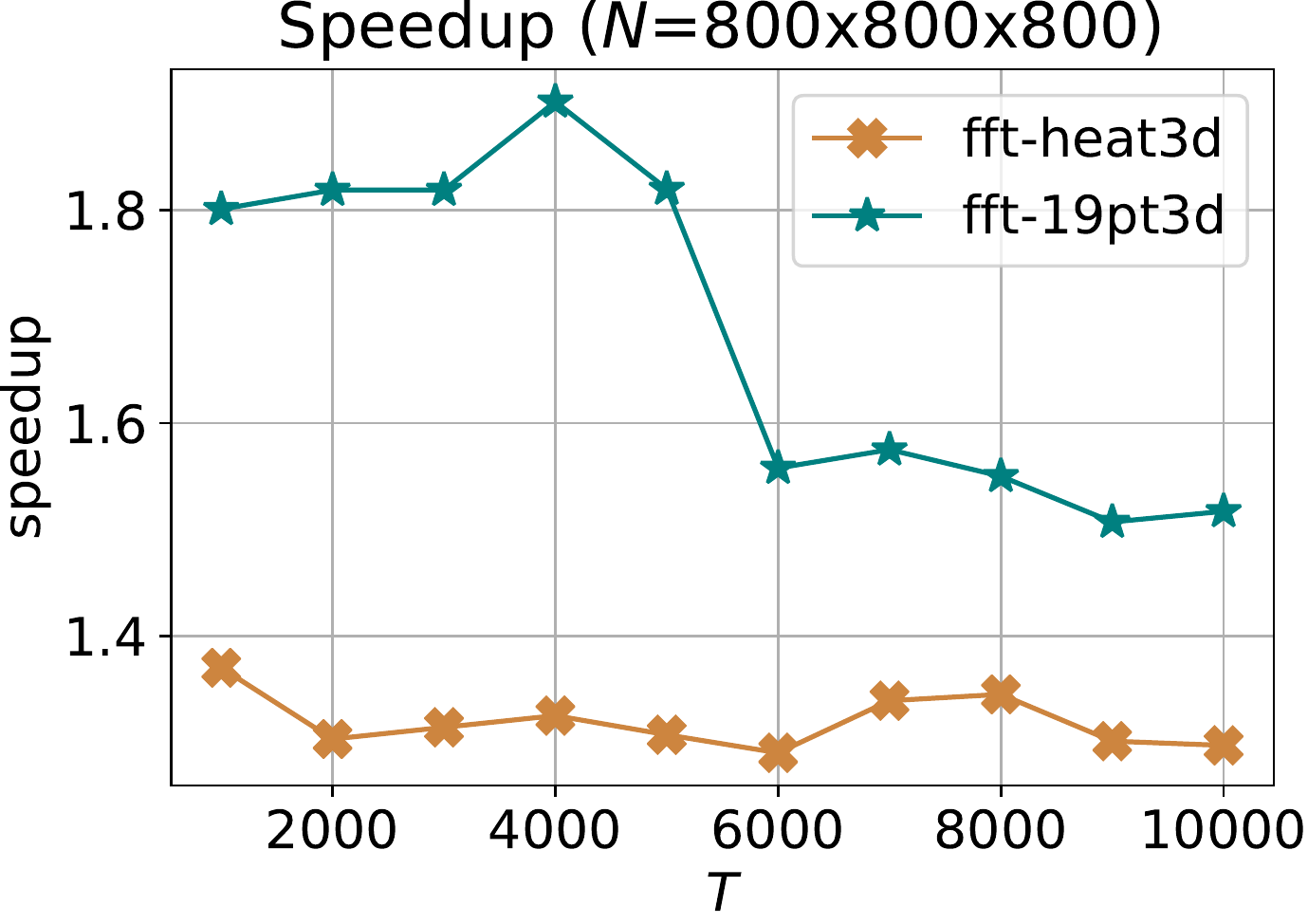}
$(iv)$ & $(v)$ & $(vi)$ \\\hline\rule{0pt}{1ex}\\[-1.5ex]

\multicolumn{3}{c}{\textbf{KNL Node, Experiment 2 (More plots are given in Figures \ref{fig:appendix-plots-aperiodic-knl}, \ref{fig:appendix-plots-aperiodic-skx-1}, \ref{fig:appendix-plots-aperiodic-skx-2}, \ref{fig:appendix-plots-scalability-aperiodic-skx-exp2})}} \\[0.5ex]

\insertplotline{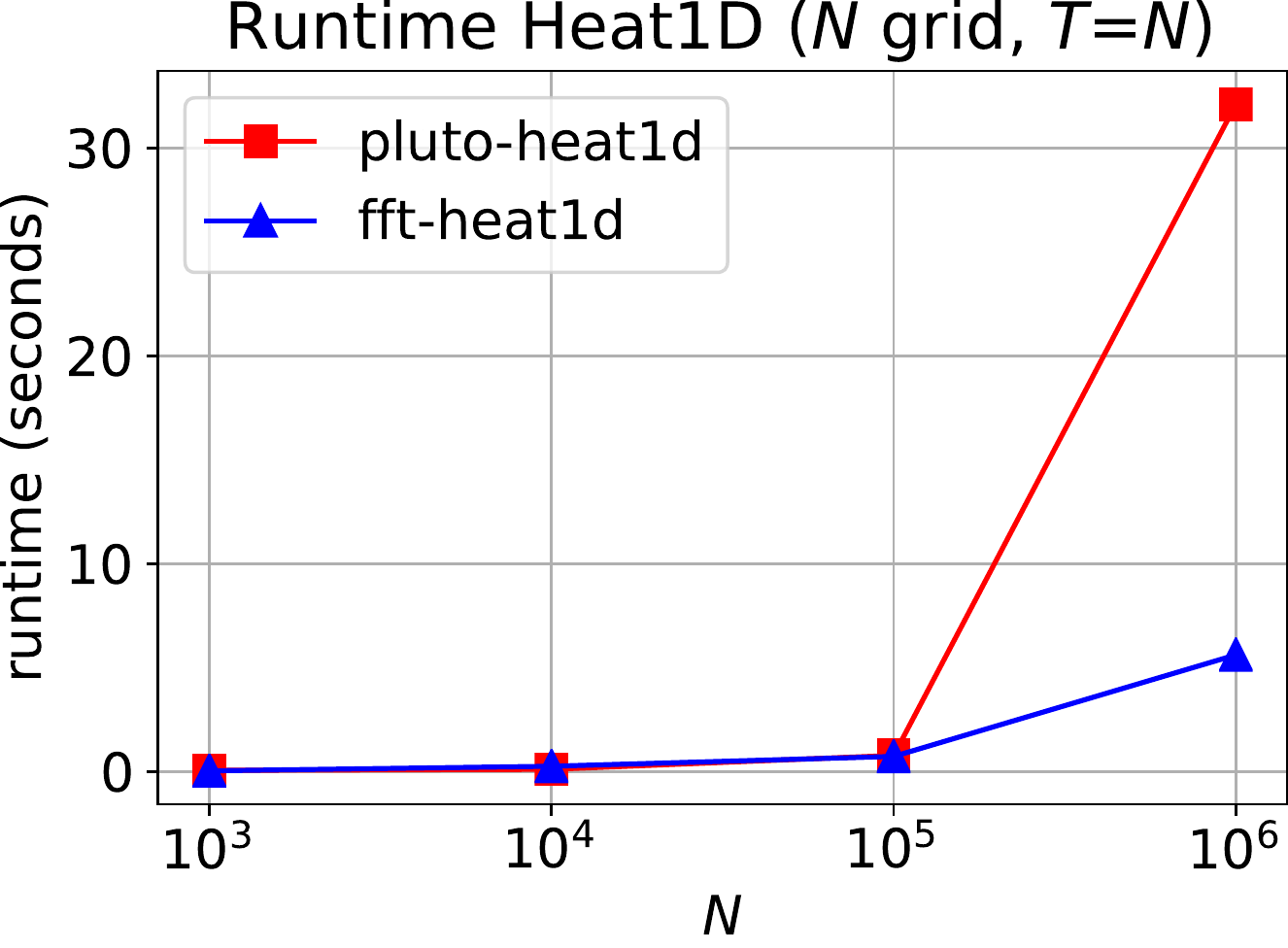}{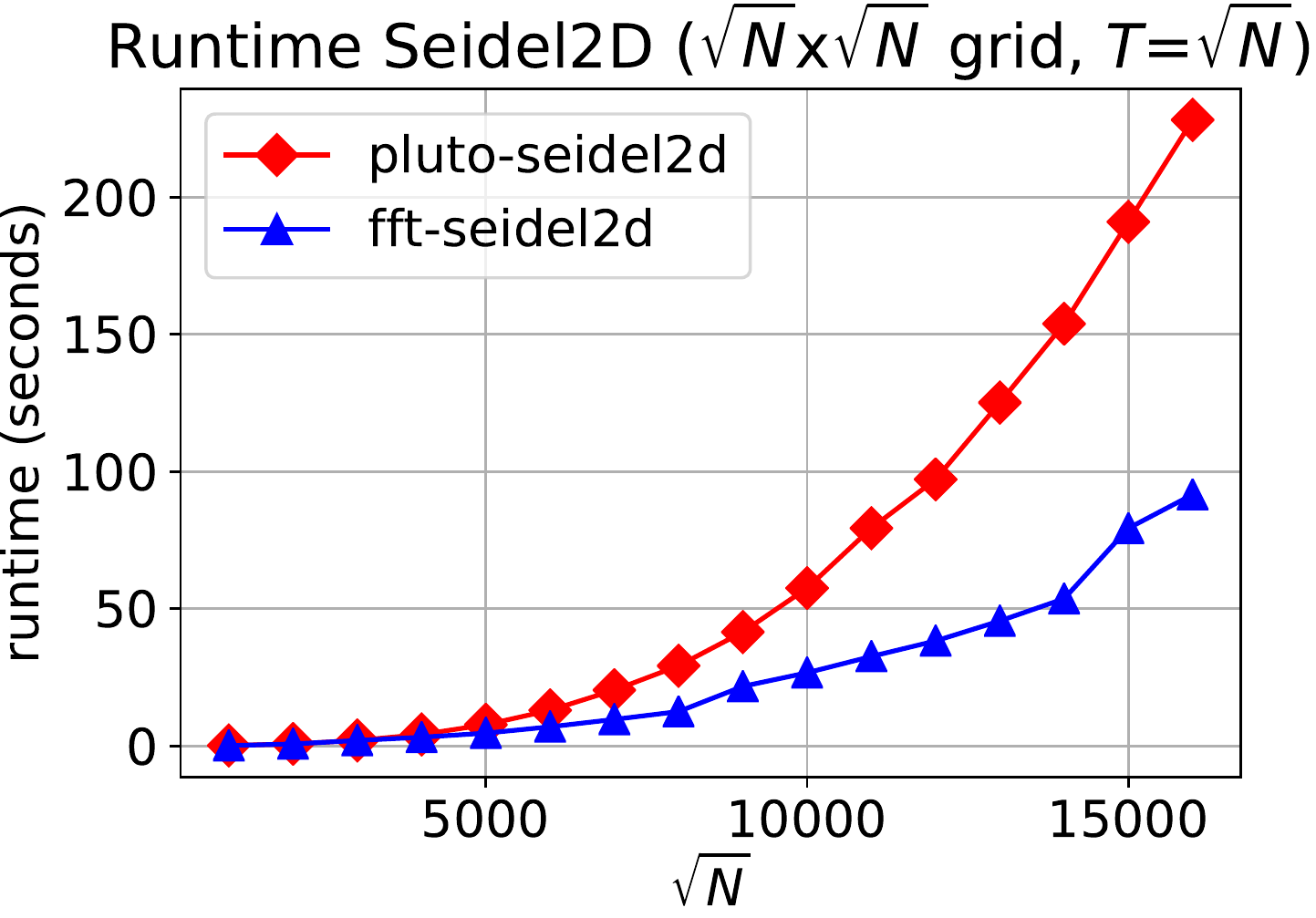}{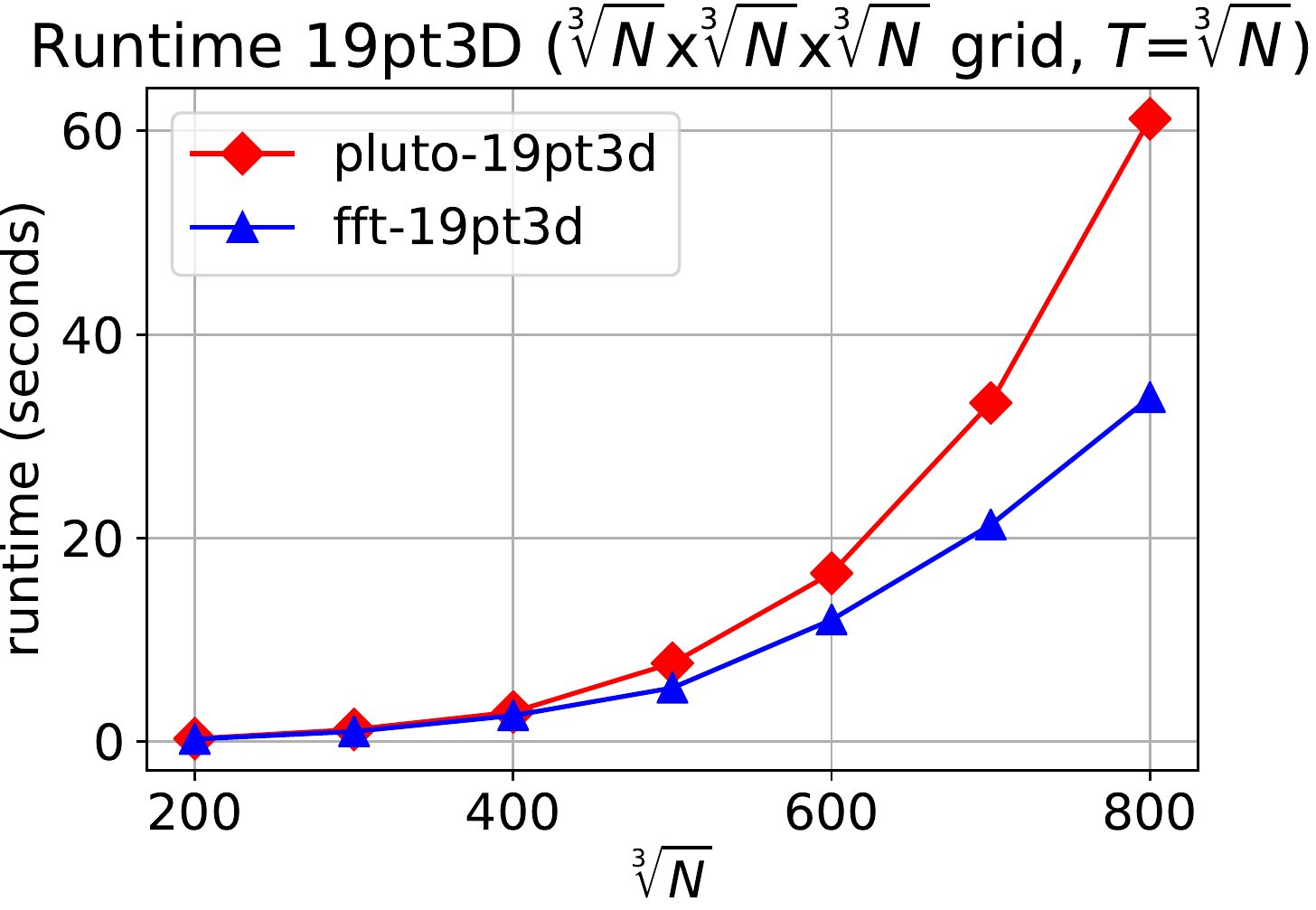}
$(vii)$ & $(viii)$ & $(ix)$ \\[0.4ex]

\insertplotline{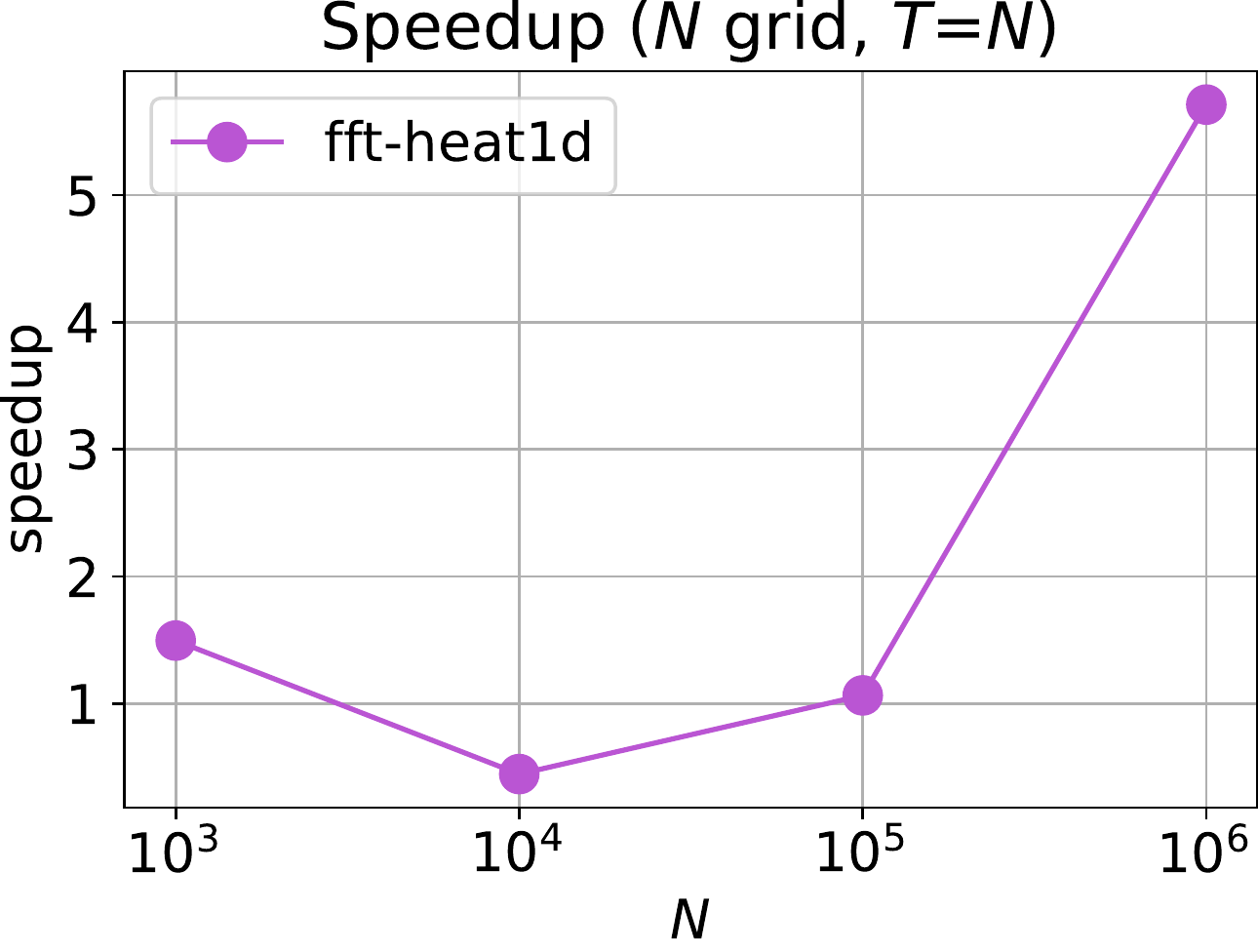}{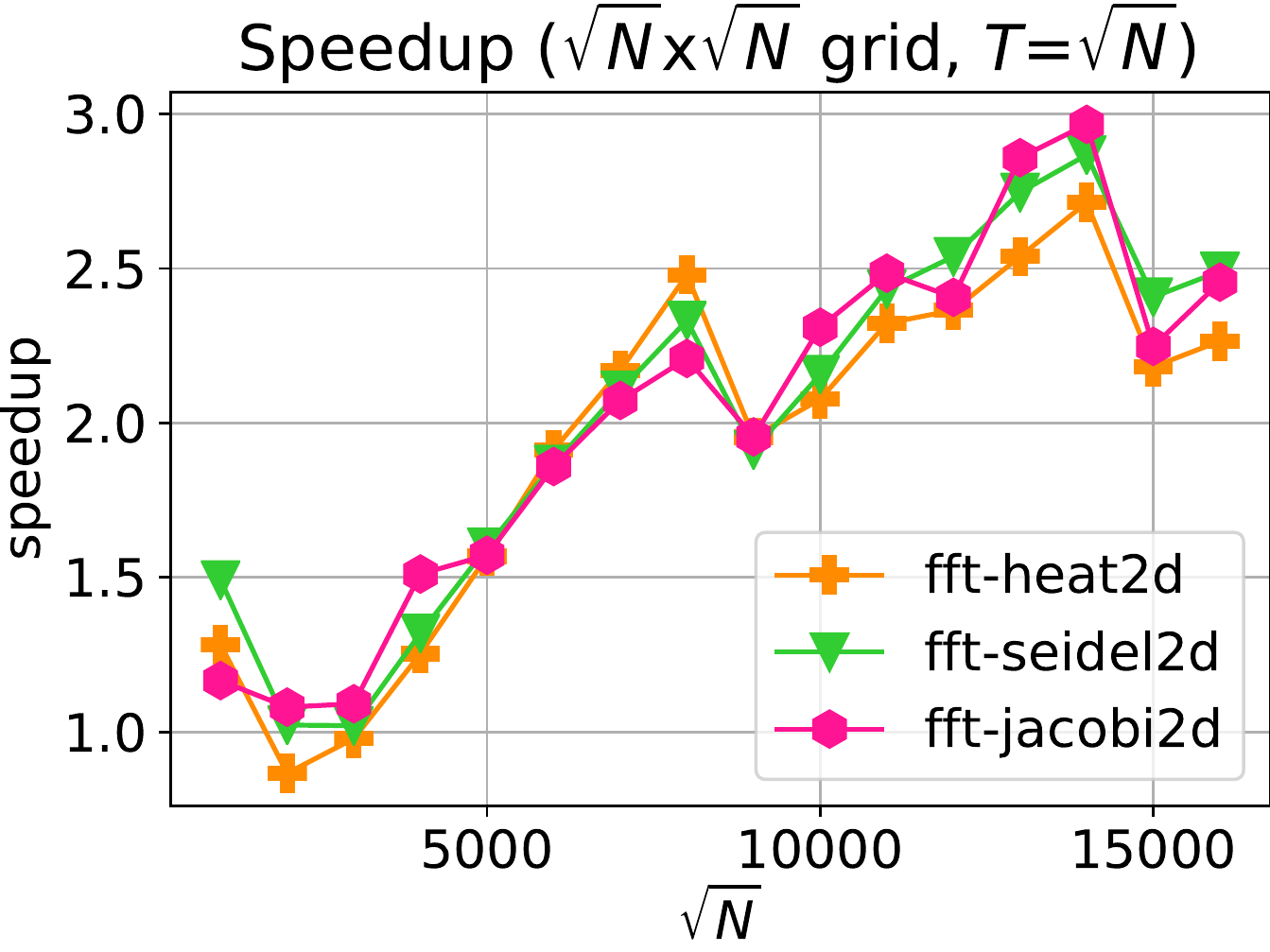}{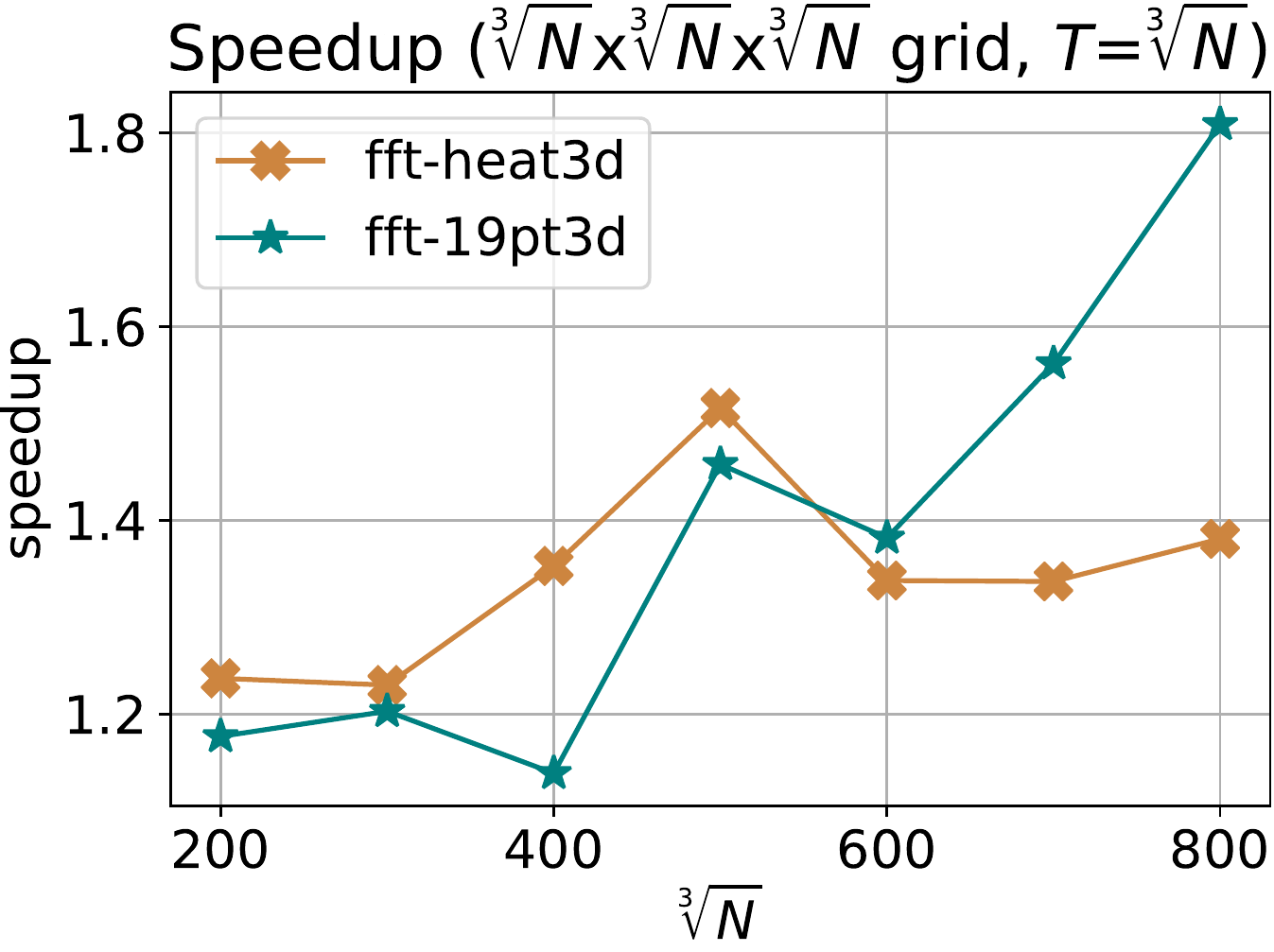} 
$(x)$ & $(xi)$ & $(xii)$ \\\hline


\end{tabular}
\figcaption{Performance comparison of our FFT-based \highlight{aperiodic} algorithms with the existing best stencil programs.}
\label{fig:plots-aperiodic-stencil-algorithms}
\vgap{}\vgap{}\vgap{}
\end{table*}

\hide{
\begin{table*}
\centering
\begin{colortabular}{ | l | l | c | c | c |}
\hline                       

\multicolumn{2}{l|}{Experiment} & Tiled algorithm & Our FFT-based algorithm & Improvement factor \\ \hline

\multirow{2}{*}{Periodic} & Work &  $\Th{T}$ & $\Th{\log T}$ & $\Th{T / \log T}$ \\

($N$ fixed, $T$ varied) & Span & $\Th{T}$ & $\Th{\log T}$ & $\Th{T / \log T}$ \\ \hline

\multirow{2}{*}{Aperiodic} & Work &  $\Th{N^{1+1/d}}$ & $\Th{N \log^2 N}$ & $\Th{N^{1/d} / \log^2 N}$ \\

($N^{1/d} \times \cdots \times N^{1/d}$, $T=N^{1/d}$ varied) & Span & $\Th{N^{1/d} \log N}$ & $\Th{N^{1/d} (\log N)^{[d > 1]} }$ & $\Th{(\log N)^{[d = 1]}}$ \\ \hline

\multirow{2}{*}{Aperiodic} & Work &  $\Th{T}$ & $\Th{T \log^2 T}$ & $\Th{1 / \log^2 T}$ \\

($N$ fixed, $T$ varied) & Span & $\Th{T}$ & $\Th{T}$ & $\Th{1}$ \\

\end{colortabular}
\caption{Theoretical predictions from our FFT-based stencil algorithms.}
\vspace{-0.3cm}
\label{tab:theoretical-predictions}
\end{table*}

\begin{table*}
\centering
\begin{colortabular}{ | c | c | c |}
\hline        
Dim & Runtime plot function for \pluto{} & Runtime plot function for FFT \\ \hline
\rowcolor{lightred}\multicolumn{3}{|l|}{Periodic ($N$ fixed)} \\ \hline
1D & \alpha T & \log T \\
2D & \alpha T & \log T \\
3D & \alpha T & \log T \\
\rowcolor{lightred}\multicolumn{3}{|l|}{Aperiodic ($T = N^{1/d}$)} \\ \hline
1D & \alpha N^{1 + 1} & \sigma N \log (\sigma N) \log N \\
2D & \alpha N^{1 + 1/2} & \sigma N \log (\sigma N) \log N \\
3D & \alpha N^{1 + 1/3} & \sigma N \log (\sigma N) \log N \\
\rowcolor{lightred}\multicolumn{3}{|l|}{Aperiodic ($N$ fixed)} \\ \hline
1D & \alpha T & \sigma T \log (\sigma T) \log T \\
2D & \alpha T & \sigma T \log (\sigma T) \log T \\
3D & \alpha T & \sigma T \log (\sigma T) \log T \\
\end{colortabular}
\label{tab:theoretical-preductions}
\caption{Theoretical predictions, where $\alpha$ is the number of points used by the stencil, and $\sigma$ is the stencil's radius.}
\end{table*}
}
\section{Conclusion}
\label{sec:conclusion}

In this paper, we presented a pair of efficient algorithms based on \highlight{fast Fourier transforms} for performing \highlight{linear} stencil computations with periodic and aperiodic boundary conditions.  These are the first high-performing $\oh{NT}$-work\footnote{$N$ and $T$ are the spatial grid size and the number of timesteps, respectively.} stencil algorithms of significant generality for computing the spatial grid values at the final timestep from the input grid without explicitly generating values for most of the intermediate timesteps. Our stencil algorithms improve computational complexity and parallel running time bounds over the state-of-the-art stencil algorithms by a \highlight{polynomial factor}. Experimental results show that implementations of our algorithms run orders of magnitude faster than state-of-the-art implementations for periodic stencils and 1.3$\times$ to 8.5$\times$ faster for aperiodic stencils, while exhibiting no significant loss in numerical accuracy from floating point arithmetic.

A few interesting problems that one could aim to solve in the future include: $(1)$ designing efficient algorithms for certain classes of \textit{nonlinear} stencils and stencils with \textit{conditionals}, $(2)$ designing \textit{low-span} algorithms for aperiodic stencils, and $(3)$ designing algorithms to approximate \textit{inhomogenous} stencils.

\clearpage
\appendix
\begin{figure}[ht!]
\includegraphics[width=0.32\textwidth, height=3.7cm, clip = true]{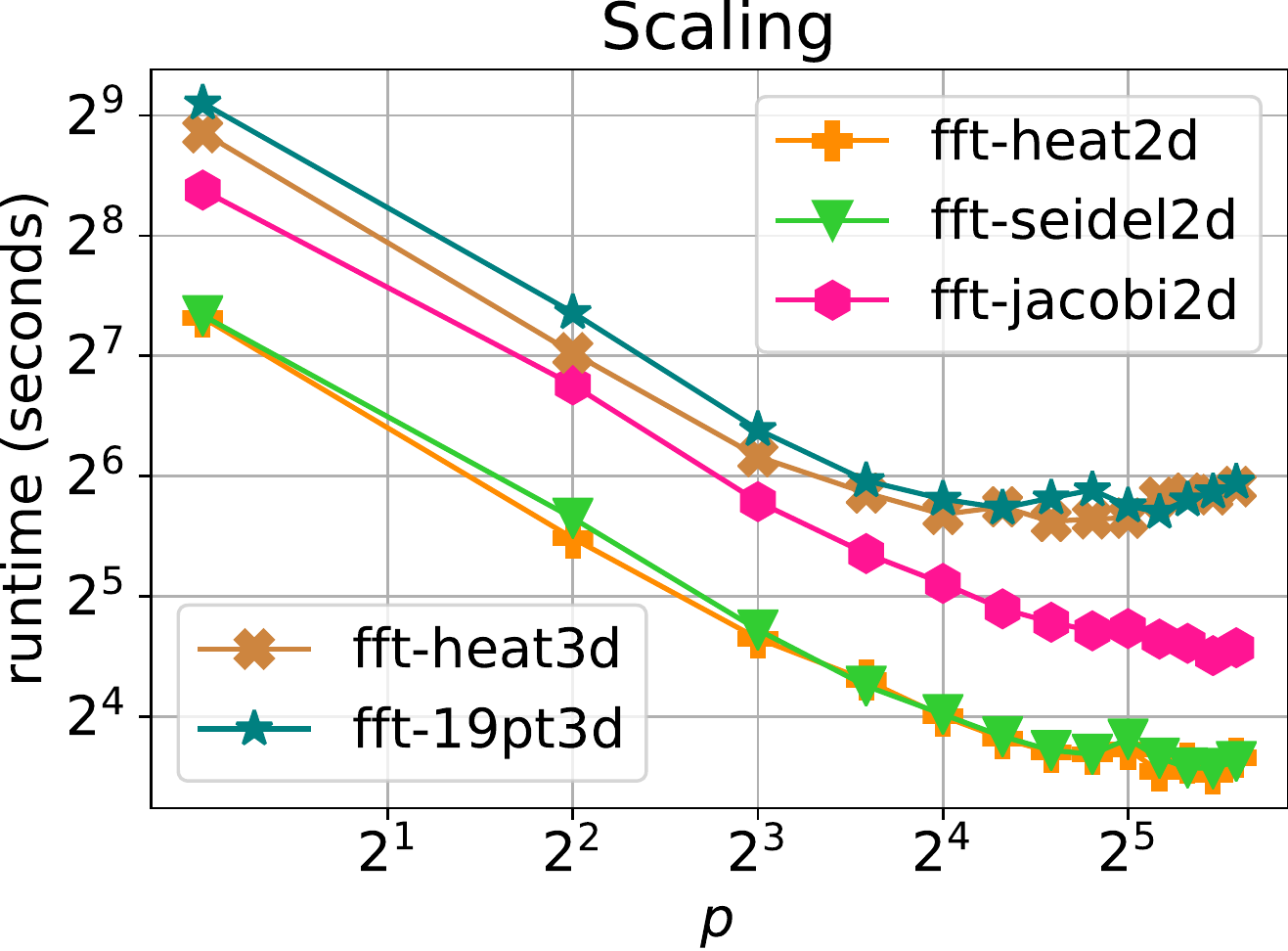}
\vgap{}
\caption{Scalability of our aperiodic algorithms for Experiment 2 on an SKX node.}
\label{fig:appendix-plots-scalability-aperiodic-skx-exp2}
\vgap{}\vgap{}\vgap{}
\end{figure}

\begin{table*}[!ht]
\begin{tabular}{c c c}
\hline\\[-1.5ex]

\multicolumn{3}{c}{\textbf{KNL Node}} \\[0.5ex]

\insertplotline{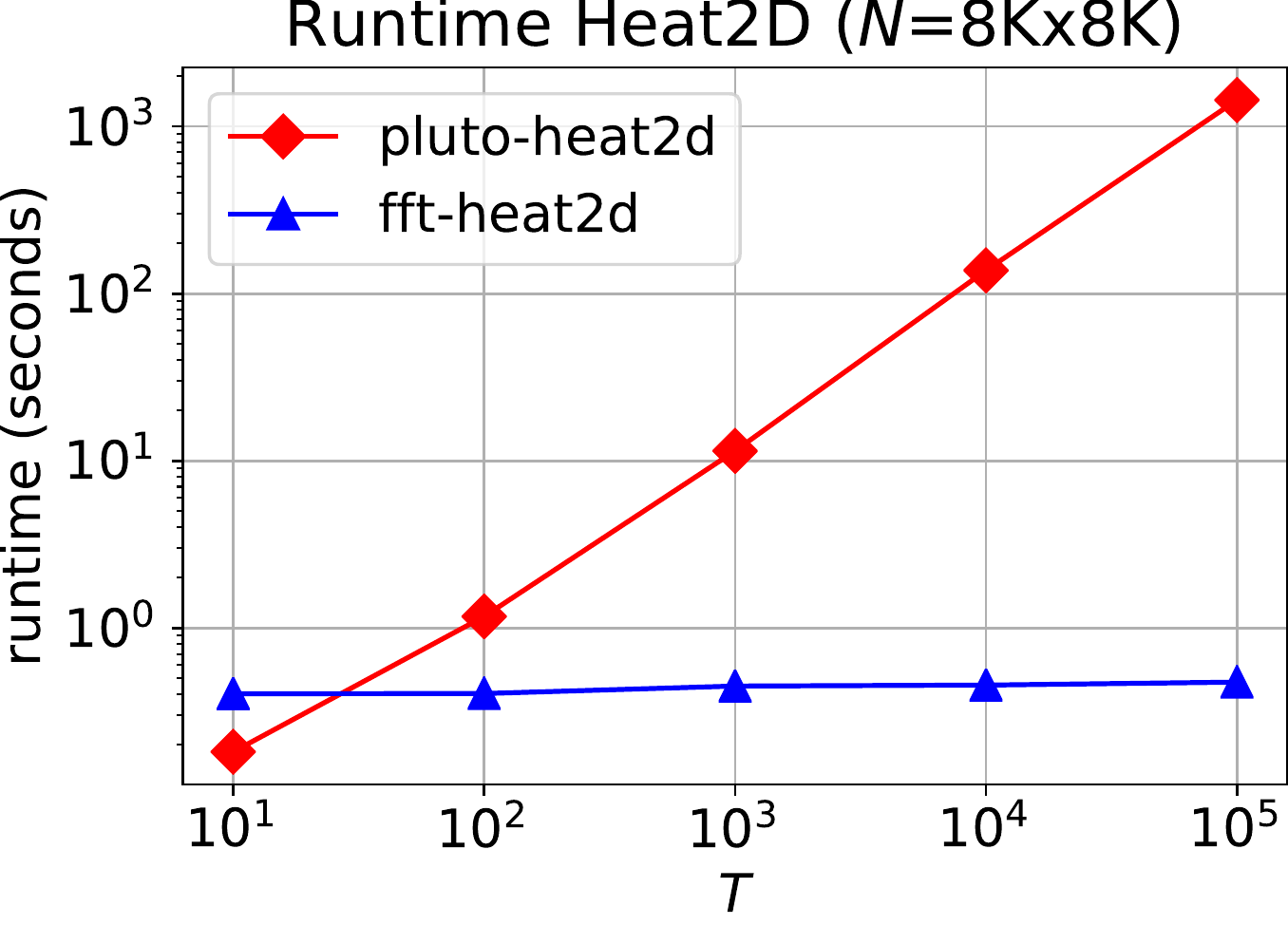}{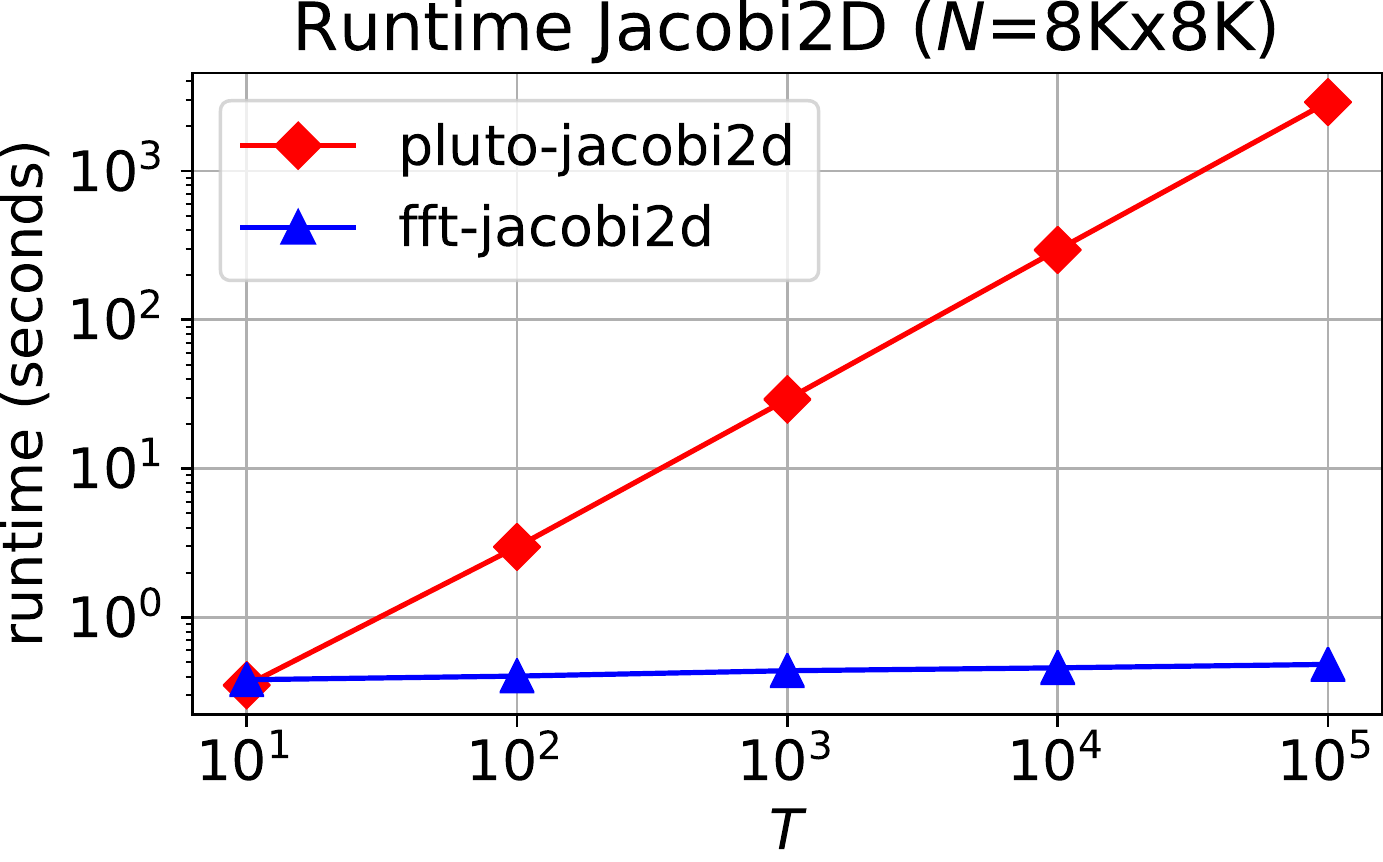}{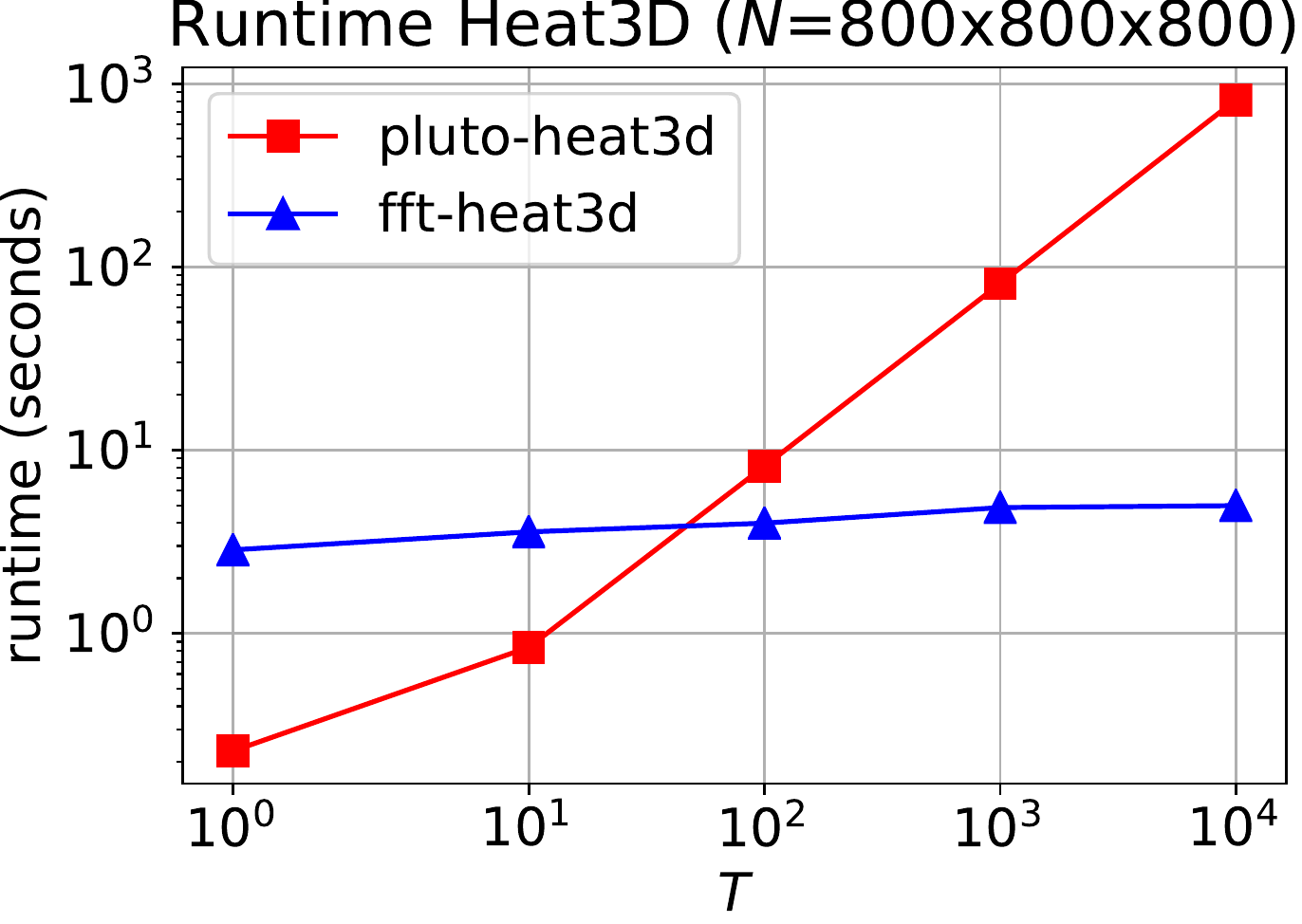}
$(i)$ & $(ii)$ & $(iii)$ \\\hline \rule{0pt}{1ex}\\[-1.5ex] 

\multicolumn{3}{c}{\textbf{SKX Node}} \\[0.5ex]

\insertplotline{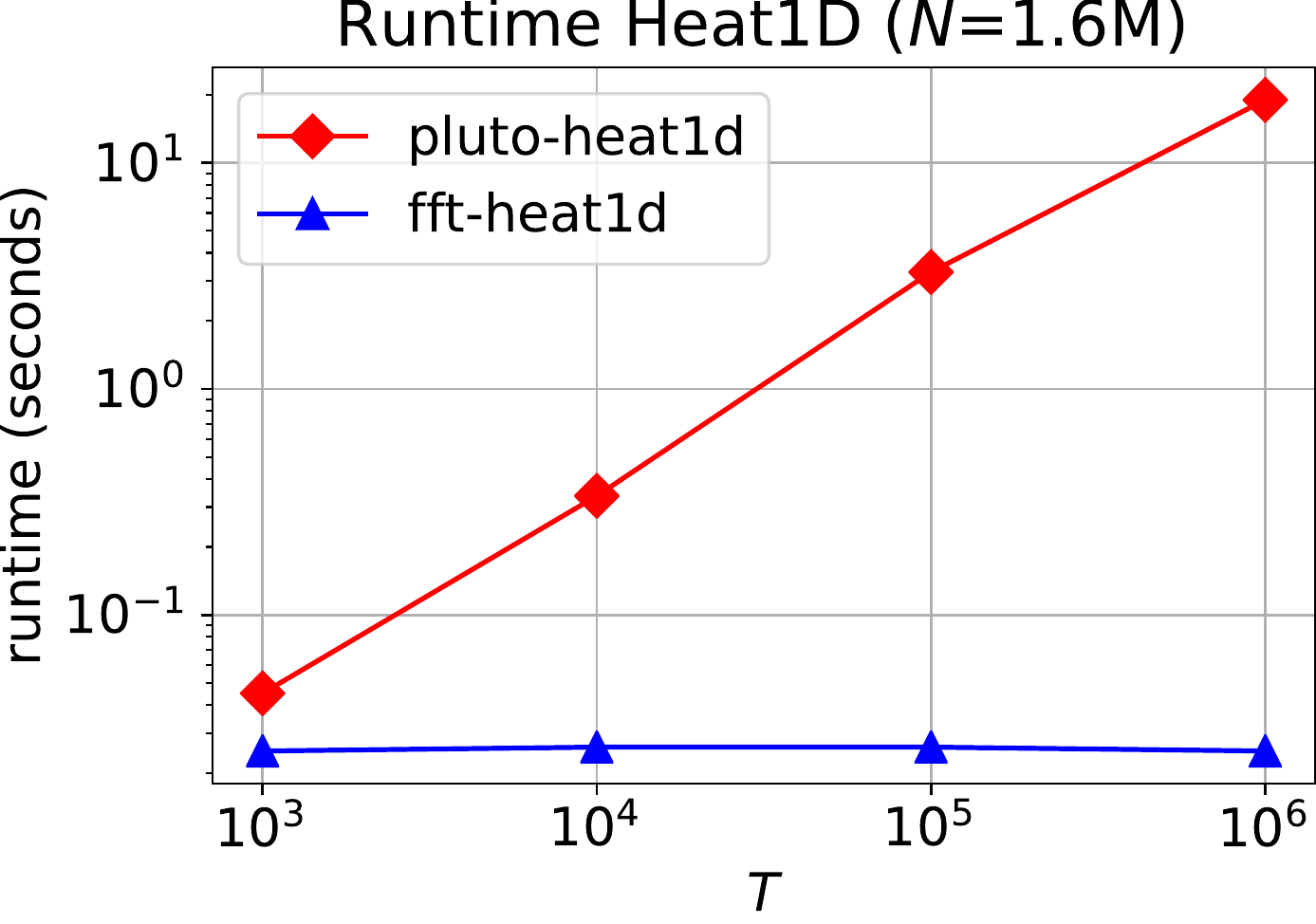}{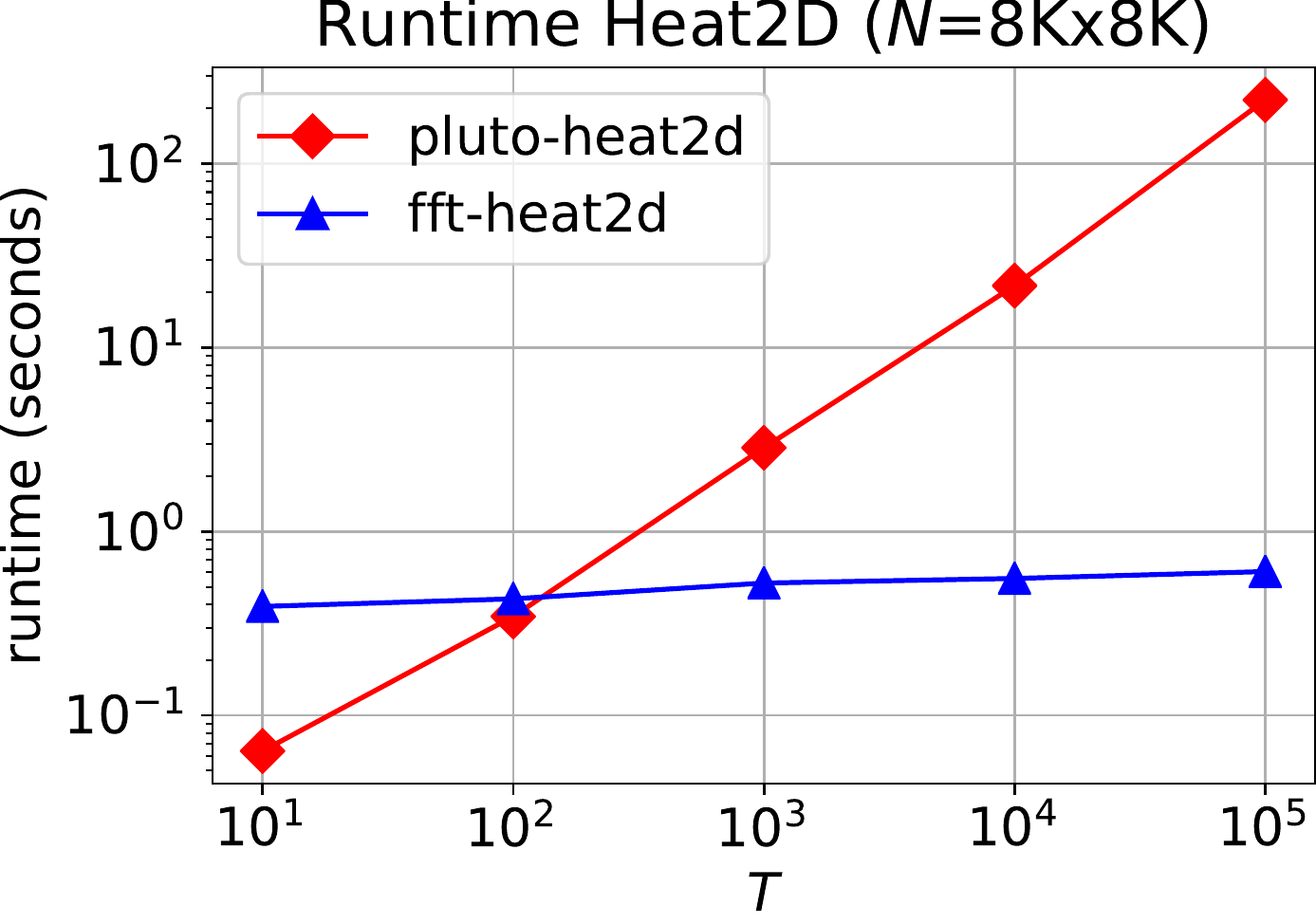}{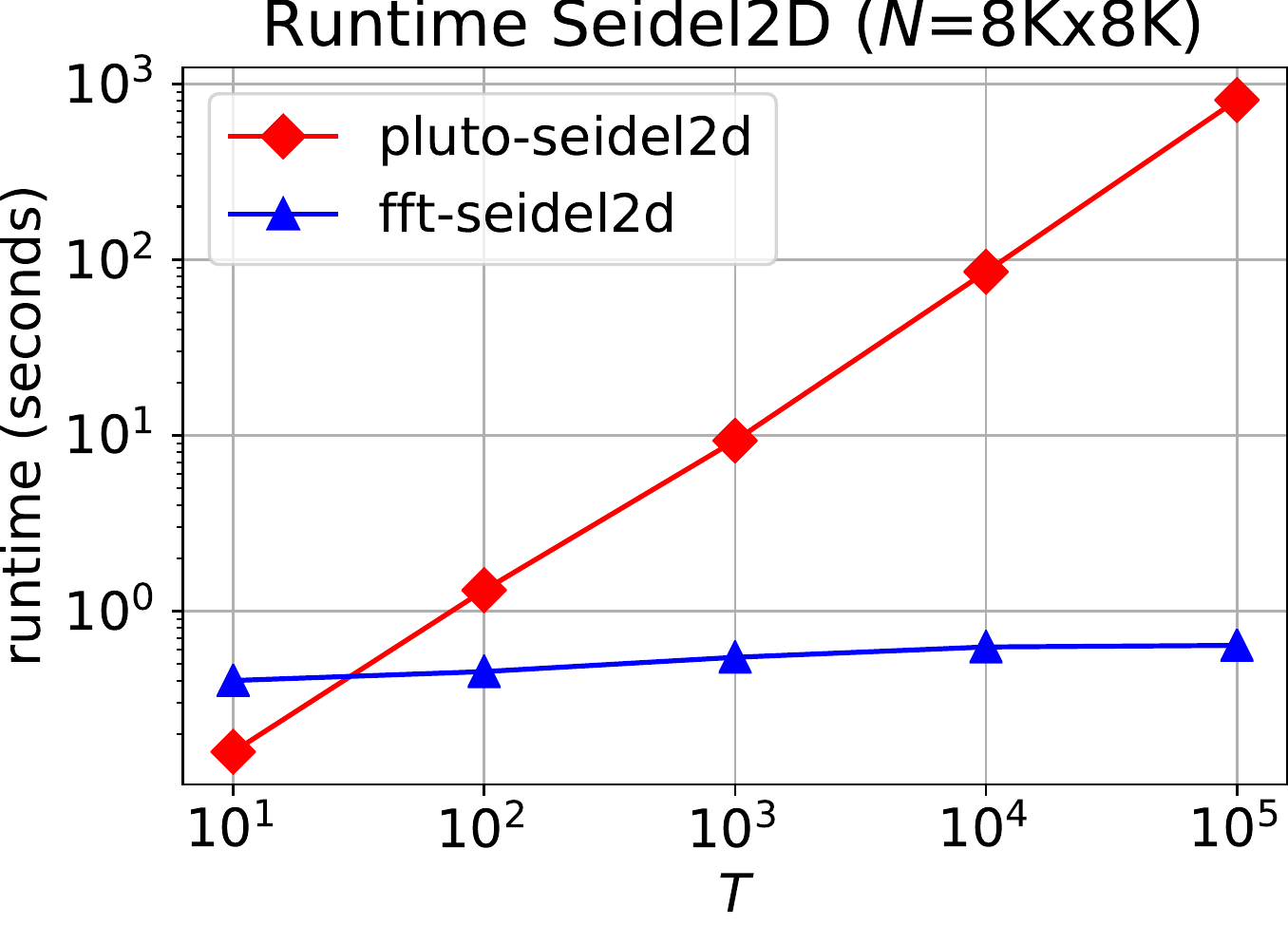}
$(iv)$ & $(v)$ & $(vi)$ \\[0.5ex]

\insertplotline{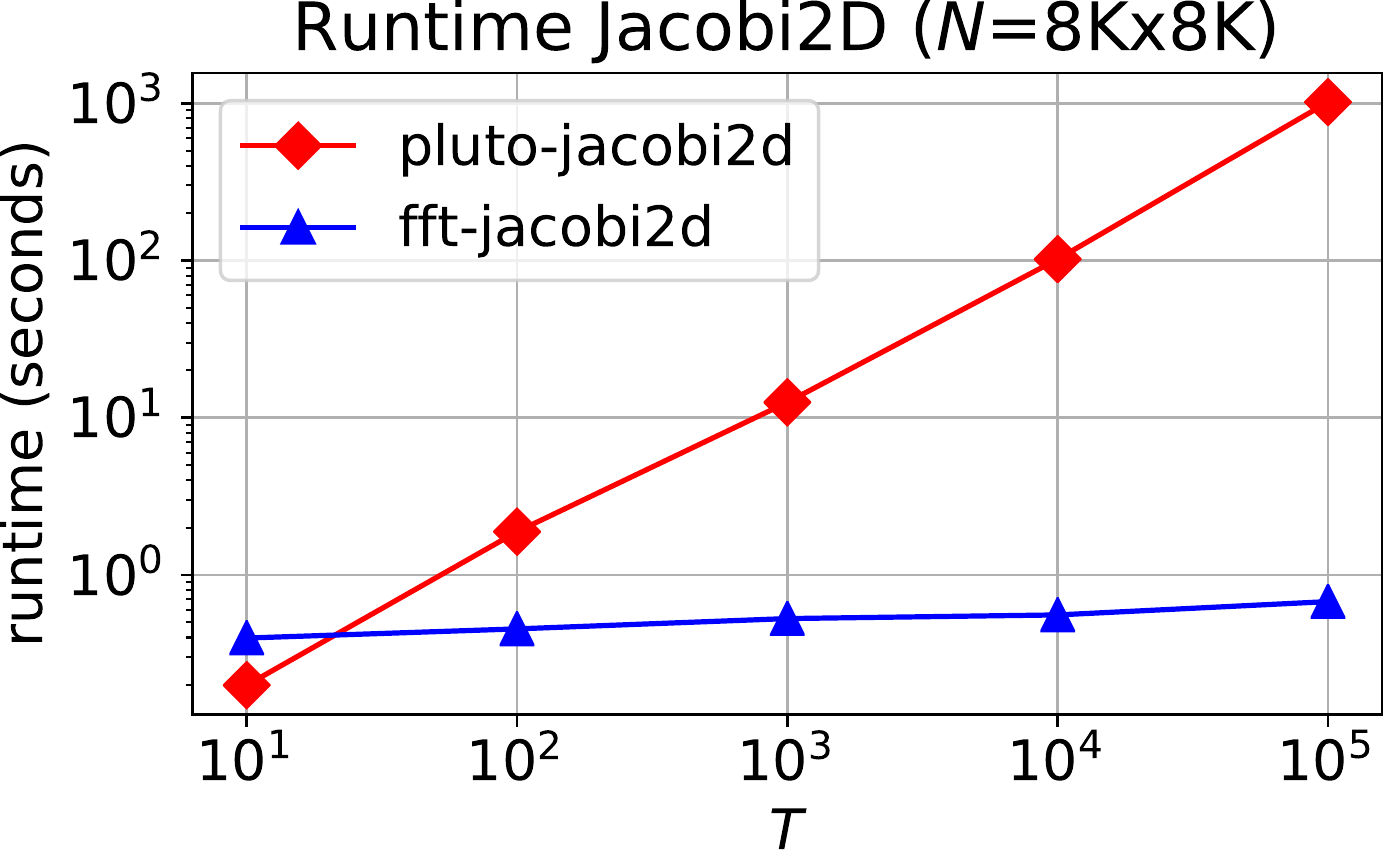}{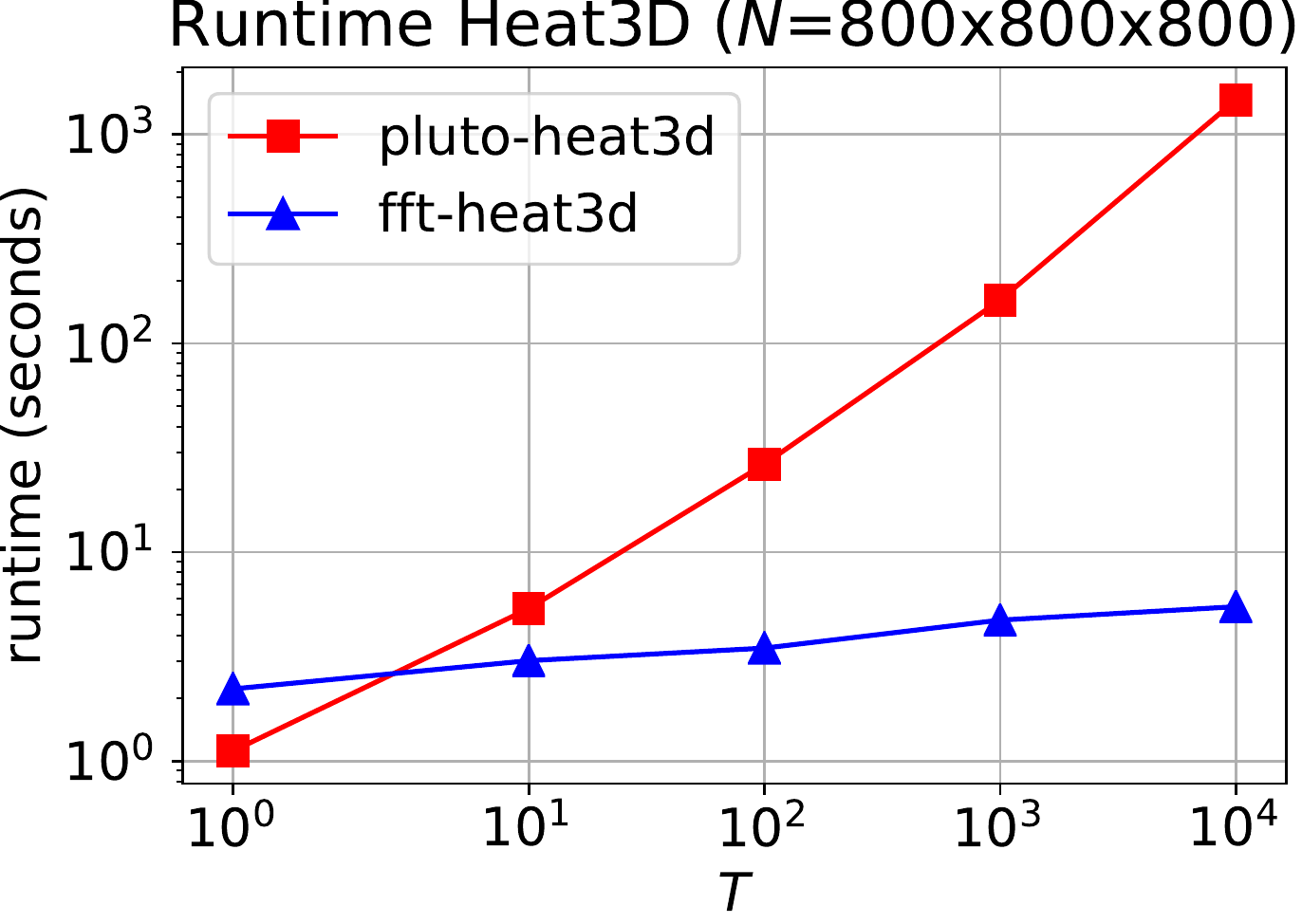}{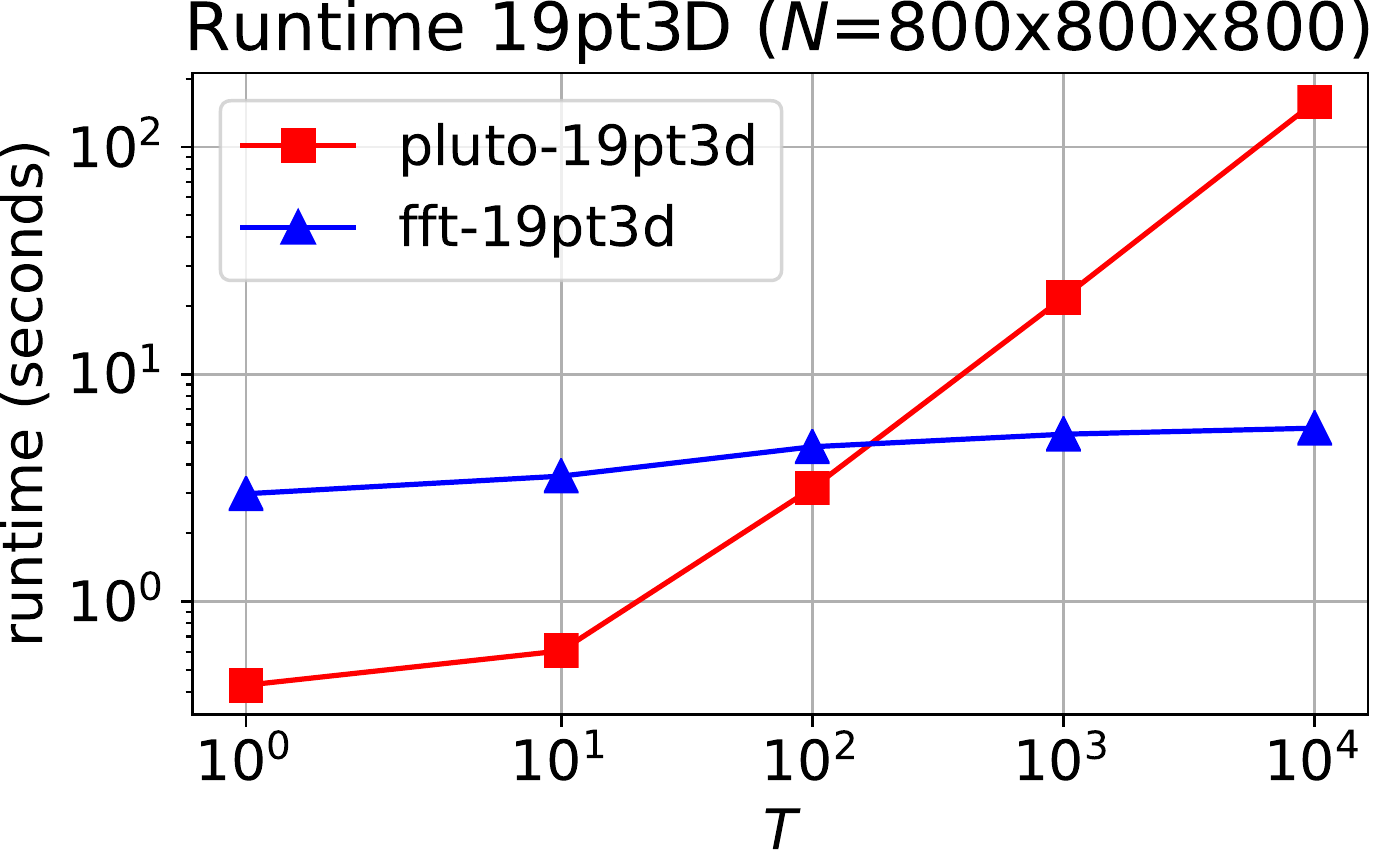}
$(vii)$ & $(viii)$ & $(ix)$ \\

{\includegraphics[width=0.32\textwidth, height=3.7cm, clip = true]{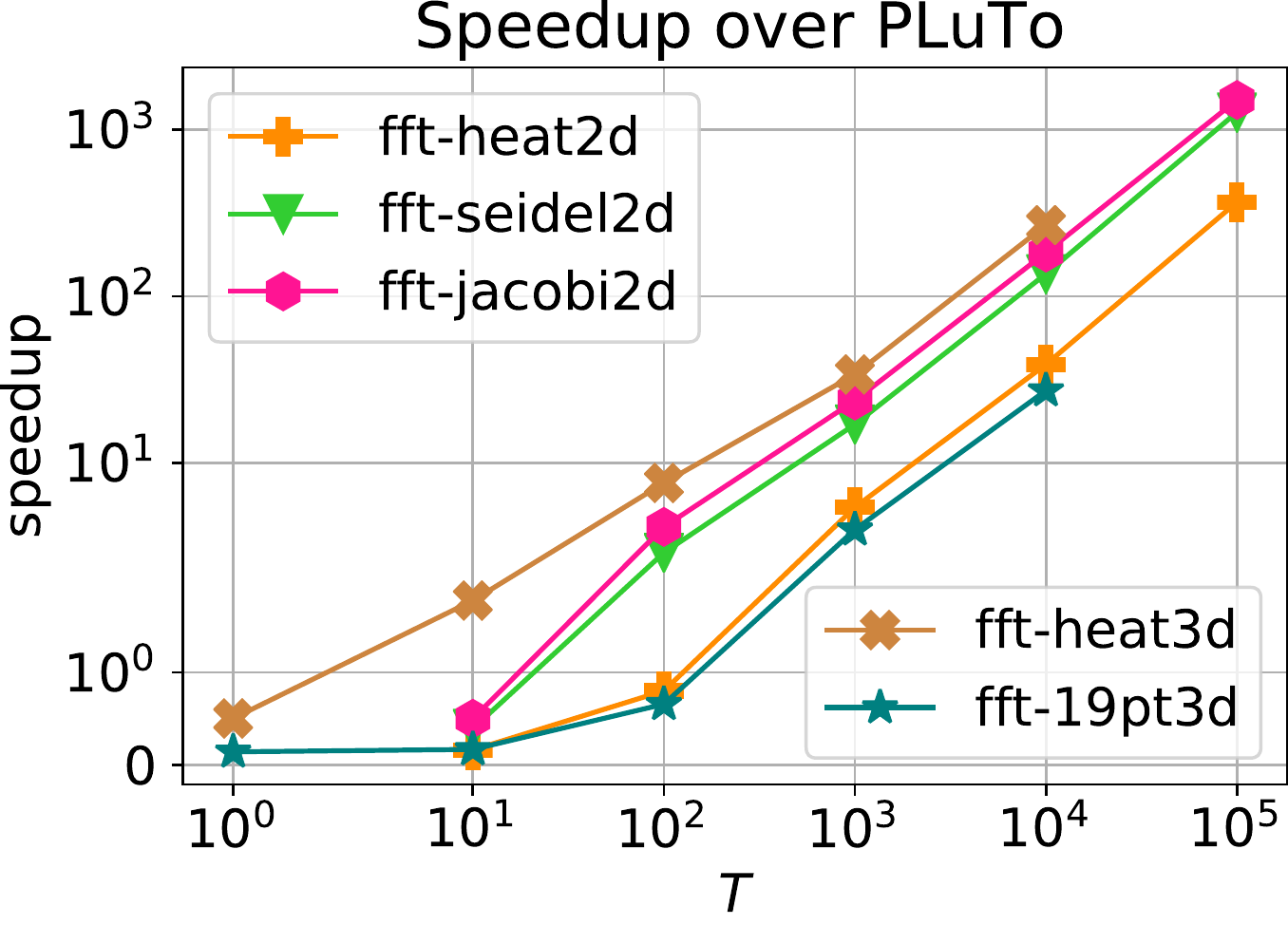}} & {\includegraphics[width=0.32\textwidth, height=3.7cm, clip = true]{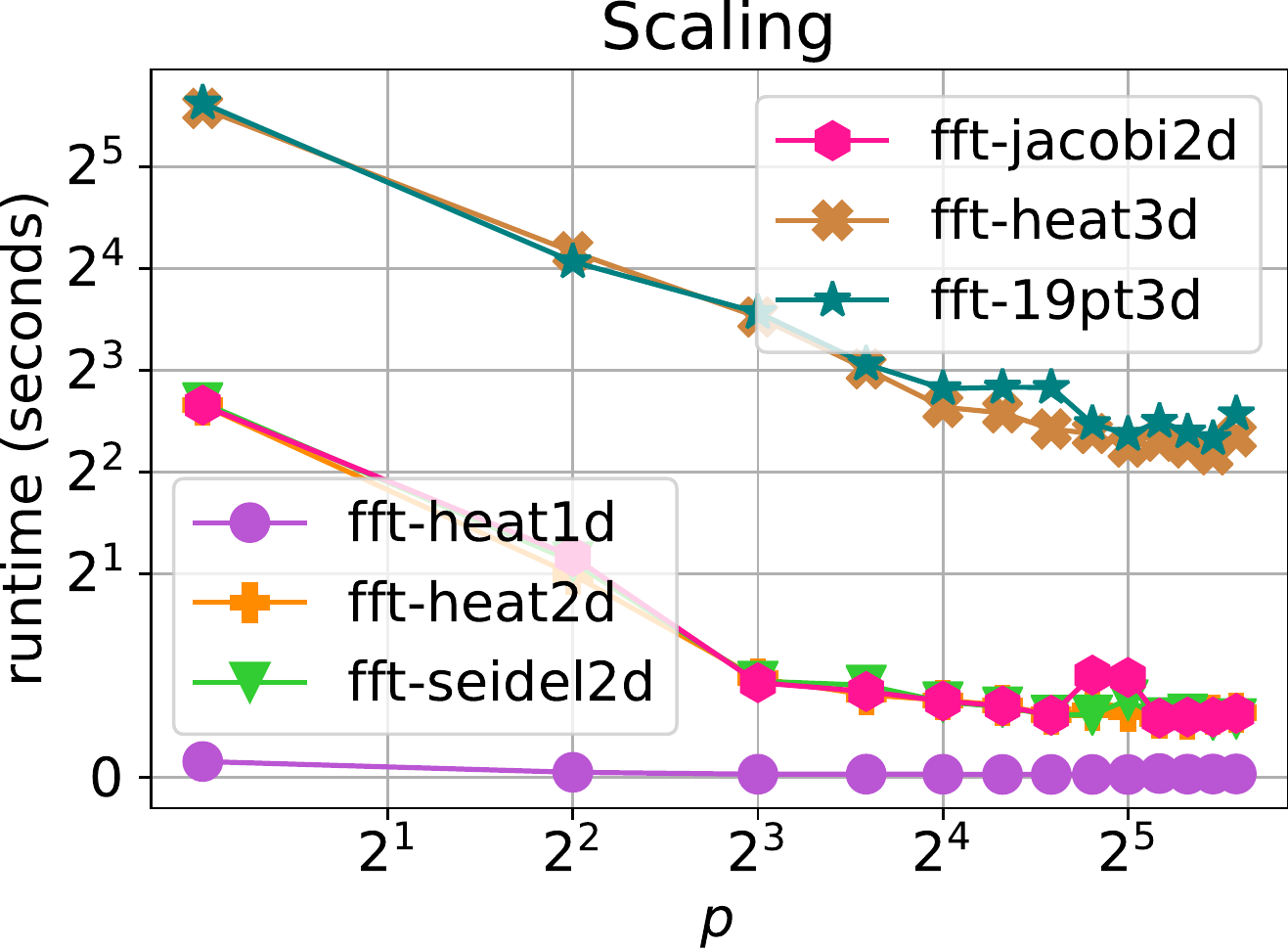}} &  \\
$(x)$ & $(xi)$ & \\\hline

\end{tabular}
\figcaption{Performance comparison of our FFT-based \highlight{periodic} algorithms with the existing best stencil programs.}
\label{fig:appendix-plots-periodic-stencil-algorithms}
\vgap{}\vgap{}\vgap{}
\end{table*}

\begin{table*}[!ht]
\begin{tabular}{ccc}
\hline\\[-1.5ex]

\multicolumn{3}{c}{\textbf{KNL Node, Experiment 1}} \\[0.5ex]

\insertplotline{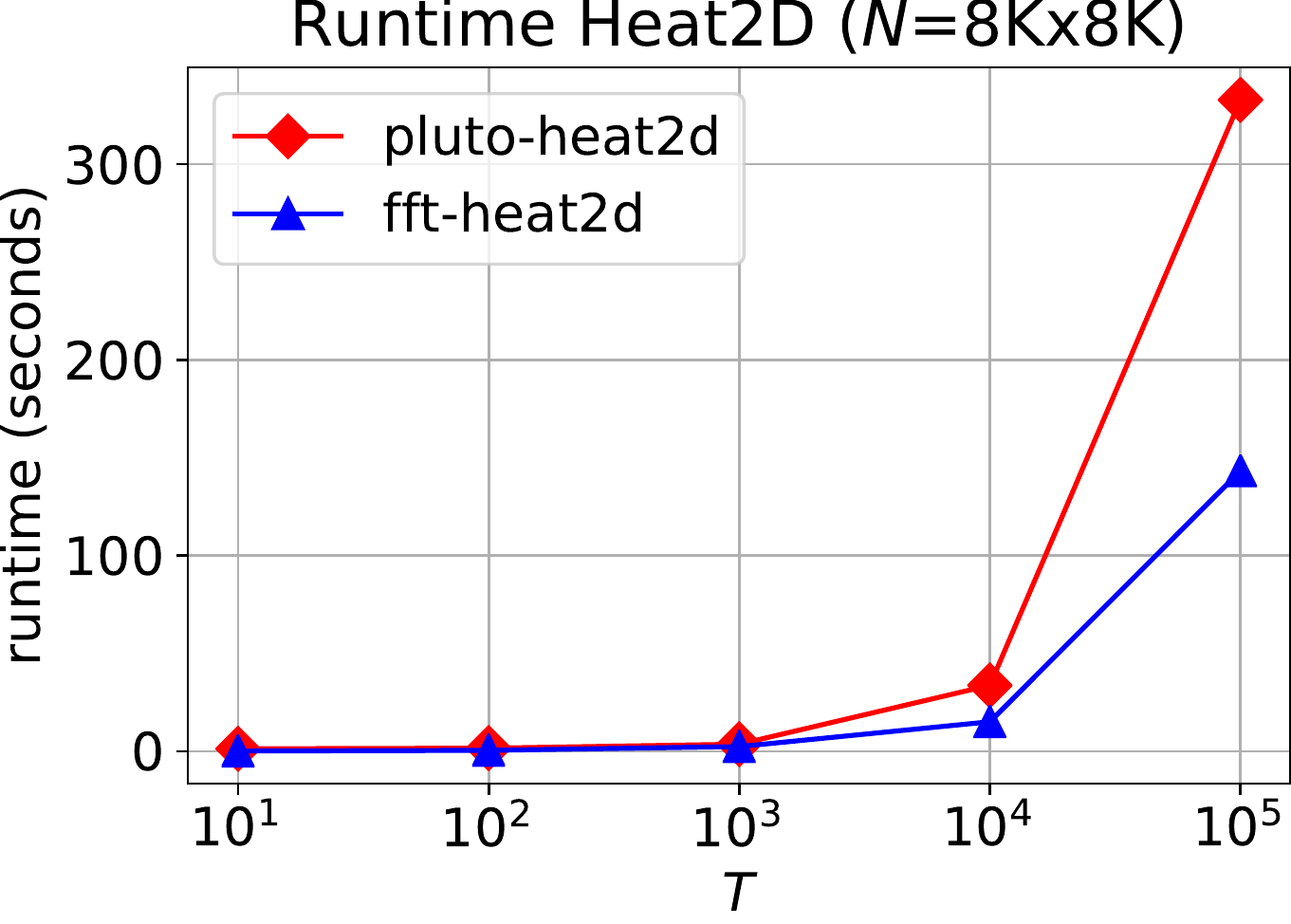}{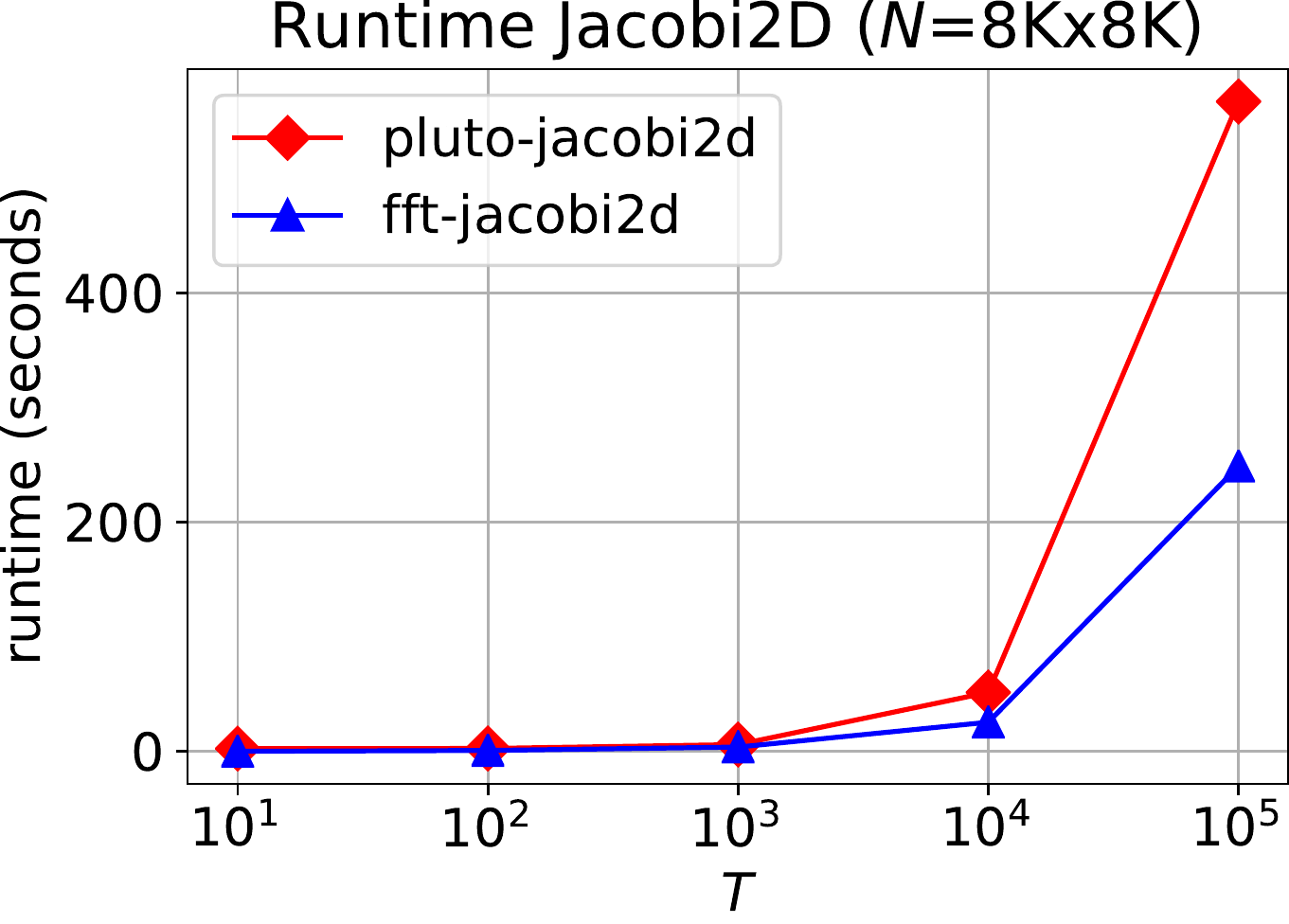}{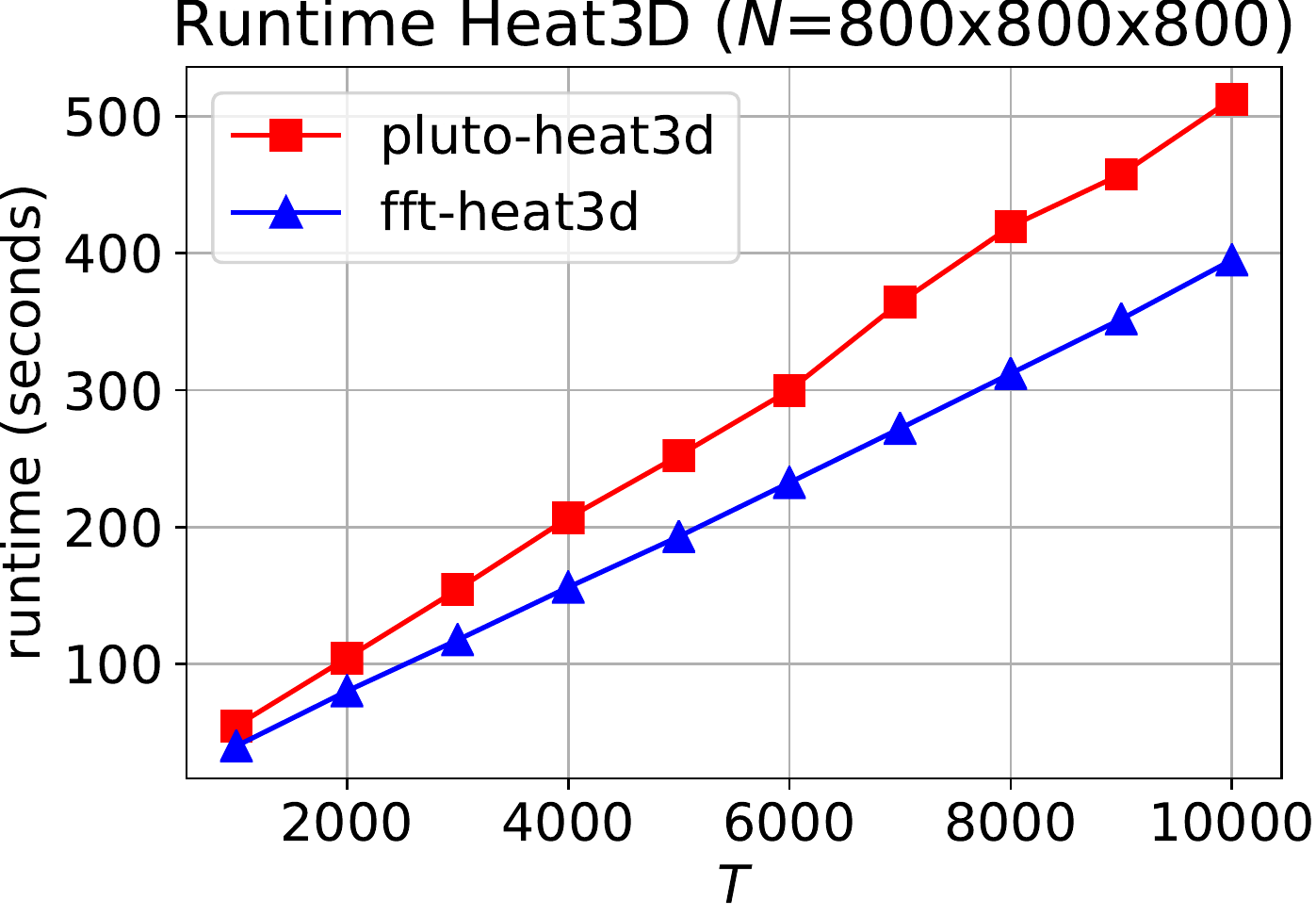}
$(i)$ & $(ii)$ & $(iii)$ \\\hline \rule{0pt}{1ex}\\[-0.5ex] 

\multicolumn{3}{c}{\textbf{KNL Node, Experiment 2}} \\[0.5ex]

\insertplotline{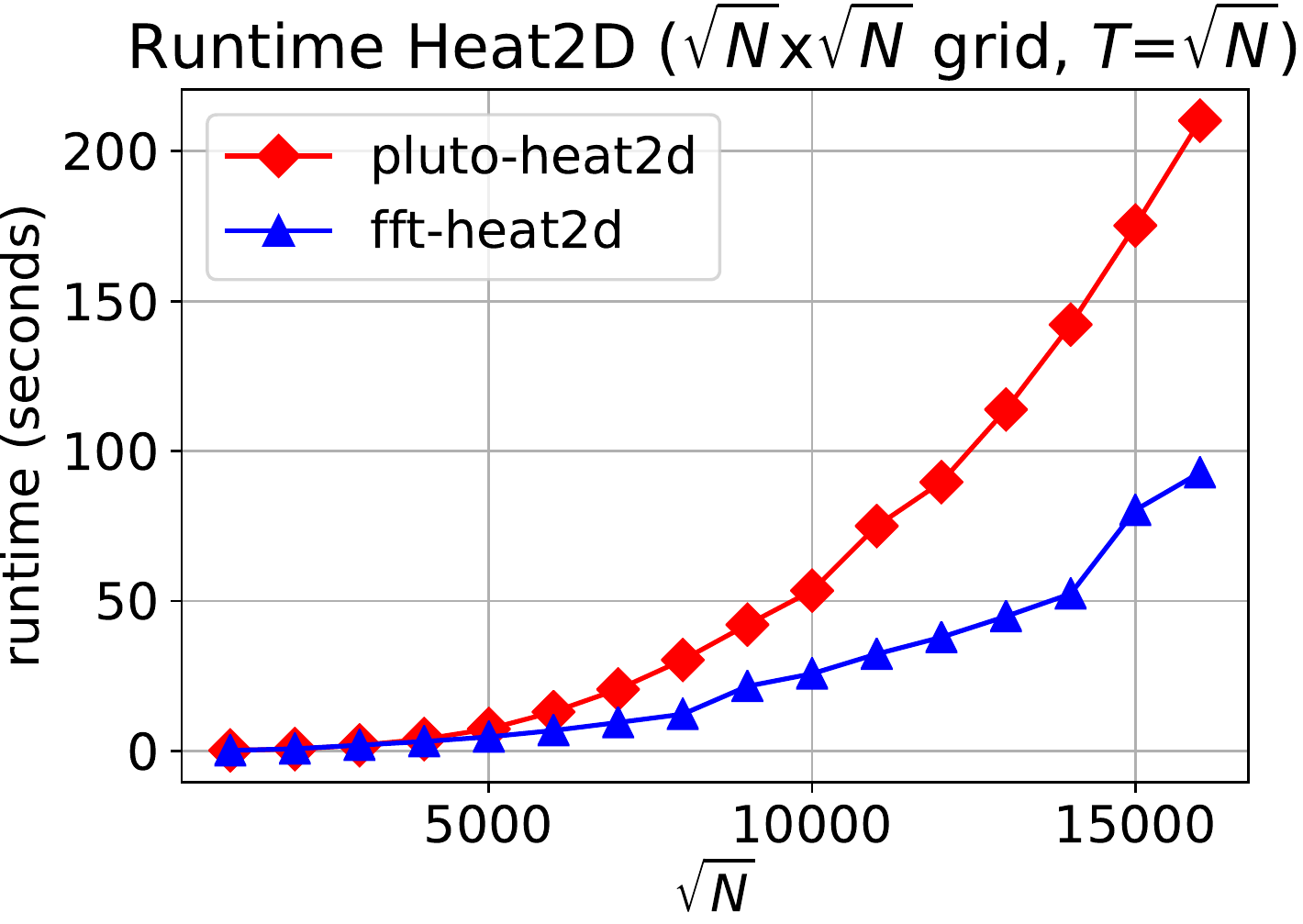}{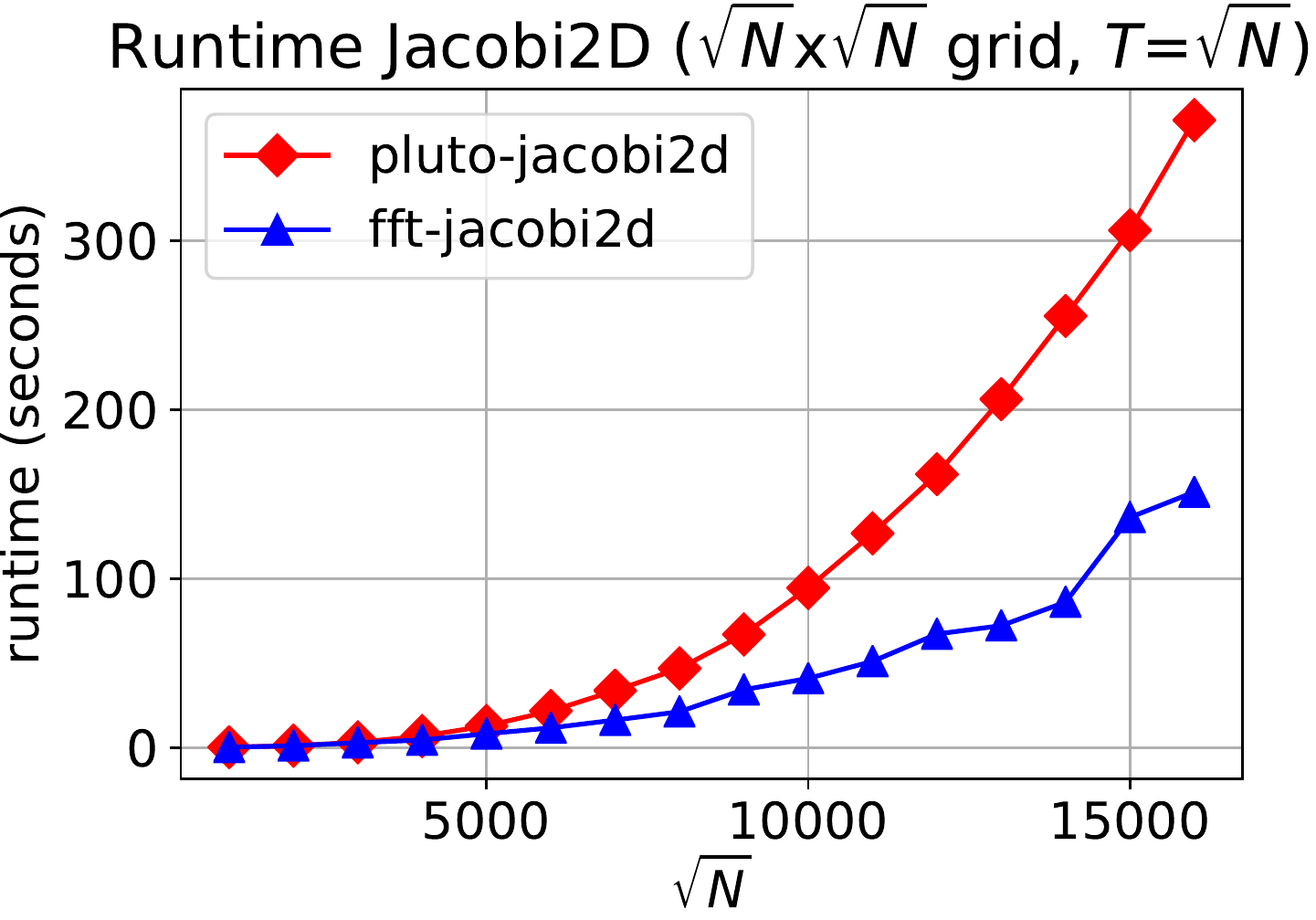}{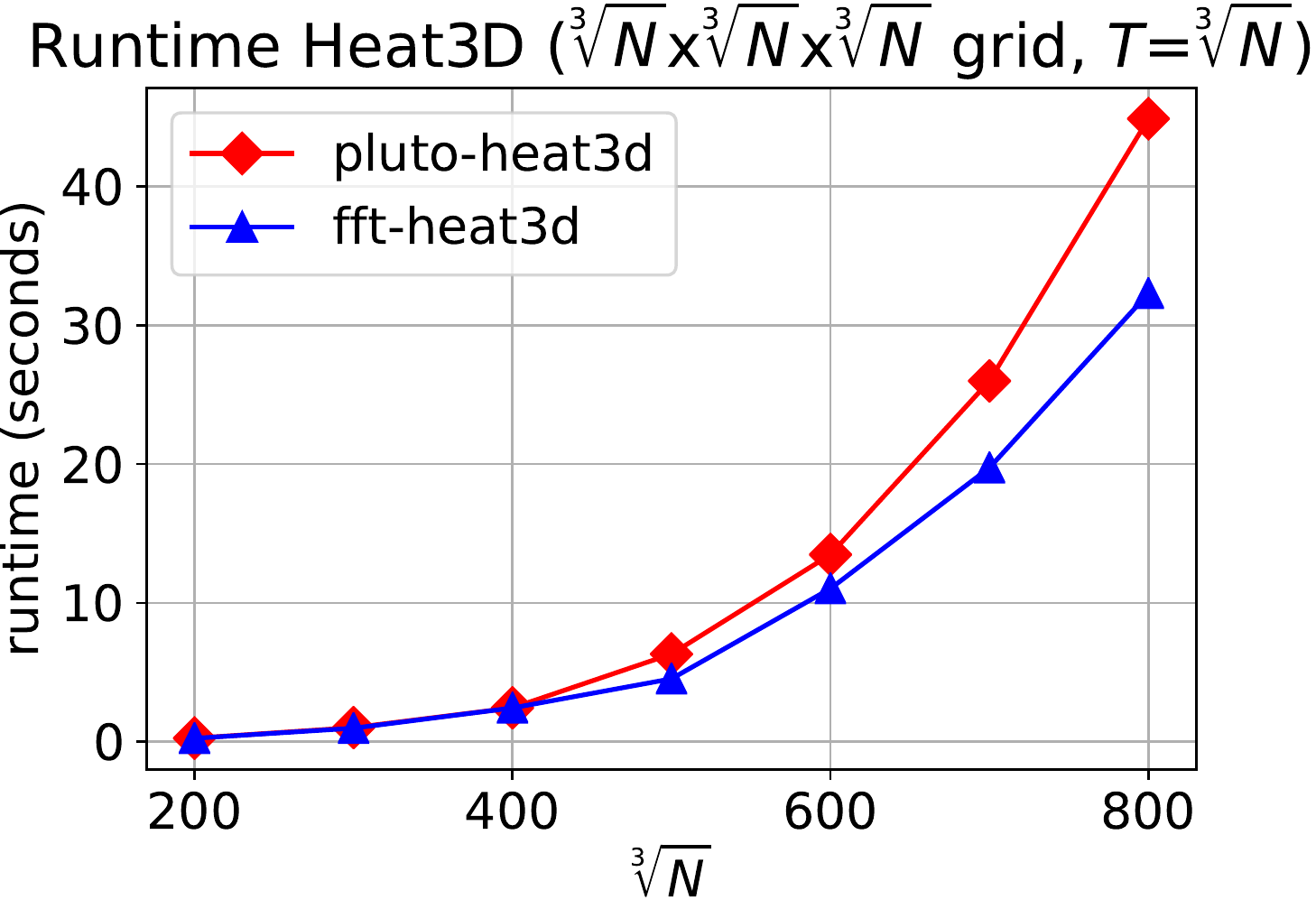}
$(iv)$ & $(v)$ & $(vi)$ \\\hline

\end{tabular}
\figcaption{Performance comparison of our FFT-based \highlight{aperiodic} algorithms with the existing best stencil programs.}
\label{fig:appendix-plots-aperiodic-knl}
\vgap{}\vgap{}\vgap{}
\end{table*}

\begin{table*}[!ht]
\begin{tabular}{ccc}
\hline\\[-1.5ex]

\multicolumn{3}{c}{\textbf{SKX Node, Experiment 1}} \\[0.5ex]

\insertplotline{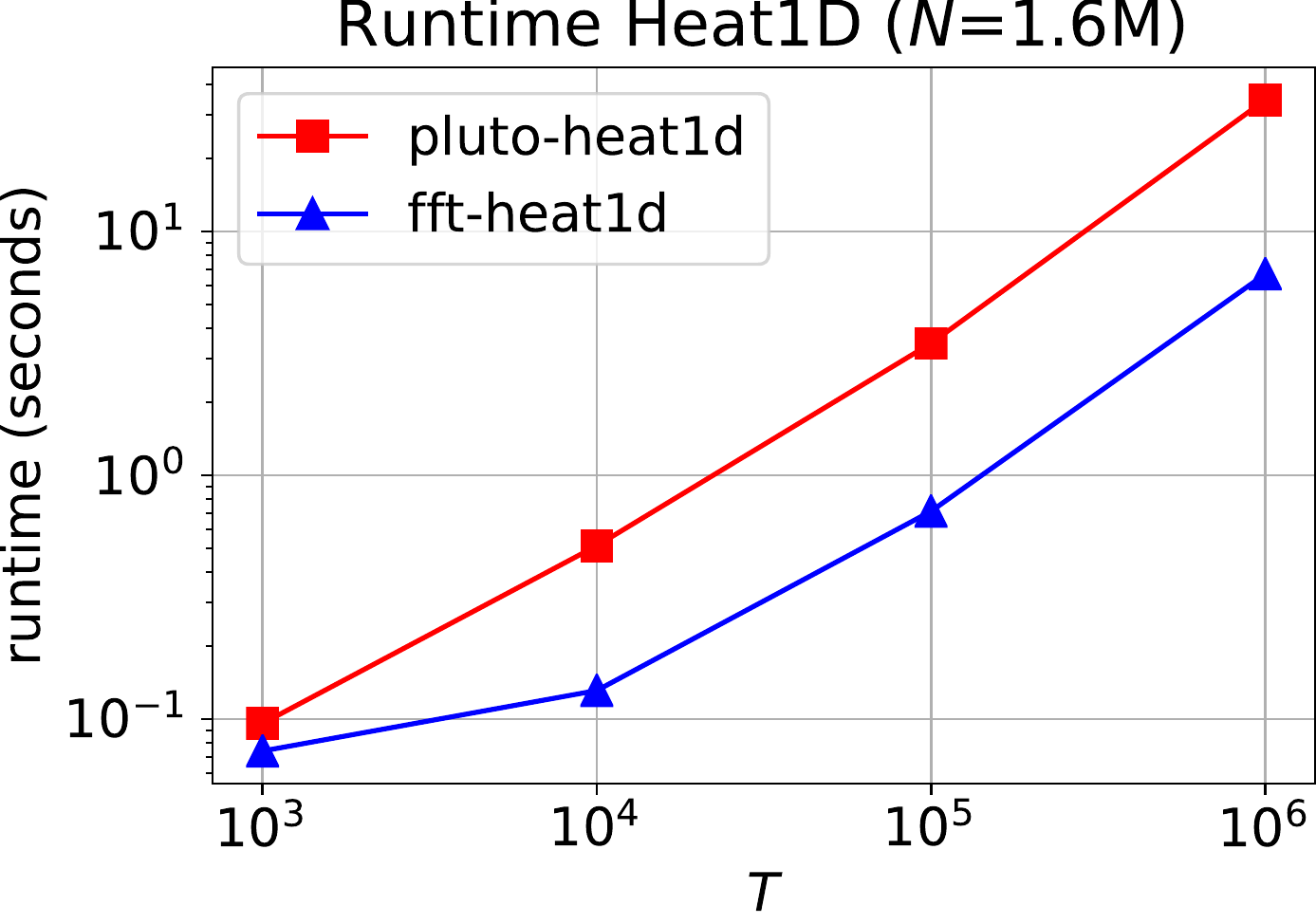}{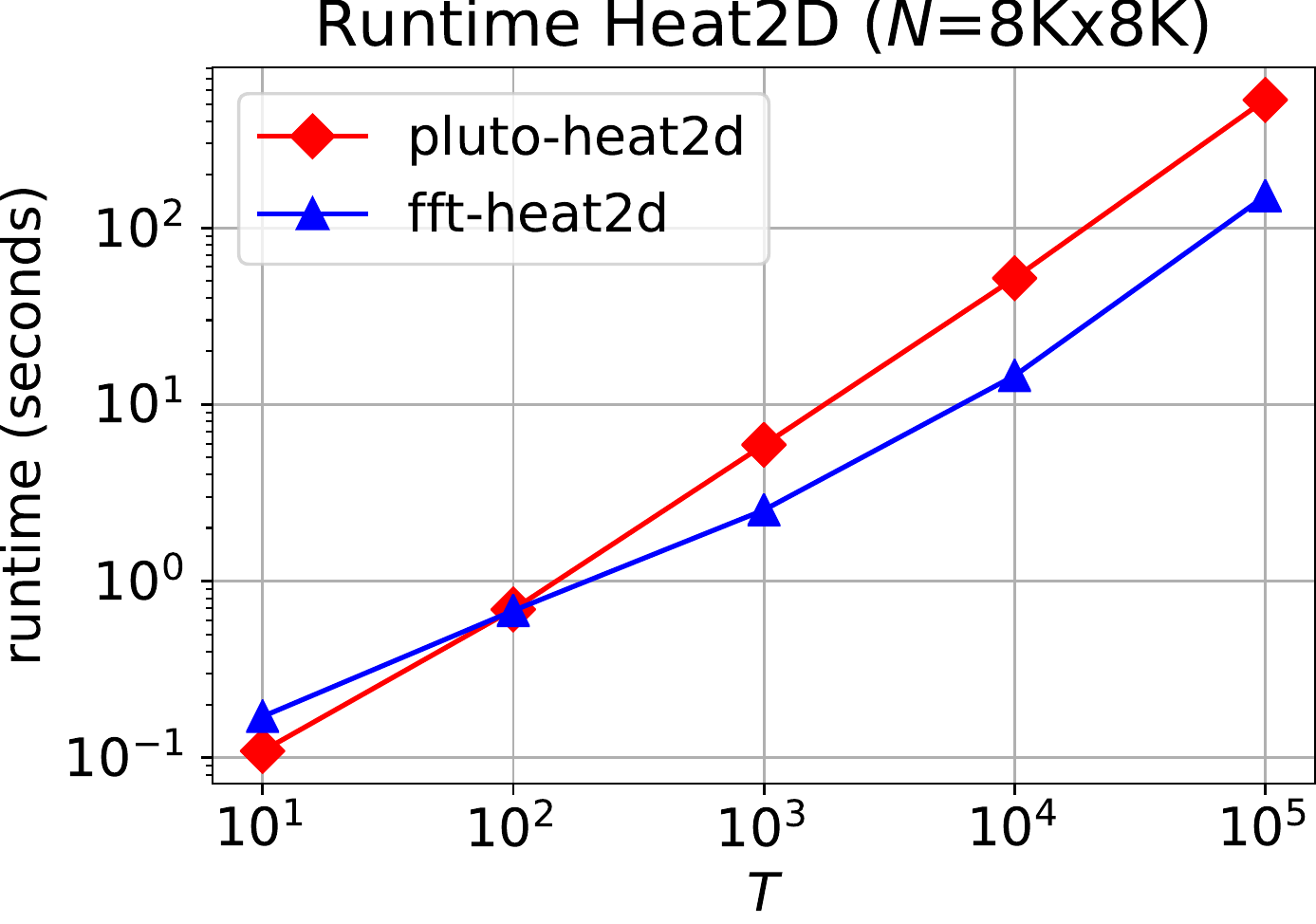}{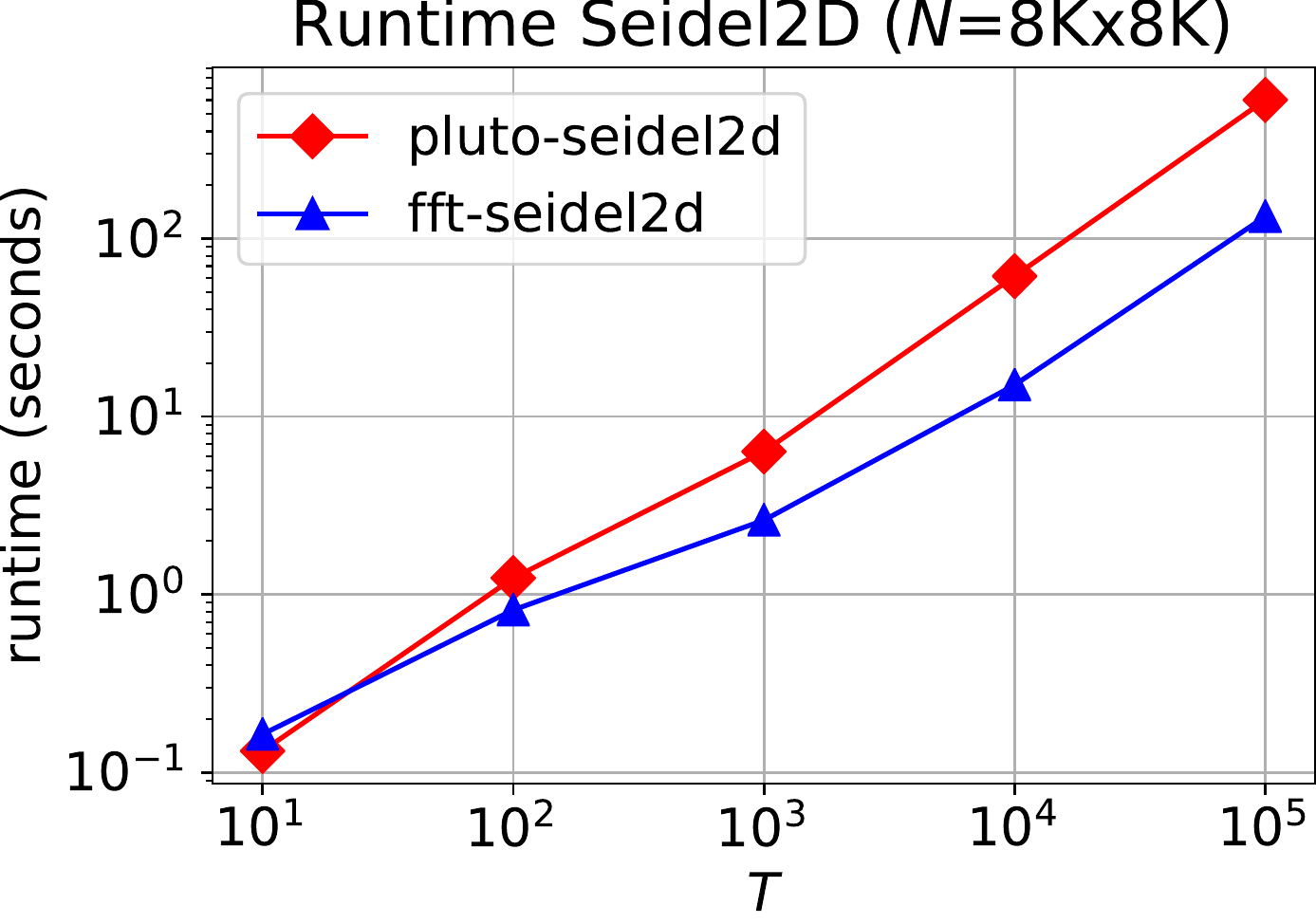}
$(i)$ & $(ii)$ & $(iii)$ \\[0.5ex]

\insertplotline{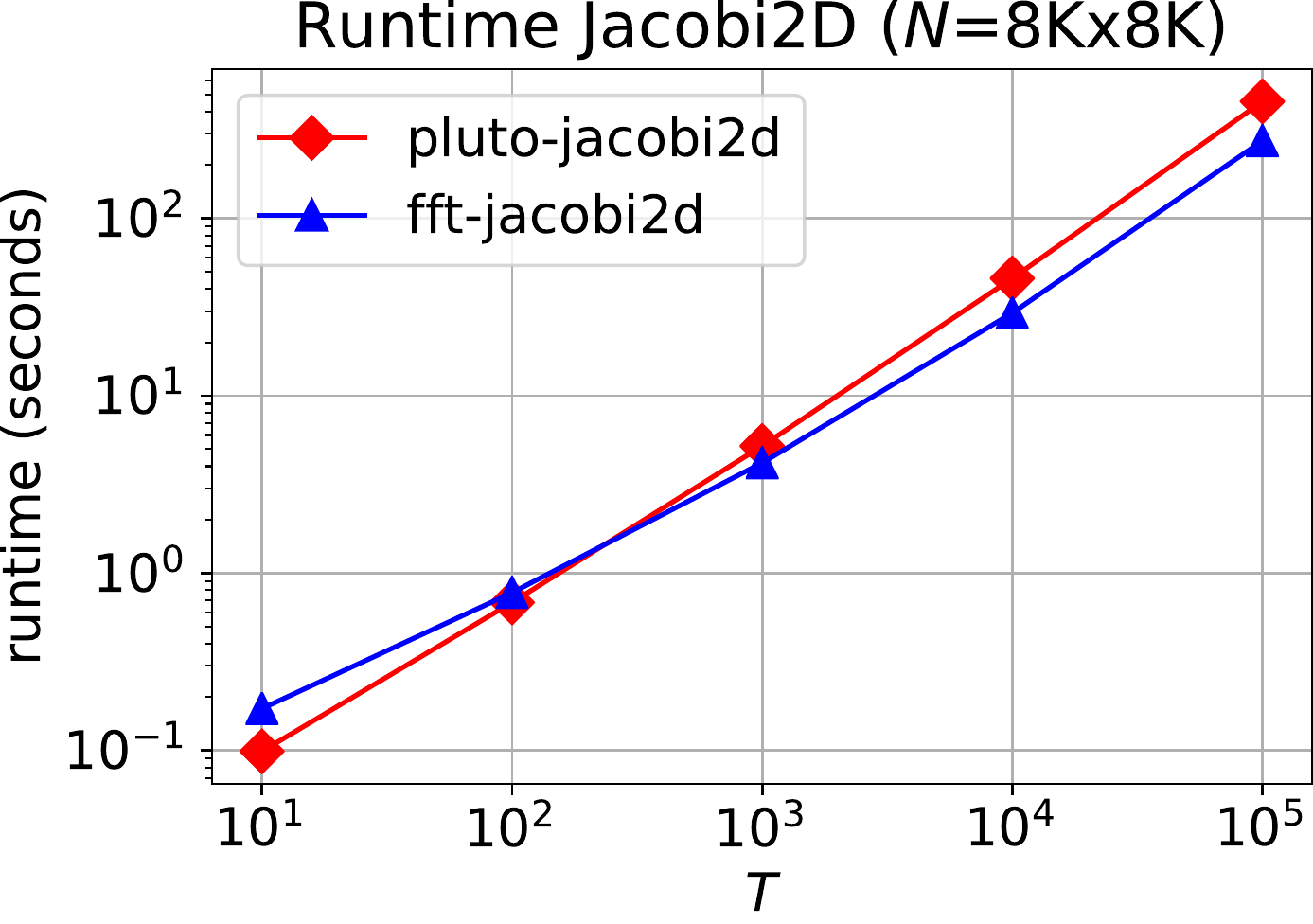}{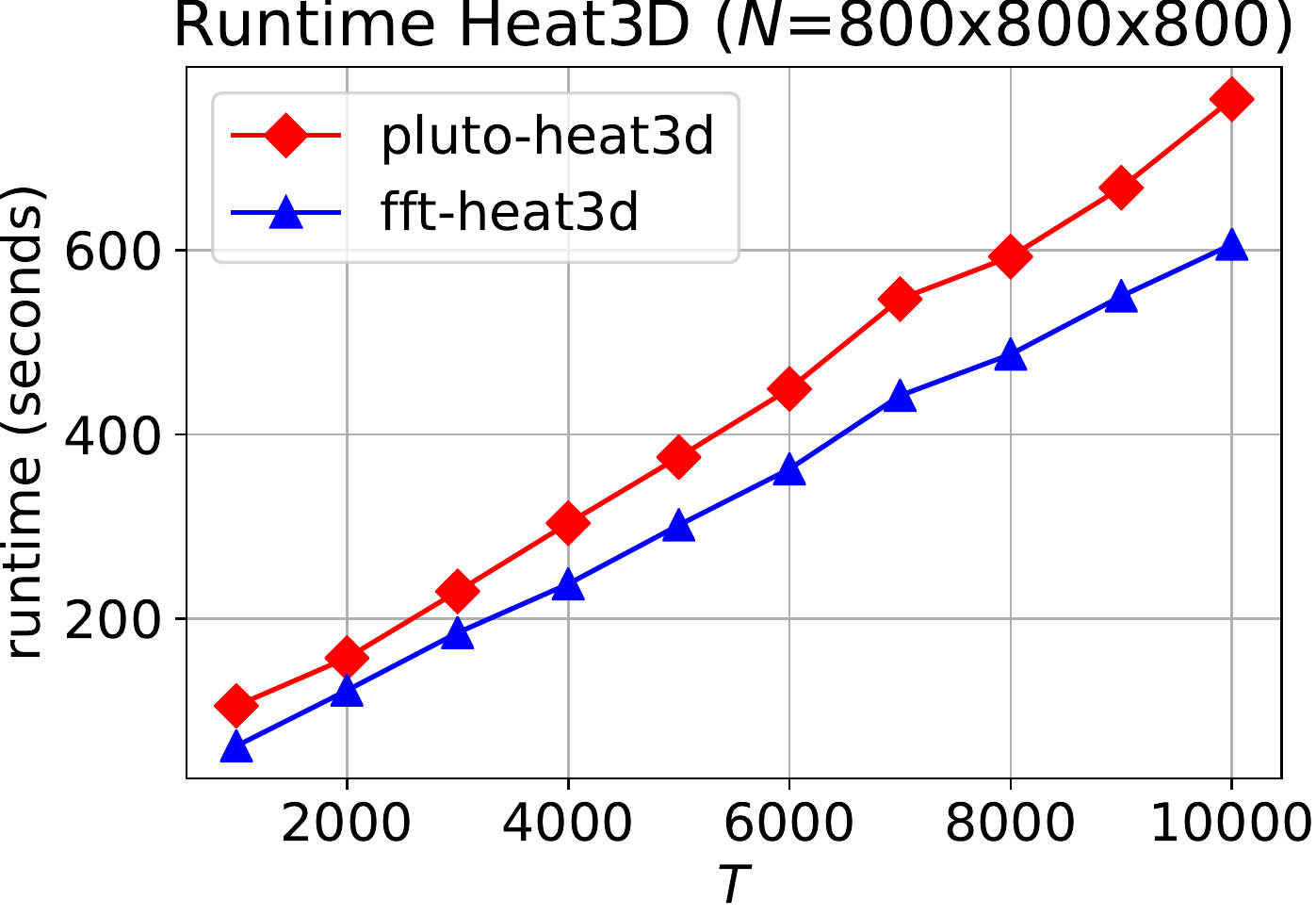}{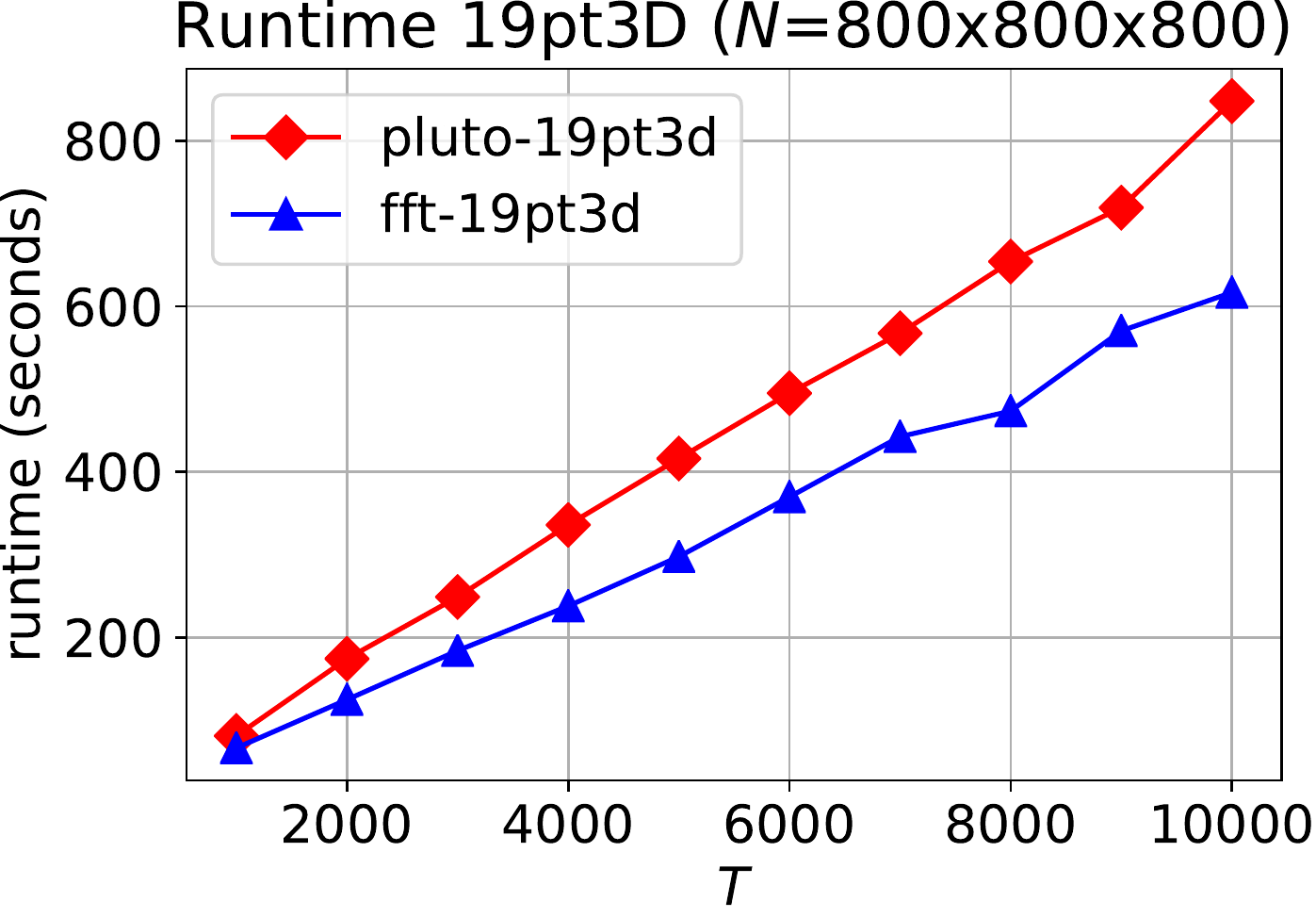}
$(iv)$ & $(v)$ & $(vi)$ \\[0.5ex]

\insertplotline{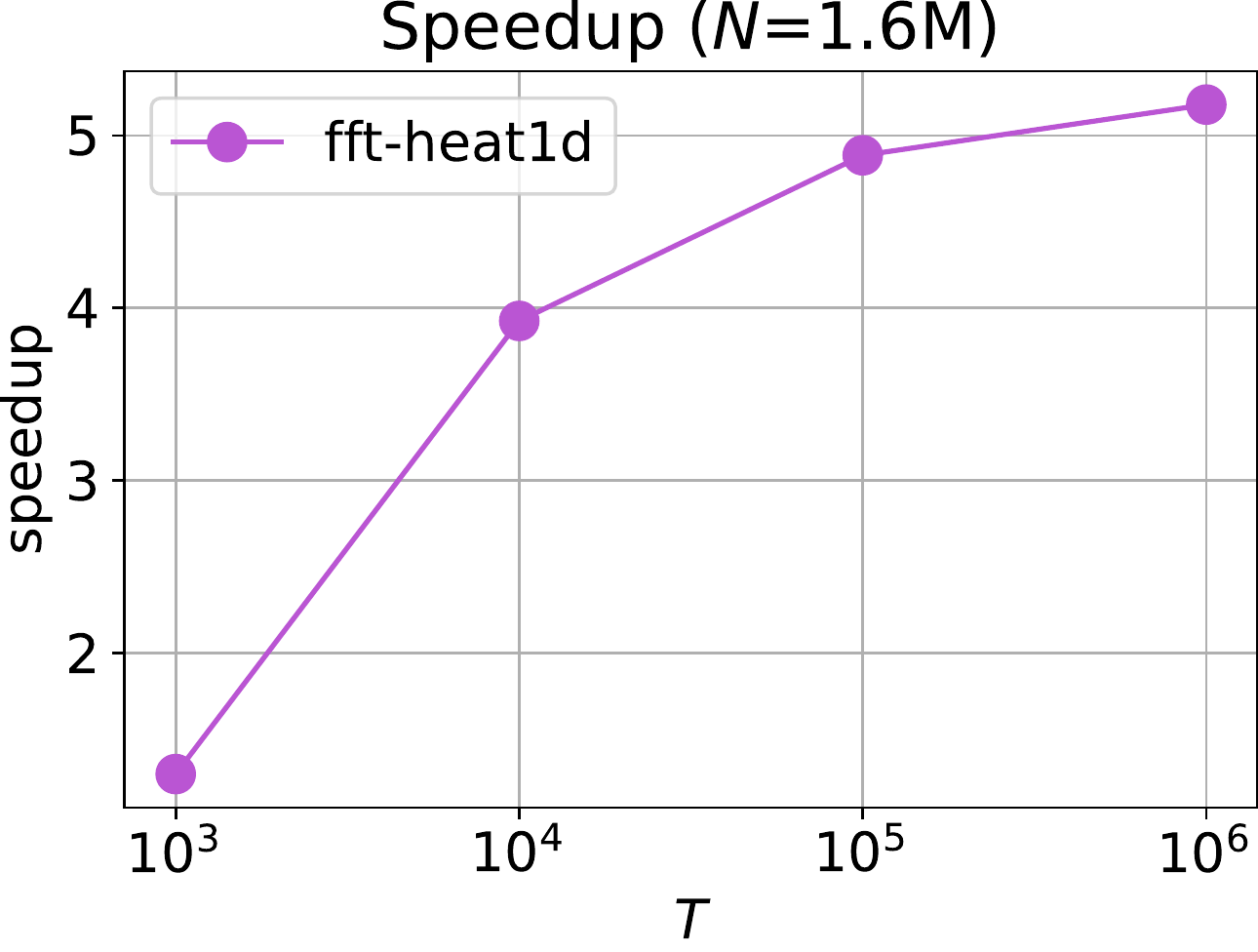}{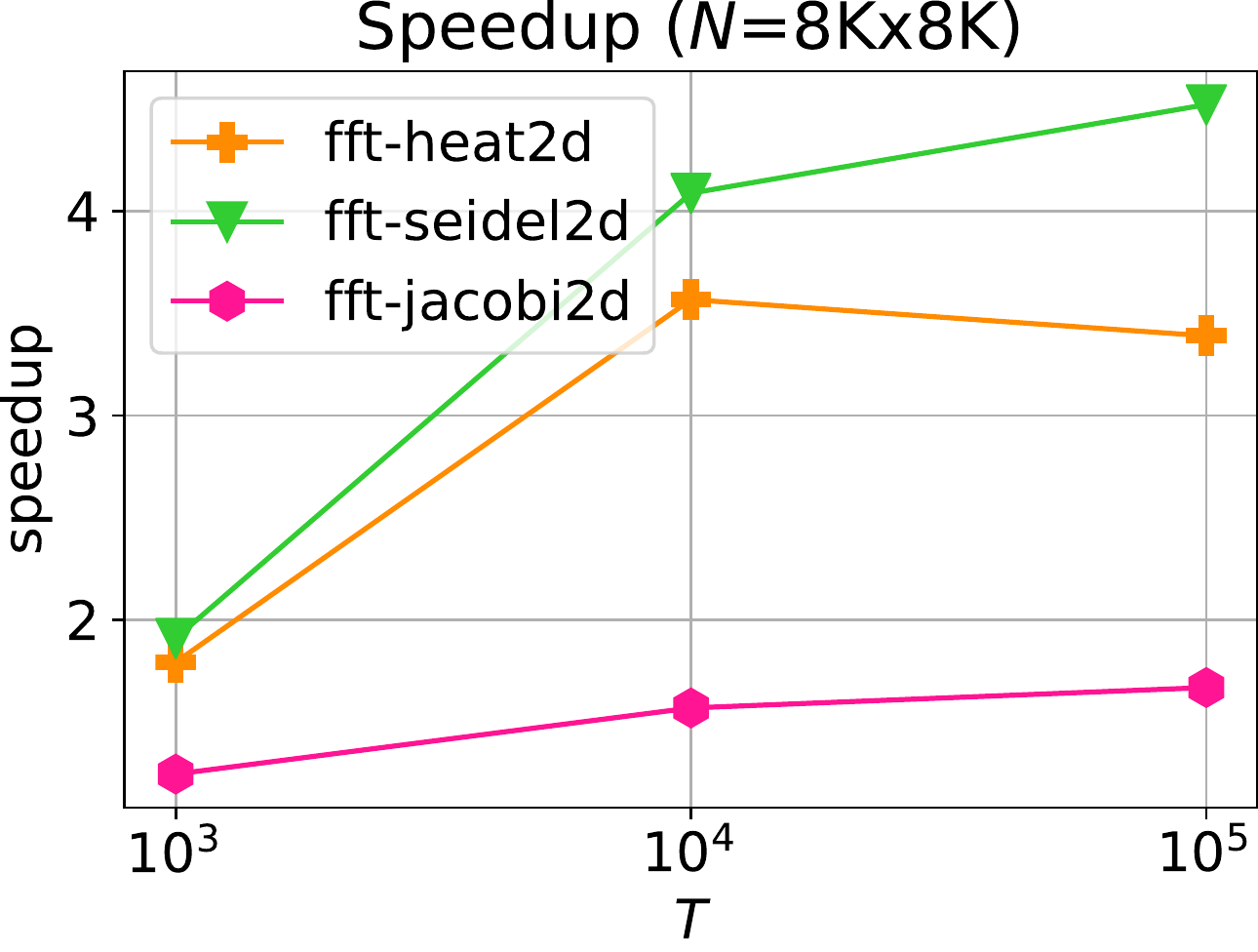}{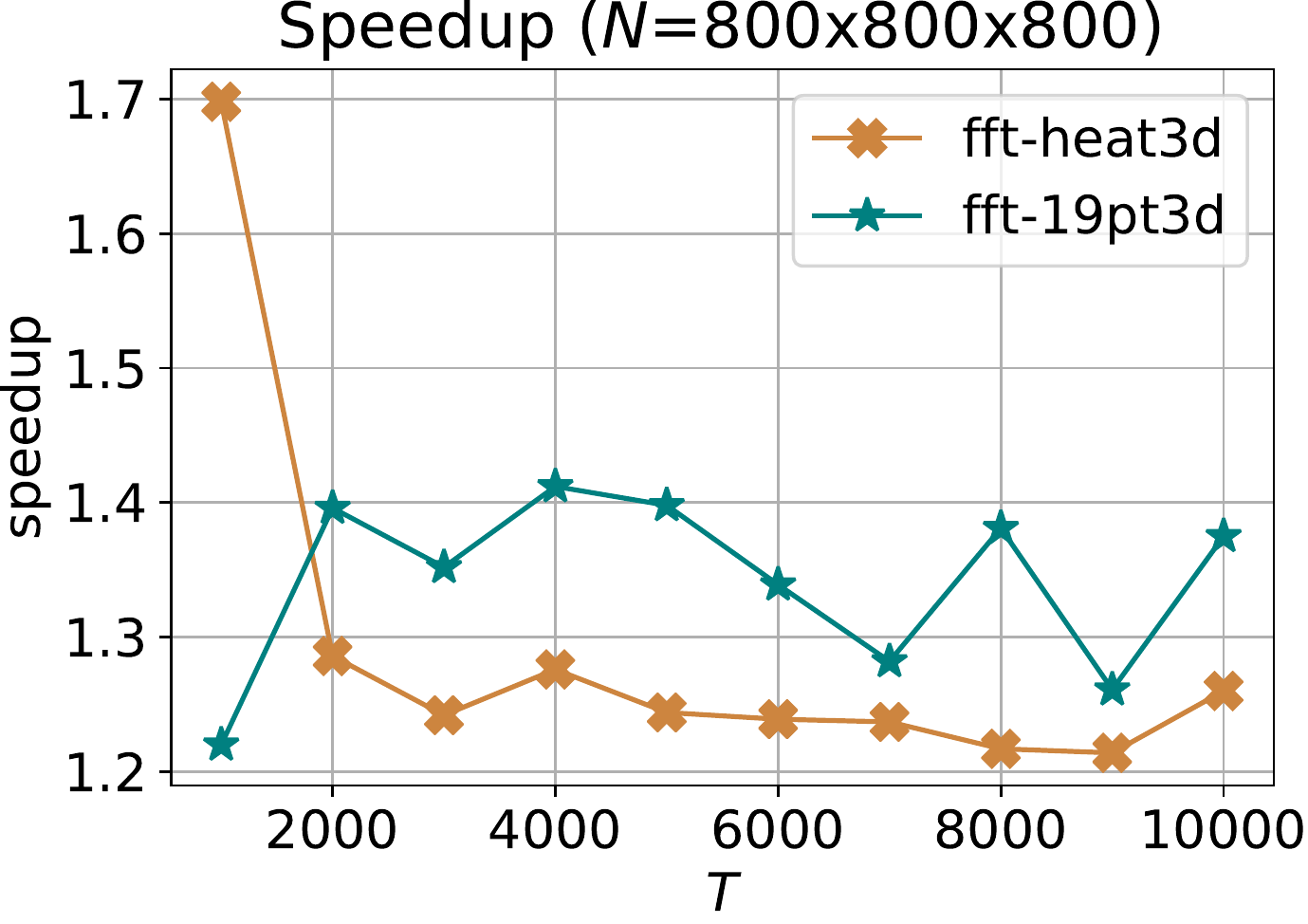}
$(vii)$ & $(viii)$ & $(ix)$ \\\hline

\end{tabular}
\figcaption{Performance comparison of our FFT-based \highlight{aperiodic}  algorithms with the existing best stencil programs.}
\label{fig:appendix-plots-aperiodic-skx-1}
\vgap{}\vgap{}\vgap{}
\end{table*}

\begin{table*}[!ht]
\begin{tabular}{ccc}
\hline\\[-1.5ex]

\multicolumn{3}{c}{\textbf{SKX Node, Experiment 2}} \\[0.5ex]

\insertplotline{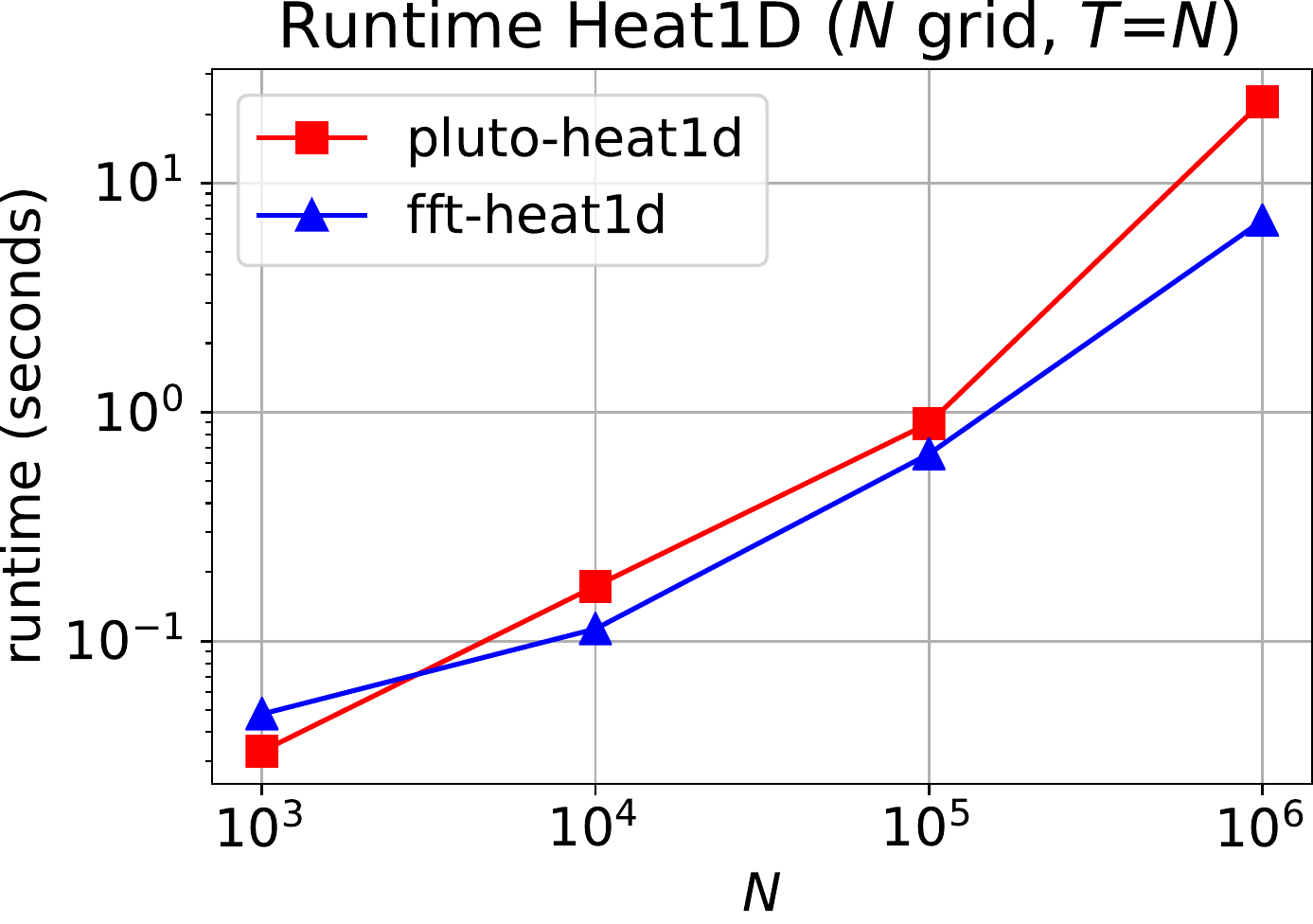}{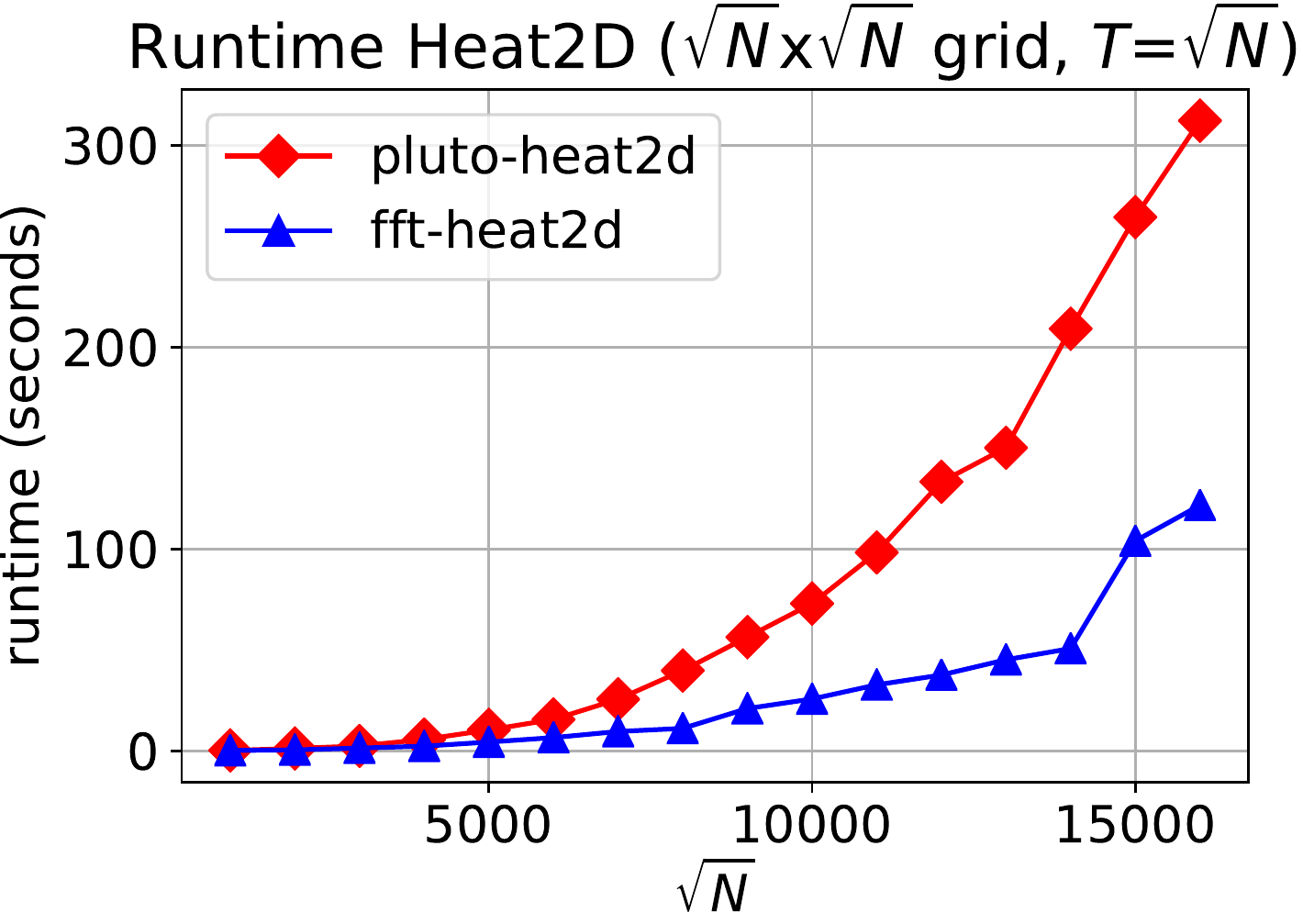}{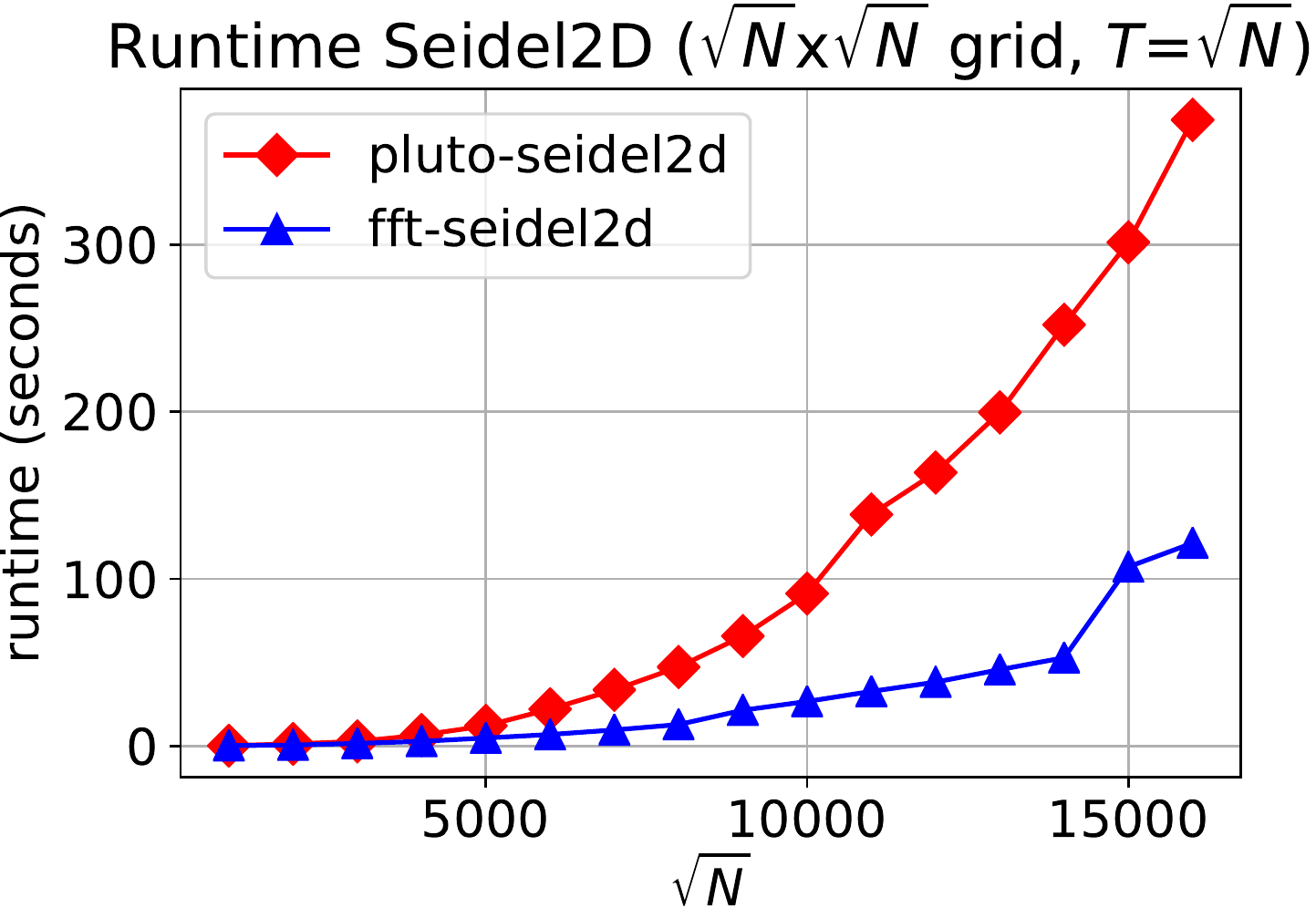}
$(i)$ & $(ii)$ & $(iii)$ \\[0.5ex]

\insertplotline{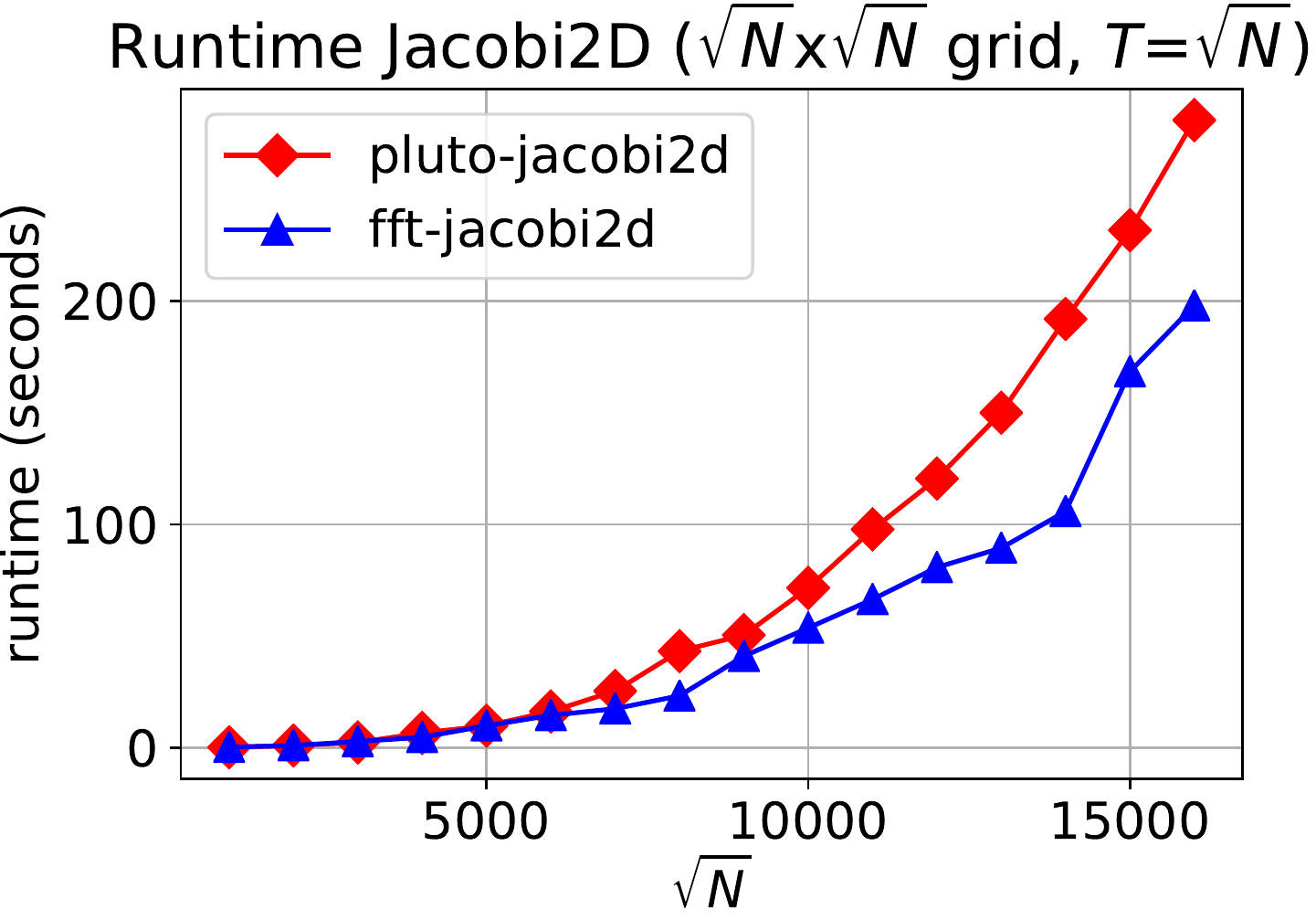}{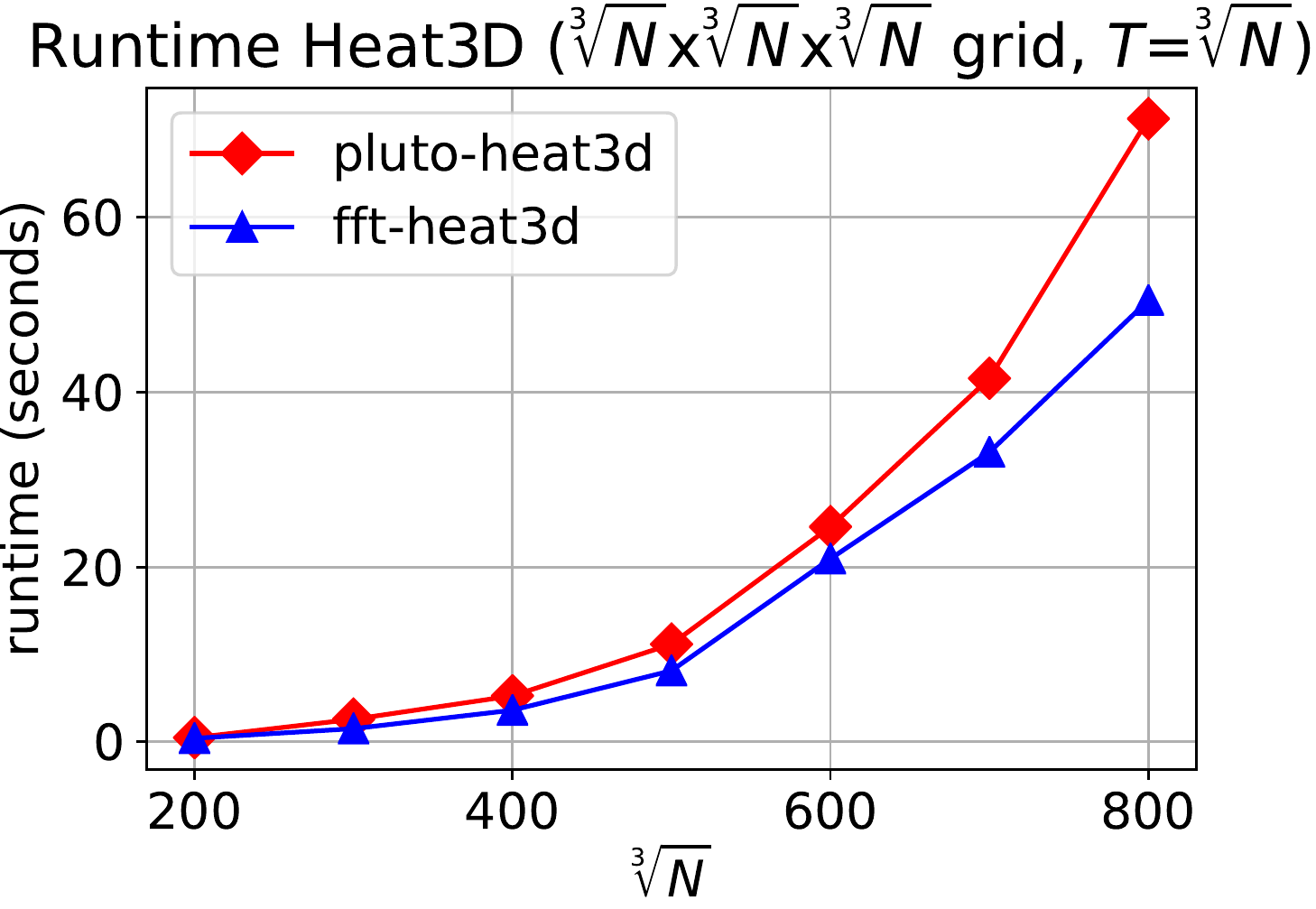}{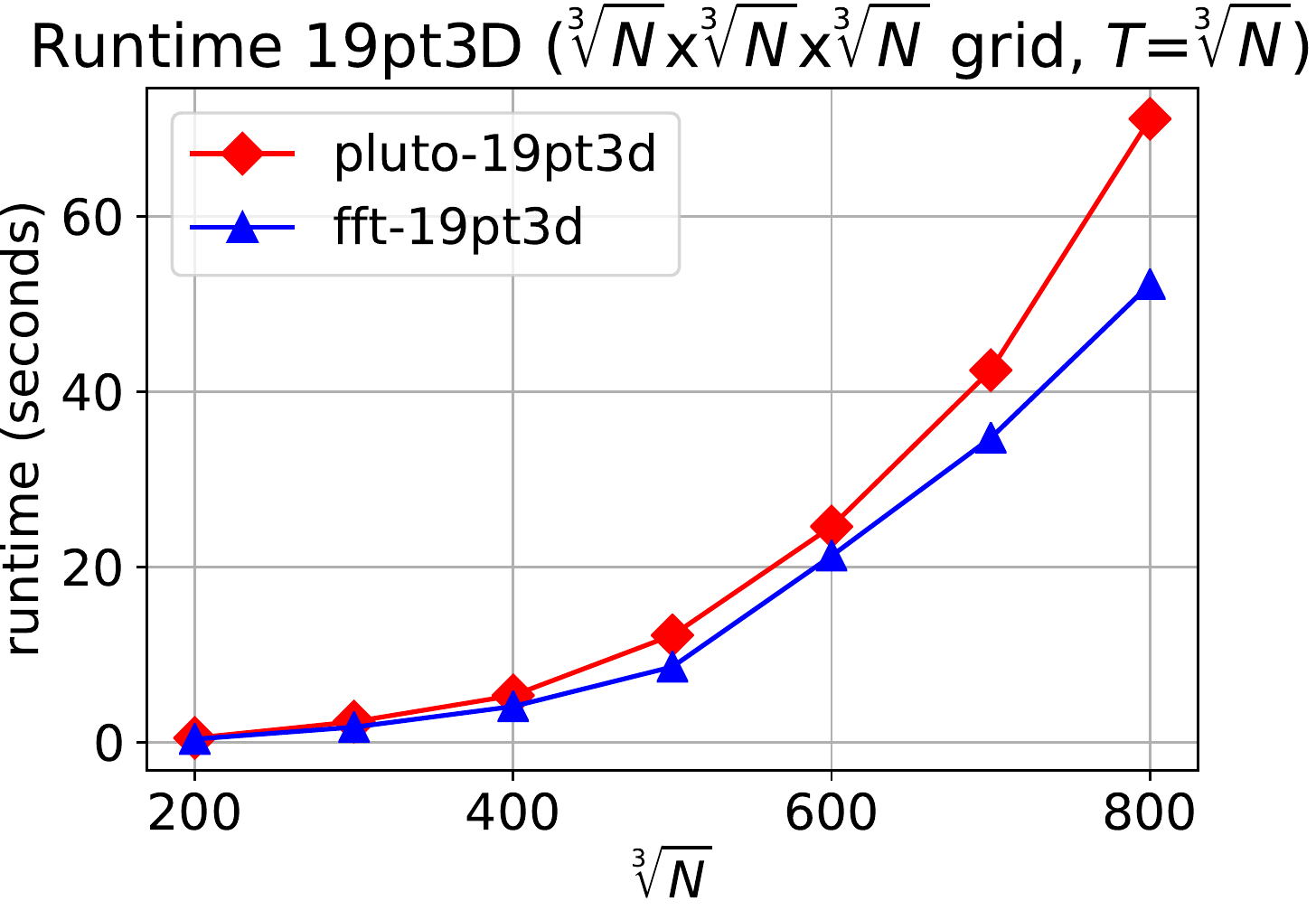}
$(iv)$ & $(v)$ & $(vi)$ \\[0.5ex]

\insertplotline{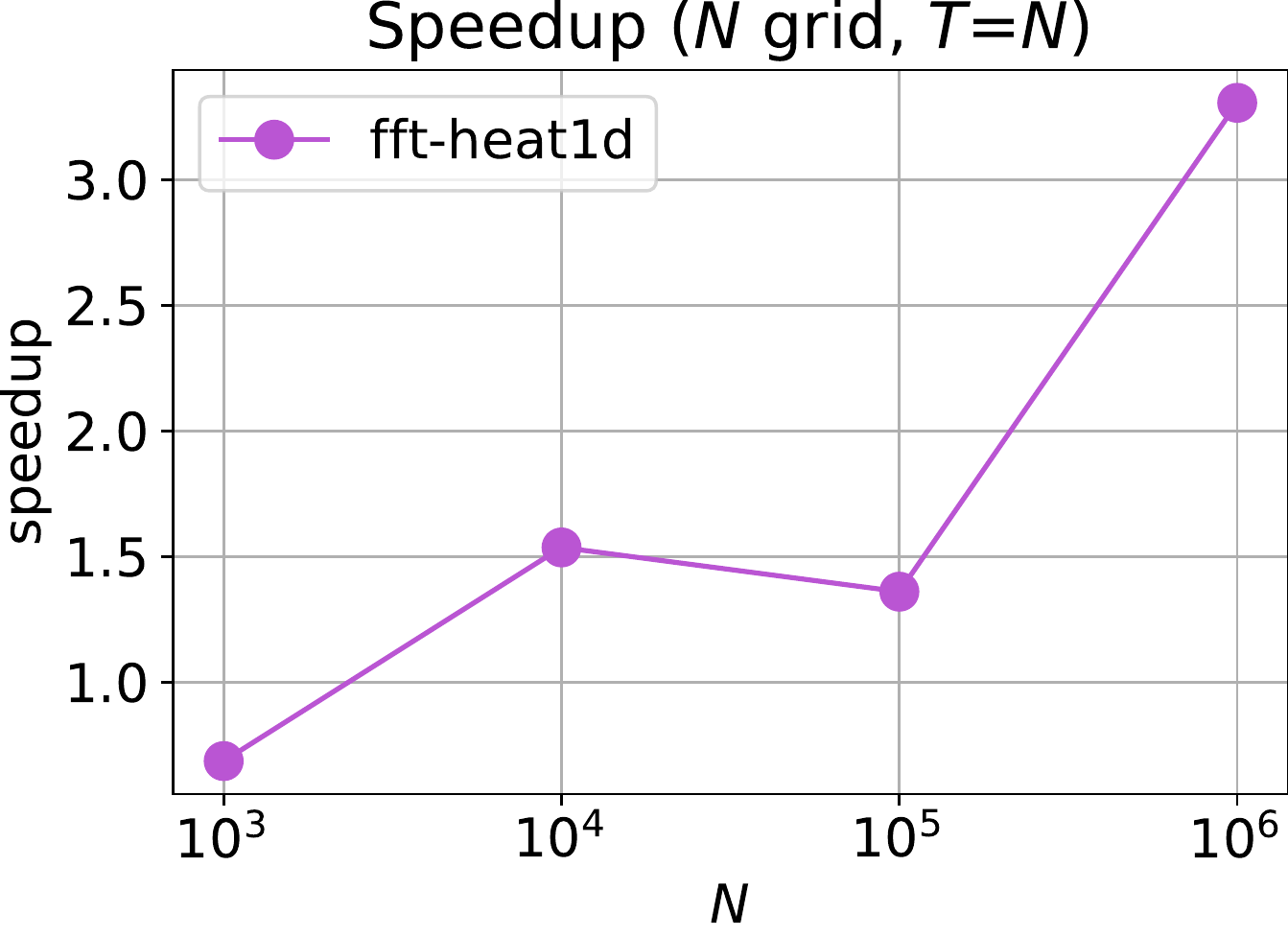}{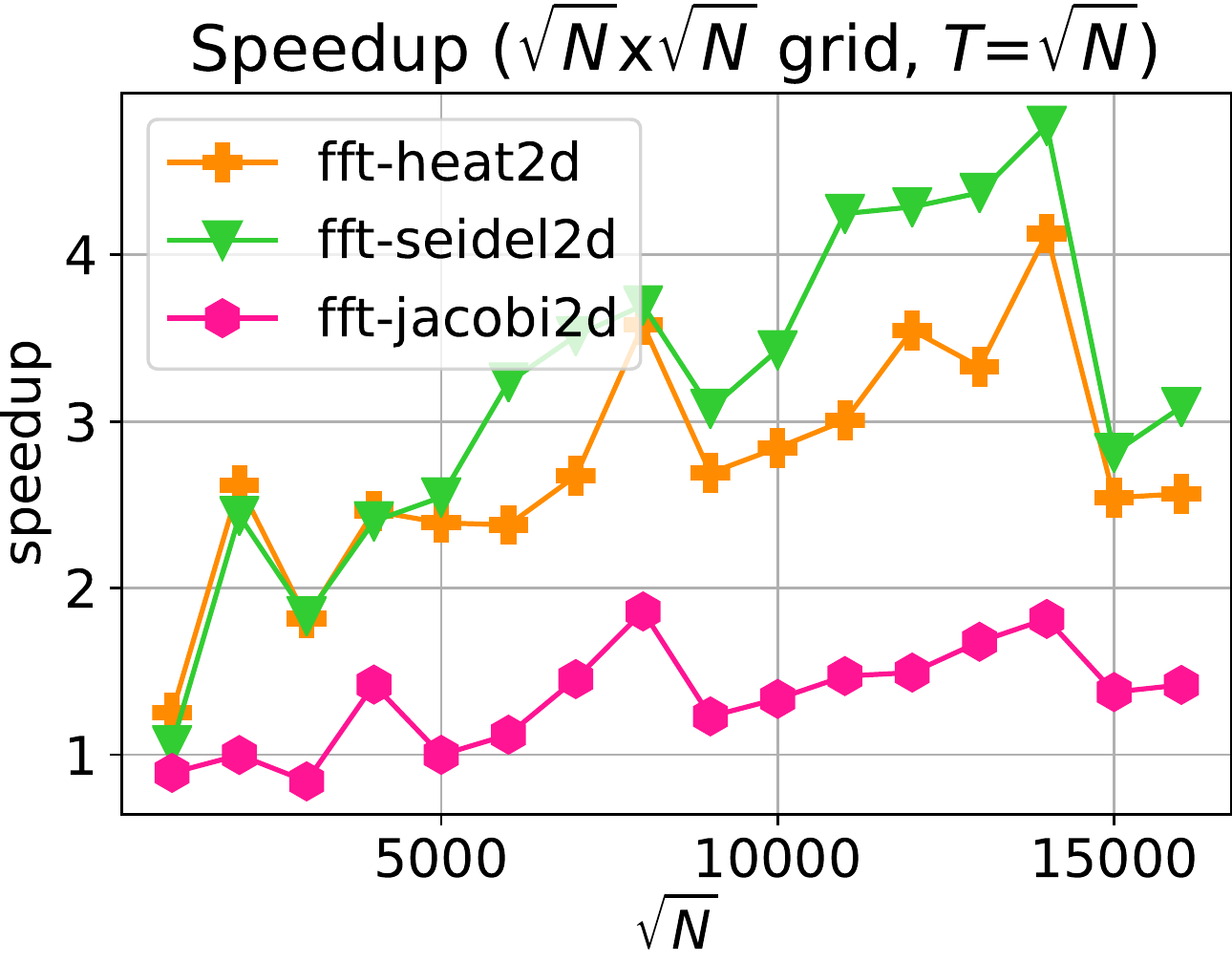}{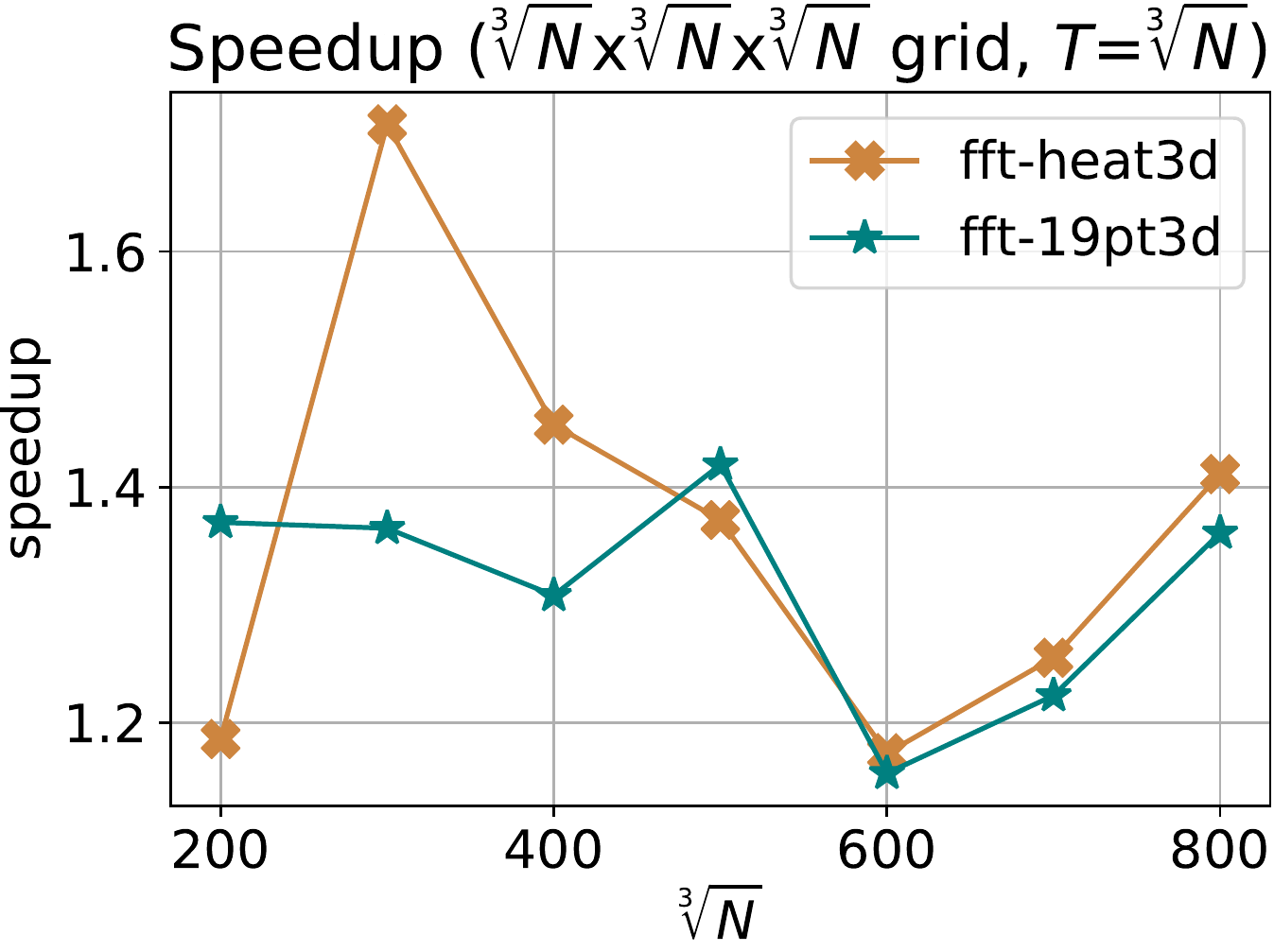}
$(vii)$ & $(viii)$ & $(ix)$ \\\hline

\end{tabular}
\figcaption{Performance comparison of our FFT-based \highlight{aperiodic} algorithms with the existing best stencil programs.}
\label{fig:appendix-plots-aperiodic-skx-2}
\vgap{}\vgap{}\vgap{}
\end{table*}

\section{Appendix}

\subsection{Supporting $d$-D Spatial Grids} \label{ssec:math_for_P_Dd}
Throughout this section we will use arguments identical to those given for the 1-D version of {\Periodic}, but with the modification that whereas the stencil used and grid data used to be indexed by an integer, they will now be indexed by a $d$-D vector $\bm{i} = [i_1, \dots, i_d]$. Let $a_0[\bm{i}]$ be the initial data for a grid of size $\ell_1 \times \cdots \times \ell_d = N$ to be acted on by stencil $S = \sum_{\bm{i}} S[\bm{i}, \bm{0}] X^{\bm{i}}$, where $X^{\bm{i}} = X_1^{i_1}\cdots X_d^{i_d}$ is a shift operator whose action is given by $(X^{\bm{k}} a)[\bm{i}] = a[\bm{i} - \bm{k}]$ and whose matrix elements are $X^{\bm{k}}[\bm{i},\bm{j}] = \delta_{\bm{i}, \bm{j} + \bm{k}}$. We have periodic boundary conditions, so $X_j$ can be viewed as cyclically permuting the $j$th dimension. Future timesteps are defined with $a_{t+1} = S a_{t} = S^{t+1} a_0$, where matrix multiplications are contractions over the internal vector indices, $(Sa)[\bm{i}] = \sum_{\bm{j}} S[\bm{i},\bm{j}] a[\bm{j}]$.

As with the 1-D case, $S$ is a circulant matrix, so we can apply DFTs over every spatial dimension to diagonalize it. These DFT matrices together make up an operator with elements given by $\mathcal{F}[\bm{i},\bm{j}] = \omega_{\ell_1}^{-i_1 j_1} \cdots \omega_{\ell_d}^{-i_d j_d}$ and $\mathcal{F}^{-1}[\bm{i},\bm{j}] = \omega_{\ell_1}^{i_1 j_1} \cdots \omega_{\ell_d}^{i_d j_d} / (\ell_1 \cdots \ell_d)$. Note of course that by ``diagonal'' here we mean that $\Lambda[\bm{i}, \bm{j}] = (\mathcal{F}S\mathcal{F}^{-1})[\bm{i}, \bm{j}] = 0$ for all $\bm{i} \neq \bm{j}$. The proof that $\Lambda$ can be found by applying a single {\MFFT} (see Figure \ref{fig:stencil-fft-algorithm-dD}) over the first column of $S$ is given in Appendix \ref{ssec:eig_fft_computing}. Here we only restate the result that $\Lambda[\bm{i}, \bm{i}] = (\mathcal{F}s)[\bm{i}]$, where $s[\bm{i}] = S[\bm{i}, \bm{0}]$.

Switching from 1-D to $d$-D results in the version of the {\Periodic} algorithm whose pseudocode given in Figure \ref{fig:stencil-fft-algorithm-dD} in the main body of this paper. Since {\MFFT} is of equivalent work, span, and serial cache complexity to a standard FFT \cite{frigo2005design}, this version of the {\Periodic} algorithm is of identical complexity to that already shown in Theorem \ref{th:periodic-stencil-fft}.

\subsection{Stencil Matrix Example} \label{ssec:stencil_matrix_walkthrough}
Suppose we have a 1-D periodic spatial grid of 4 cells, and we want to act on it with a stencil $S$ that computes the new value of a cell according to $a_{t+1}[i] = -2a_t[i-1] + a_{t}[i] + 3 a_t[i+1]$. We can write out the update equation in full as

$$\begin{sbmatrix}{a_{t+1}} a_{t+1}[0] \\ a_{t+1}[1] \\ a_{t+1}[2] \\ a_{t+1}[3]\end{sbmatrix} = \begin{sbmatrix}{S} 1 & 3 & 0 & -2 \\
-2 & 1 & 3 & 0 \\
0 & -2 & 1 & 3 \\
3 & 0 & -2 & 1 \end{sbmatrix} \begin{sbmatrix}{a_{t}} a_{t}[0] \\ a_{t}[1] \\ a_{t}[2] \\ a_{t}[3]\end{sbmatrix}.$$

\subsection{Proof of Theorem \ref{th:aperiodic-stencil}} \label{ssec:proof-aperiodic-stencil}

\begin{proof}
The complexities for {\Periodic} have already been given in Theorem \ref{th:periodic-stencil-fft}, so here we need only derive the complexities for the {\Boundary} computation algorithm.

Let the spatial grid be of size $\ell_1 \times \cdots \times \ell_d = N$, with $d = \Th{1}$. We bound the number of cells in the boundary's region of interest by $b \sigma T$, where $b = 2 N (\ell_1^{-1} + \cdots + \ell_d^{-1}) = \Om{N^{1 - 1/d}}$ is the size of the spatial grid's boundary and $\sigma = \Th{1}$ is the size of the stencil being applied.\\
\noindent
[Work.] At every stage of the divide and conquer algorithm we make two recursive calls and apply the periodic solver to $\Th{b T}$ grid cells. This gives us the recurrence
$$T_1(T) = \begin{cases}
\Th{b} & T < c, \\
2T_1(T/2) + \Th{bT \log (bT)} & T \geq c,
\end{cases}$$
for some positive constant $c$. This gives $T_1(T) = \LTh{bT \log (bT)}$ \\$\RTh{\log T}$. Taking into account the $\Th{N \log (TN)}$ work that will be done by the periodic solver for the interior region gives a final work bound of $\Th{bT \log (bT) \log T + N \log N}$.
\hide{
\begin{align*}
    W(T) &= \begin{cases}
    \Th{\sigma} & T < c \\
    2W(T/2) + \Th{\sigma T \log (\sigma T)} & T \geq c
    \end{cases} \\
    &= \Th{\sigma T + \sum_{i = 0}^{\log T} \sigma T \log (\sigma T 2^{-i})} \\
    &= \Th{\sigma T \sum_{i = 0}^{\log T} \log (\sigma) + i} \\
    &= \Th{\sigma T \log (\sigma T) \log T}
\end{align*}
}

\noindent
[Span.] At every level of recursion, both calls to {\Boundary} are done in sequence, but in parallel with the $\Th{1}$ associated periodic solver calls. The recurrence for span is thus
$$T_\infty(T) = \max \{2 T_\infty(T/2), c' \log T \log \log T\} + \Th{\log b},$$
where $c' = \Th{1}$. This gives $T_\infty(T) = \Th{T \log b}$, and incorporating the $\Th{\log T + \log N \log \log N}$ span from the periodic solver across the interior region gives a final bound of $\Th{T \log b + \log N \log \log N)}$.
\hide{
\begin{align*}
    S(T) &\leq \begin{cases}
    \Th{\log \sigma} & T \leq c \\
    \max \{2 S(T/2), c' \log (\sigma T) \log \log (\sigma T)\} & T > c
    \end{cases} \\
    &= \Th{T \log \sigma + \sigma T \log \log (\sigma T) \sum_{i = 1}^{\log T} i 2^{-i}} \\
    &= \Th{\sigma T \log \log (\sigma T)}
\end{align*}
}

\hide{
\noindent
[Cache Misses.] As was the case for the work bound, the only sources of cache misses at each level of recursion are the two recursive calls and the periodic solvers. We find the recurrence for serial cache complexity
$$Q_1(T) = \begin{cases}
\Oh{bT / B + 1} & b \sigma T < M, \\
\begin{array}{@{}c@{}} 2T_1(T/2)\\ +~\Oh{(bT / B)(\log_M (bT) + 1) + 1}\end{array} & b \sigma T \geq M,
\end{cases}$$
which gives $Q_1(T) = \Oh{(\log T)((b T / B)(\log_M (b T) + 1) + 1)}$. Adding in the $\Oh{(N / B)(\log_M N + 1) + 1}$ cache misses caused by the periodic solver on the interior region gives a final bound of $\LOh{(\log T)((bT / B)(\log_M (bT) + 1) + 1)}$\\ $\ROh{+ ~(N / B)(\log_M N + 1)}$.
}

\hide{
\begin{align*}
    W(T) &= \begin{cases}
    \Th{b \sigma} & T < c \\
    2W(T/2) + \Th{b \sigma T \log (b \sigma T)} & T \geq c
    \end{cases} \\
    &= \Th{b \sigma T + \sum_{i = 0}^{\log T} b \sigma T \log (b \sigma T 2^{-i})} \\
    &= \Th{b \sigma T \sum_{i = 0}^{\log T} \log (b \sigma) + i} \\
    &= \Th{b \sigma T \log (b \sigma T) \log T}
\end{align*}
\begin{align*}
    S(T) &= \begin{cases}
    \Th{\log (b\sigma)} & T \leq c \\
    2 S(T/2) + \Th{\log (b \sigma T) \log \log (b \sigma T)} & T > c
    \end{cases} \\
    &= \Th{T \log (b \sigma) + b \sigma T \log \log (b \sigma T) \sum_{i = 1}^{\log T} i 2^{-i}} \\
    &= \Th{b \sigma T \log \log (b \sigma T)}
\end{align*}
\begin{align*}
    Q(T) &= \begin{cases}
    \Th{b \sigma T / B + 1} & b \sigma T \leq M \\
    2 Q(T/2) + \Th{(b \sigma T / B) \log_M (b \sigma T)} & b \sigma T > M
    \end{cases} \\
    &= \Oh{T (M / B + 1) + b \sigma T / B \sum_{i = 1}^{\log_M (b \sigma T)} 2^{-i} \log_M (2^i)} \\
    &= \Oh{...}
\end{align*}
}

\end{proof}

\subsection{Supporting Implicit Stencils} \label{ssec:implicit_stencils}
Here we show how to extend {\Periodic} to handle stencils which depend implicitly on field data from the timestep which is being computed. This will be done by computing a pseudoinverse after diagonalizing the implicit part of the stencil.

Let $S$ and $Q$ be explicit stencils, and suppose we wish to solve a stencil problem with update equation $Q a_{t+1} = S a_{t}$. First we diagonalize $\fft Q \ifft = \Lambda_Q$ and $\fft S \ifft = \Lambda_S$ by pushing them to the frequency domain, after which we will take the pseudoinverse of $\Lambda_Q$, defined as $\Lambda_Q^+[i,i] = \Lambda_Q[i,i]^{-1}$ if $\Lambda_Q[i,i] \neq 0$, and 0 otherwise. Now we find a new diagonalized stencil $\Lambda_R = \Lambda_Q^+ \Lambda_S$. This new stencil $R$ has the following property: if $a_{t+1} = R a_t$, then $Q a_{t+1} = S a_t$.

Thus we can evolve data according to $R$, and the evolved data will satisfy the original implicit stencil equation.

\subsection{Supporting Vector Valued Fields} \label{ssec:vector_valued_fields}
Here we show how to alter our periodic algorithm to handle vector-valued fields, i.e. fields where the grid data for each cell is an array of fixed length instead of being just a scalar. Consider a set of scalar-valued fields $\{a^{(i)}_t\}$ which evolve according to linear stencils as
\begin{align*}
    a^{(i)}_{t+1} = \sum_j S^{(i,j)} a^{(j)}_{t},
\end{align*}
where $S^{(i,j)}$ is a circulant stencil matrix describing how $a_{t+1}^{(i)}$ is dependent on $a_t^{(j)}$. As in the scalar-field case, we can diagonalize all the stencils $S^{(i,j)}$ by moving them to the frequency domain.

Since we now have more than one stencil, we have to revisit our treatment of sequential squaring. Let $\{R^{(i,j)}\}$ be a set of stencils which evolves $a_t$ forward $r$ timesteps to $a_{t+r}$,
$$a^{(i)}_{t+r} = \sum_j R^{(i,j)} a^{(j)}_{t}.$$
We can find a new set of stencils $\{Q^{(i,j)}\}$ which evolve data forward $2r$ timesteps by reading them off of
$$\sum_j Q^{(i,k)} a^{(k)}_{t} = a^{(i)}_{t+2r} = \sum_j R^{(i,j)} a^{(j)}_{t+r} = \sum_j R^{(i,j)} \sum_k R^{(j,k)} a^{(k)}_{t}.$$
This gives the pleasing result that
$$Q^{(i,k)} = \sum_j R^{(i,j)} R^{(j,k)},$$
so all we have to do to support vector-valued fields is to swap out our sequential squaring of $\Lambda$ with a sequential squaring of a matrix of stencils.

Furthermore, when there are only a constant number of scalar fields $\{a^{(i)}_t\}$ that make up our grid data, the computational complexity of finding $\{Q^{(i,j)}\}$ differs only by a constant factor from squaring $\Lambda$ in our scalar-field version of the algorithm.

As an example of what can be done with vector-valued fields, suppose we want to implement an affine stencil $Aa_t = Sa_t + c$ on some originally scalar field, where $S$ is a linear stencil and $c$ is a constant. We could then add a spatial field (making the underlying data consist of vectors with 2 elements each) and define $a^{(1)}_0 = c$, $S^{(1,1)} = S^{(0,1)} = I$, $S^{(1,0)} = 0$, and $S^{(0,0)} = A$. This realized the behavior of the affine stencil on $a^{(0)}_t$.

\subsection{Proof of Shift Matrix Decomposition of Circulant Matrices} \label{ssec:shift_matrix_decomposition}

Let $S$ be a circulant matrix, and let $X$ be the right shift matrix. Then for any vector $a$, we have

\begin{align*}
    (Sa)[i] &= \sum_j S[i,j]a[j] 
    = \sum_j S[i-j,0] a[j] \\
    &= \sum_j S[j, 0] a[i-j] 
    = \sum_j S[j,0] (X^j a)[i] \\
    &= \left(\left(\sum_j S[j,0] X^j\right)a\right)[i],
\end{align*}

which shows that $S = \sum_i S[i,0]X^i$.

\subsection{$\ell$-shell Spatial Grid Decomposition} \label{ssec:ell_shells}

In the main body of this paper we assume that stencils with some radius $\sigma$ use the values of cells at distance $\sigma$ in any particular direction. However, this is not necessarily the case; one could imagine a stencil that requires a large number of values along one dimension and only a few along another. Upwind stencils also break this pattern, requiring values from very asymmetric neighborhoods of cells.

The concept of region of influence used in our derivation of {\Boundary} is itself a sufficient basis for defining the regions $A_{1,2}, B_{1,2,3}$, and $C_{1,2}$ shown in Figure \ref{fig:alg_fine}. Let us define an $\ell$-shell of the boundary of a spatial grid to be the spatial region consisting of all cells that enter the boundary's region of influence after exactly $\ell$ timesteps. Obviously, there can be only one time $\ell$ when a cell begins to be influenced by the boundary; the set of all $\ell$-shells thus fill the spatial grid without overlapping one another.

To generalize the regions shown in Figure \ref{fig:alg_fine} to those for arbitrary stencils, all we have to do is switch the interpretation of the horizontal axis from ``distance'' to ``$\ell$'' and scale it by setting $\sigma = 1$. This yields an algorithm that is more efficient for upwind schemes and other biased stencils.

\subsection{Proof of Eigenvalue-FFT Relation} \label{ssec:eig_fft_computing}

Expanding the diagonal matrix of eigenvalues $\mathcal{F}A\mathcal{F}^{-1} = \Lambda$ directly gives us

\begin{align*}
    \Lambda[i,i] &= (\mathcal{F}A\mathcal{F}^{-1})[i,i] 
    = \left(\mathcal{F} \left( \sum_j A[j,0] X^j\right)\mathcal{F}^{-1}\right) [i,i] \\
    &= \sum_j A[j,0] (\mathcal{F}X^j\mathcal{F}^{-1})[i,i] \\
    &= \sum_j A[j,0] \sum_{k}\sum_{\ell} \omega_N^{-i\ell} \delta_{\ell, k+j} \omega_N^{ki}/N \\
    &= \sum_j A[j,0] \sum_{k} \omega_N^{-i(k+j)}\omega_N^{ki}/N 
    = \sum_j \omega_N^{-ij}A[j,0] \\
    &= (\mathcal{F}A)[i,0] 
    = (\mathcal{F}a)[i],
\end{align*}

where in the last line $a$ is the first column of $A$.

\clearpage
\bibliography{references}

\begin{thebibliography}{100}

\bibitem{MKL}
Intel math kernel library.
\newblock
  \url{https://software.intel.com/content/www/us/en/develop/tools/math-kernel-library.html},
  Intel MKL.

\bibitem{Polly}
Llvm framework for high-level loop and data-locality optimizations.
\newblock \url{https://polly.llvm.org/}, LLVM MKL.

\bibitem{Pluto}
An automatic parallelizer and locality optimizer for affine loop nests.
\newblock \url{http://pluto-compiler.sourceforge.net/}, Pluto.

\bibitem{PoCC}
The polyhedral compiler collection.
\newblock \url{http://web.cs.ucla.edu/~pouchet/software/pocc/}, PoCC.

\bibitem{Stampede2}
The stampede2 supercomputing cluster.
\newblock \url{https://www.tacc.utexas.edu/systems/stampede2}, Stampede2.

\bibitem{acary2010implicit}
V.~Acary and B.~Brogliato.
\newblock Implicit euler numerical scheme and chattering-free implementation of
  sliding mode systems.
\newblock {\em Systems \& Control Letters}, 59(5):284--293, 2010.

\bibitem{andreussi2012revised}
O.~Andreussi, I.~Dabo, and N.~Marzari.
\newblock Revised self-consistent continuum solvation in electronic-structure
  calculations.
\newblock {\em The Journal of chemical physics}, 136(6):064102, 2012.

\bibitem{asgharzadeh2017newton}
H.~Asgharzadeh and I.~Borazjani.
\newblock A newton--krylov method with an approximate analytical jacobian for
  implicit solution of navier--stokes equations on staggered
  overset-curvilinear grids with immersed boundaries.
\newblock {\em Journal of computational physics}, 331:227--256, 2017.

\bibitem{ashrafizadeh2015jacobian}
A.~Ashrafizadeh, C.~Devaud, and N.~Aydemir.
\newblock A jacobian-free newton--krylov method for thermalhydraulics
  simulations.
\newblock {\em International Journal for Numerical Methods in Fluids},
  77(10):590--615, 2015.

\bibitem{atangana2015numerical}
A.~Atangana and J.~J. Nieto.
\newblock Numerical solution for the model of rlc circuit via the fractional
  derivative without singular kernel.
\newblock {\em Advances in Mechanical Engineering}, 7(10):1687814015613758,
  2015.

\bibitem{avissar1989parameterization}
R.~Avissar and R.~A. Pielke.
\newblock A parameterization of heterogeneous land surfaces for atmospheric
  numerical models and its impact on regional meteorology.
\newblock {\em Monthly Weather Review}, 117(10):2113--2136, 1989.

\bibitem{Bandishti2012}
V.~Bandishti, I.~Pananilath, and U.~Bondhugula.
\newblock Tiling stencil computations to maximize parallelism.
\newblock In {\em International Conference on High Performance Computing,
  Networking, Storage and Analysis}, pages 1--11, 2012.

\bibitem{barnes1964technique}
S.~L. Barnes.
\newblock A technique for maximizing details in numerical weather map analysis.
\newblock {\em Journal of Applied Meteorology and Climatology}, 3(4):396--409,
  1964.

\bibitem{barth2013high}
T.~J. Barth and H.~Deconinck.
\newblock {\em High-order methods for computational physics}, volume~9.
\newblock Springer Science \& Business Media, 2013.

\bibitem{beggs1992finite}
J.~H. Beggs, R.~J. Luebbers, K.~S. Yee, and K.~S. Kunz.
\newblock Finite-difference time-domain implementation of surface impedance
  boundary conditions.
\newblock {\em IEEE Transactions on Antennas and propagation}, 40(1):49--56,
  1992.

\bibitem{Bender2016}
M.~A. Bender, E.~D. Demaine, R.~Ebrahimi, J.~T. Fineman, R.~Johnson,
  A.~Lincoln, J.~Lynch, and S.~McCauley.
\newblock Cache-adaptive analysis.
\newblock In {\em ACM Symposium on Parallelism in Algorithms and
  Architectures}, pages 135--144, 2016.

\bibitem{Bender2014}
M.~A. Bender, R.~Ebrahimi, J.~T. Fineman, G.~Ghasemiesfeh, R.~Johnson, and
  S.~McCauley.
\newblock Cache-adaptive algorithms.
\newblock In {\em ACM-SIAM Symposium on Discrete Algorithms}, 2014.

\bibitem{benzi2004preconditioner}
M.~Benzi and G.~H. Golub.
\newblock A preconditioner for generalized saddle point problems.
\newblock {\em SIAM Journal on Matrix Analysis and Applications}, 26(1):20--41,
  2004.

\bibitem{benzi2011relaxed}
M.~Benzi, M.~Ng, Q.~Niu, and Z.~Wang.
\newblock A relaxed dimensional factorization preconditioner for the
  incompressible navier--stokes equations.
\newblock {\em Journal of Computational Physics}, 230(16):6185--6202, 2011.

\bibitem{bilbao2013modeling}
S.~Bilbao.
\newblock Modeling of complex geometries and boundary conditions in finite
  difference/finite volume time domain room acoustics simulation.
\newblock {\em IEEE Transactions on Audio, Speech, and Language Processing},
  21(7):1524--1533, 2013.

\bibitem{blazek2015computational}
J.~Blazek.
\newblock {\em Computational fluid dynamics: principles and applications}.
\newblock Butterworth-Heinemann, 2015.

\bibitem{blelloch2019optimal}
G.~E. Blelloch, J.~T. Fineman, Y.~Gu, and Y.~Sun.
\newblock Optimal parallel algorithms in the binary-forking model.
\newblock {\em arXiv preprint arXiv:1903.04650}, 2019.

\bibitem{bondhugula2016plutoplus}
U.~Bondhugula, A.~Acharya, and A.~Cohen.
\newblock The pluto+ algorithm: A practical approach for parallelization and
  locality optimization of affine loop nests.
\newblock {\em ACM Transactions on Programming Languages and Systems},
  38(3):1--32, 2016.

\bibitem{bondhugula2017}
U.~Bondhugula, V.~Bandishti, and I.~Pananilath.
\newblock Diamond tiling: Tiling techniques to maximize parallelism for stencil
  computations.
\newblock {\em IEEE Transactions on Parallel and Distributed Systems},
  28(5):1285--1298, 2017.

\bibitem{bondhugula2008pluto}
U.~Bondhugula, A.~Hartono, J.~Ramanujam, and P.~Sadayappan.
\newblock Pluto: A practical and fully automatic polyhedral program
  optimization system.
\newblock In {\em Proceedings of the ACM SIGPLAN 2008 Conference on Programming
  Language Design and Implementation (PLDI 08), Tucson, AZ (June 2008)}.
  Citeseer, 2008.

\bibitem{borrell2011parallel}
R.~Borrell, O.~Lehmkuhl, F.~X. Trias, and A.~Oliva.
\newblock Parallel direct poisson solver for discretisations with one fourier
  diagonalisable direction.
\newblock {\em Journal of computational physics}, 230(12):4723--4741, 2011.

\bibitem{boyd2001chebyshev}
J.~P. Boyd.
\newblock {\em Chebyshev and Fourier spectral methods}.
\newblock Courier Corporation, 2001.

\bibitem{bracewell1986fourier}
R.~N. Bracewell and R.~N. Bracewell.
\newblock {\em The Fourier transform and its applications}, volume 31999.
\newblock McGraw-Hill New York, 1986.

\bibitem{brown1994convergence}
P.~N. Brown and Y.~Saad.
\newblock Convergence theory of nonlinear newton--krylov algorithms.
\newblock {\em SIAM Journal on Optimization}, 4(2):297--330, 1994.

\bibitem{Bruun1978}
G.~Bruun.
\newblock z-transform dft filters and fft's.
\newblock {\em IEEE Transactions on Acoustics, Speech, and Signal Processing},
  26(1):56--63, 1978.

\bibitem{chan1993fft}
R.~H. Chan, J.~G. Nagy, and R.~J. Plemmons.
\newblock Fft-based preconditioners for toeplitz-block least squares problems.
\newblock {\em SIAM journal on numerical analysis}, 30(6):1740--1768, 1993.

\bibitem{chan1984stability}
T.~F. Chan.
\newblock Stability analysis of finite difference schemes for the
  advection-diffusion equation.
\newblock {\em SIAM journal on numerical analysis}, 21(2):272--284, 1984.

\bibitem{chan1988optimal}
T.~F. Chan.
\newblock An optimal circulant preconditioner for toeplitz systems.
\newblock {\em SIAM journal on scientific and statistical computing},
  9(4):766--771, 1988.

\bibitem{chen2005time}
B.-F. Chen and R.~Nokes.
\newblock Time-independent finite difference analysis of fully non-linear and
  viscous fluid sloshing in a rectangular tank.
\newblock {\em Journal of Computational Physics}, 209(1):47--81, 2005.

\bibitem{chizhonkov2000domain}
E.~V. Chizhonkov and M.~A. Olshanskii.
\newblock On the domain geometry dependence of the lbb condition.
\newblock {\em ESAIM: Mathematical Modelling and Numerical Analysis},
  34(5):935--951, 2000.

\bibitem{choi1986finite}
D.~H. Choi and W.~J. Hoefer.
\newblock The finite-difference-time-domain method and its application to
  eigenvalue problems.
\newblock {\em IEEE Transactions on Microwave Theory and Techniques},
  34(12):1464--1470, 1986.

\bibitem{christen2011patus}
M.~Christen, O.~Schenk, and H.~Burkhart.
\newblock Patus: A code generation and autotuning framework for parallel
  iterative stencil computations on modern microarchitectures.
\newblock In {\em 2011 IEEE International Parallel \& Distributed Processing
  Symposium}, pages 676--687. IEEE, 2011.

\bibitem{CooleyTu65}
J.~W. Cooley and J.~W. Tukey.
\newblock An algorithm for the machine calculation of complex {F}ourier series.
\newblock {\em Mathematics of Computation}, 19(90):297--301, 1965.

\bibitem{CormenLeRiSt2009}
T.~H. Cormen, C.~E. Leiserson, R.~L. Rivest, and C.~Stein.
\newblock {\em Introduction to algorithms}.
\newblock MIT press, 2009.

\bibitem{crank1947practical}
J.~Crank and P.~Nicolson.
\newblock A practical method for numerical evaluation of solutions of partial
  differential equations of the heat-conduction type.
\newblock In {\em Mathematical Proceedings of the Cambridge Philosophical
  Society}, volume~43, pages 50--67. Cambridge University Press, 1947.

\bibitem{dahlquist2008numerical}
G.~Dahlquist and {\AA}.~Bj{\"o}rck.
\newblock {\em Numerical methods in scientific computing, volume I}.
\newblock SIAM, 2008.

\bibitem{datta2009optimization}
K.~Datta, S.~Kamil, S.~Williams, L.~Oliker, J.~Shalf, and K.~Yelick.
\newblock Optimization and performance modeling of stencil computations on
  modern microprocessors.
\newblock {\em SIAM review}, 51(1):129--159, 2009.

\bibitem{dietrich1997fast}
C.~R. Dietrich and G.~N. Newsam.
\newblock Fast and exact simulation of stationary gaussian processes through
  circulant embedding of the covariance matrix.
\newblock {\em SIAM Journal on Scientific Computing}, 18(4):1088--1107, 1997.

\bibitem{erlangga2006novel}
Y.~A. Erlangga, C.~W. Oosterlee, and C.~Vuik.
\newblock A novel multigrid based preconditioner for heterogeneous helmholtz
  problems.
\newblock {\em SIAM Journal on Scientific Computing}, 27(4):1471--1492, 2006.

\bibitem{ferziger2002computational}
J.~H. Ferziger, M.~Peri{\'c}, and R.~L. Street.
\newblock {\em Computational methods for fluid dynamics}, volume~3.
\newblock Springer, 2002.

\bibitem{fokas2000integrability}
A.~Fokas.
\newblock On the integrability of linear and nonlinear partial differential
  equations.
\newblock {\em Journal of Mathematical Physics}, 41(6):4188--4237, 2000.

\bibitem{frigo2005design}
M.~Frigo and S.~G. Johnson.
\newblock The design and implementation of fftw3.
\newblock {\em Proceedings of the IEEE}, 93(2):216--231, 2005.

\bibitem{FrigoLePrRa1999}
M.~Frigo, C.~E. Leiserson, H.~Prokop, and S.~Ramachandran.
\newblock Cache-oblivious algorithms.
\newblock In {\em Foundations of Computer Science}, pages 285--297, 1999.

\bibitem{frigo2005cache}
M.~Frigo and V.~Strumpen.
\newblock Cache oblivious stencil computations.
\newblock In {\em Proceedings of the 19th annual international conference on
  Supercomputing}, pages 361--366, 2005.

\bibitem{FrigoSt2005}
M.~Frigo and V.~Strumpen.
\newblock Cache oblivious stencil computations.
\newblock In {\em International conference on Supercomputing}, pages 361--366,
  2005.

\bibitem{FrigoSt2009}
M.~Frigo and V.~Strumpen.
\newblock The cache complexity of multithreaded cache oblivious algorithms.
\newblock {\em Theory of Computing Systems}, 45(2):203--233, 2009.

\bibitem{fritz2009application}
J.~Fritz, I.~Neuweiler, and W.~Nowak.
\newblock Application of fft-based algorithms for large-scale universal kriging
  problems.
\newblock {\em Mathematical Geosciences}, 41(5):509--533, 2009.

\bibitem{frommer2017block}
A.~Frommer, K.~Lund, D.~B. Szyld, et~al.
\newblock Block krylov subspace methods for functions of matrices.
\newblock 2017.

\bibitem{fuka2015poisfft}
V.~Fuka.
\newblock Poisfft--a free parallel fast poisson solver.
\newblock {\em Applied Mathematics and Computation}, 267:356--364, 2015.

\bibitem{gammie2003harm}
C.~F. Gammie, J.~C. McKinney, and G.~T{\'o}th.
\newblock Harm: a numerical scheme for general relativistic
  magnetohydrodynamics.
\newblock {\em The Astrophysical Journal}, 589(1):444, 2003.

\bibitem{garrappa2015solving}
R.~Garrappa, I.~Moret, and M.~Popolizio.
\newblock Solving the time-fractional schr{\"o}dinger equation by krylov
  projection methods.
\newblock {\em Journal of Computational Physics}, 293:115--134, 2015.

\bibitem{gmeiner2013optimization}
B.~Gmeiner, T.~Gradl, F.~Gaspar, and U.~R{\"u}de.
\newblock Optimization of the multigrid-convergence rate on semi-structured
  meshes by local fourier analysis.
\newblock {\em Computers \& Mathematics with Applications}, 65(4):694--711,
  2013.

\bibitem{gohberg1994fast}
I.~Gohberg and V.~Olshevsky.
\newblock Fast algorithms with preprocessing for matrix-vector multiplication
  problems.
\newblock {\em Journal of Complexity}, 10(4):411--427, 1994.

\bibitem{golub1992scientific}
G.~H. Golub, J.~M. Ortega, et~al.
\newblock {\em Scientific computing and differential equations: an introduction
  to numerical methods}.
\newblock Academic press, 1992.

\bibitem{Good1958}
I.~J. Good.
\newblock The interaction algorithm and practical fourier analysis.
\newblock {\em Journal of the Royal Statistical Society: Series B
  (Methodological)}, 20(2):361--372, 1958.

\bibitem{gray2006toeplitz}
R.~M. Gray.
\newblock {\em Toeplitz and circulant matrices: A review}.
\newblock now publishers inc, 2006.

\bibitem{guan2019two}
Y.~Guan and I.~Novosselov.
\newblock Two relaxation time lattice boltzmann method coupled to fast fourier
  transform poisson solver: Application to electroconvective flow.
\newblock {\em Journal of computational physics}, 397:108830, 2019.

\bibitem{gutknecht2007brief}
M.~H. Gutknecht.
\newblock A brief introduction to krylov space methods for solving linear
  systems.
\newblock In {\em Frontiers of Computational Science}, pages 53--62. Springer,
  2007.

\bibitem{guttel2014nleigs}
S.~Guttel, R.~Van~Beeumen, K.~Meerbergen, and W.~Michiels.
\newblock Nleigs: A class of fully rational krylov methods for nonlinear
  eigenvalue problems.
\newblock {\em SIAM Journal on Scientific Computing}, 36(6):A2842--A2864, 2014.

\bibitem{hagedorn2018high}
B.~Hagedorn, L.~Stoltzfus, M.~Steuwer, S.~Gorlatch, and C.~Dubach.
\newblock High performance stencil code generation with lift.
\newblock In {\em Proceedings of the 2018 International Symposium on Code
  Generation and Optimization}, pages 100--112, 2018.

\bibitem{he2011robust}
W.~He and S.~S. Ge.
\newblock Robust adaptive boundary control of a vibrating string under unknown
  time-varying disturbance.
\newblock {\em IEEE Transactions on Control Systems Technology}, 20(1):48--58,
  2011.

\bibitem{henretty2013stencil}
T.~Henretty, R.~Veras, F.~Franchetti, L.-N. Pouchet, J.~Ramanujam, and
  P.~Sadayappan.
\newblock A stencil compiler for short-vector simd architectures.
\newblock In {\em Proceedings of the 27th international ACM conference on
  International conference on supercomputing}, pages 13--24, 2013.

\bibitem{hirsch2007numerical}
C.~Hirsch.
\newblock {\em Numerical computation of internal and external flows: The
  fundamentals of computational fluid dynamics}.
\newblock Elsevier, 2007.

\bibitem{hockney1965fast}
R.~W. Hockney.
\newblock A fast direct solution of poisson's equation using fourier analysis.
\newblock {\em Journal of the ACM (JACM)}, 12(1):95--113, 1965.

\bibitem{ipsen1998idea}
I.~C. Ipsen and C.~D. Meyer.
\newblock The idea behind krylov methods.
\newblock {\em The American mathematical monthly}, 105(10):889--899, 1998.

\bibitem{isaacson2012analysis}
E.~Isaacson and H.~B. Keller.
\newblock {\em Analysis of numerical methods}.
\newblock Courier Corporation, 2012.

\bibitem{janpugdee2006accelerated}
P.~Janpugdee, P.~H. Pathak, P.~Mahachoklertwattana, and R.~J. Burkholder.
\newblock An accelerated dft-mom for the analysis of large finite periodic
  antenna arrays.
\newblock {\em IEEE transactions on antennas and propagation}, 54(1):279--283,
  2006.

\bibitem{johnston2002finite}
H.~Johnston and J.-G. Liu.
\newblock Finite difference schemes for incompressible flow based on local
  pressure boundary conditions.
\newblock {\em Journal of Computational Physics}, 180(1):120--154, 2002.

\bibitem{kabel2014efficient}
M.~Kabel, T.~B{\"o}hlke, and M.~Schneider.
\newblock Efficient fixed point and newton--krylov solvers for fft-based
  homogenization of elasticity at large deformations.
\newblock {\em Computational Mechanics}, 54(6):1497--1514, 2014.

\bibitem{kalnay1990global}
E.~Kalnay, M.~Kanamitsu, and W.~Baker.
\newblock Global numerical weather prediction at the national meteorological
  center.
\newblock {\em Bulletin of the American Meteorological Society},
  71(10):1410--1428, 1990.

\bibitem{kamil2010auto}
S.~Kamil, C.~Chan, L.~Oliker, J.~Shalf, and S.~Williams.
\newblock An auto-tuning framework for parallel multicore stencil computations.
\newblock In {\em 2010 IEEE international symposium on parallel \& distributed
  processing (IPDPS)}, pages 1--12. IEEE, 2010.

\bibitem{kamil2005impact}
S.~Kamil, P.~Husbands, L.~Oliker, J.~Shalf, and K.~Yelick.
\newblock Impact of modern memory subsystems on cache optimizations for stencil
  computations.
\newblock In {\em Proceedings of the 2005 workshop on Memory system
  performance}, pages 36--43, 2005.

\bibitem{Khokhriakov2018}
S.~Khokhriakov, R.~R. Manumachu, and A.~Lastovetsky.
\newblock Performance optimization of multithreaded 2d {FFT} on multicore
  processors: Challenges and solution approaches.
\newblock In {\em 2018 {IEEE} 25th International Conference on High Performance
  Computing Workshops ({HiPCW})}. {IEEE}, Dec. 2018.

\bibitem{kirby2018solver}
R.~C. Kirby and L.~Mitchell.
\newblock Solver composition across the pde/linear algebra barrier.
\newblock {\em SIAM Journal on Scientific Computing}, 40(1):C76--C98, 2018.

\bibitem{kloeden1992higher}
P.~E. Kloeden and E.~Platen.
\newblock Higher-order implicit strong numerical schemes for stochastic
  differential equations.
\newblock {\em Journal of statistical physics}, 66(1):283--314, 1992.

\bibitem{knyazev2001toward}
A.~V. Knyazev.
\newblock Toward the optimal preconditioned eigensolver: Locally optimal block
  preconditioned conjugate gradient method.
\newblock {\em SIAM journal on scientific computing}, 23(2):517--541, 2001.

\bibitem{komissarov2002time}
S.~Komissarov.
\newblock Time-dependent, force-free, degenerate electrodynamics.
\newblock {\em Monthly Notices of the Royal Astronomical Society},
  336(3):759--766, 2002.

\bibitem{kong2013polyhedral}
M.~Kong, R.~Veras, K.~Stock, F.~Franchetti, L.-N. Pouchet, and P.~Sadayappan.
\newblock When polyhedral transformations meet simd code generation.
\newblock In {\em Proceedings of the 34th ACM SIGPLAN conference on Programming
  language design and implementation}, pages 127--138, 2013.

\bibitem{konuk2020modeling}
T.~Konuk and J.~Shragge.
\newblock Modeling full-wavefield time-varying sea-surface effects on seismic
  data: A mimetic finite-difference approach.
\newblock {\em Geophysics}, 85(2):T45--T55, 2020.

\bibitem{korch2020depth}
M.~Korch and T.~Werner.
\newblock An in-depth introduction of multi-workgroup tiling for improving the
  locality of explicit one-step methods for ode systems with limited access
  distance on gpus.
\newblock {\em Concurrency and Computation: Practice and Experience}, page
  e6016, 2020.

\bibitem{koutsourelakis2009accurate}
P.-S. Koutsourelakis.
\newblock Accurate uncertainty quantification using inaccurate computational
  models.
\newblock {\em SIAM Journal on Scientific Computing}, 31(5):3274--3300, 2009.

\bibitem{kuijlaars2006convergence}
A.~B. Kuijlaars.
\newblock Convergence analysis of krylov subspace iterations with methods from
  potential theory.
\newblock {\em SIAM review}, 48(1):3--40, 2006.

\bibitem{kuzmin2010vertex}
D.~Kuzmin.
\newblock A vertex-based hierarchical slope limiter for p-adaptive
  discontinuous galerkin methods.
\newblock {\em Journal of computational and applied mathematics},
  233(12):3077--3085, 2010.

\bibitem{le2010spectral}
O.~Le~Ma{\^\i}tre and O.~M. Knio.
\newblock {\em Spectral methods for uncertainty quantification: with
  applications to computational fluid dynamics}.
\newblock Springer Science \& Business Media, 2010.

\bibitem{liao2013high}
W.~Liao.
\newblock A high-order adi finite difference scheme for a 3d reaction-diffusion
  equation with neumann boundary condition.
\newblock {\em Numerical Methods for Partial Differential Equations},
  29(3):778--798, 2013.

\bibitem{louboutin2019devito}
M.~Louboutin, M.~Lange, F.~Luporini, N.~Kukreja, P.~A. Witte, F.~J. Herrmann,
  P.~Velesko, and G.~J. Gorman.
\newblock Devito (v3. 1.0): an embedded domain-specific language for finite
  differences and geophysical exploration.
\newblock {\em Geoscientific Model Development}, 12(3):1165--1187, 2019.

\bibitem{luporini2020architecture}
F.~Luporini, M.~Louboutin, M.~Lange, N.~Kukreja, P.~Witte, J.~H{\"u}ckelheim,
  C.~Yount, P.~H. Kelly, F.~J. Herrmann, and G.~J. Gorman.
\newblock Architecture and performance of devito, a system for automated
  stencil computation.
\newblock {\em ACM Transactions on Mathematical Software (TOMS)}, 46(1):1--28,
  2020.

\bibitem{mang2015inexact}
A.~Mang and G.~Biros.
\newblock An inexact newton--krylov algorithm for constrained diffeomorphic
  image registration.
\newblock {\em SIAM journal on imaging sciences}, 8(2):1030--1069, 2015.

\bibitem{mendicino2015stability}
G.~Mendicino, J.~Pedace, and A.~Senatore.
\newblock Stability of an overland flow scheme in the framework of a fully
  coupled eco-hydrological model based on the macroscopic cellular automata
  approach.
\newblock {\em Communications in Nonlinear Science and Numerical Simulation},
  21(1-3):128--146, 2015.

\bibitem{moaddy2011non}
K.~Moaddy, S.~Momani, and I.~Hashim.
\newblock The non-standard finite difference scheme for linear fractional pdes
  in fluid mechanics.
\newblock {\em Computers \& Mathematics with Applications}, 61(4):1209--1216,
  2011.

\bibitem{moretti1979lambda}
G.~Moretti.
\newblock The $\lambda$-scheme.
\newblock {\em Computers \& Fluids}, 7(3):191--205, 1979.

\bibitem{mugler1988fast}
D.~Mugler and R.~Scott.
\newblock Fast fourier transform method for partial differential equations,
  case study: The 2-d diffusion equation.
\newblock {\em Computers \& Mathematics with Applications}, 16(3):221--228,
  1988.

\bibitem{mur1981absorbing}
G.~Mur.
\newblock Absorbing boundary conditions for the finite-difference approximation
  of the time-domain electromagnetic-field equations.
\newblock {\em IEEE transactions on Electromagnetic Compatibility},
  (4):377--382, 1981.

\bibitem{musco2015randomized}
C.~Musco and C.~Musco.
\newblock Randomized block krylov methods for stronger and faster approximate
  singular value decomposition.
\newblock {\em Advances in Neural Information Processing Systems},
  28:1396--1404, 2015.

\bibitem{najm1998semi}
H.~N. Najm, P.~S. Wyckoff, and O.~M. Knio.
\newblock A semi-implicit numerical scheme for reacting flow: I. stiff
  chemistry.
\newblock {\em Journal of Computational Physics}, 143(2):381--402, 1998.

\bibitem{nguyen20103}
A.~Nguyen, N.~Satish, J.~Chhugani, C.~Kim, and P.~Dubey.
\newblock 3.5-d blocking optimization for stencil computations on modern cpus
  and gpus.
\newblock In {\em SC'10: Proceedings of the 2010 ACM/IEEE International
  Conference for High Performance Computing, Networking, Storage and Analysis},
  pages 1--13. IEEE, 2010.

\bibitem{nkwunonwo2019urban}
U.~Nkwunonwo, M.~Whitworth, and B.~Baily.
\newblock Urban flood modelling combining cellular automata framework with
  semi-implicit finite difference numerical formulation.
\newblock {\em Journal of African Earth Sciences}, 150:272--281, 2019.

\bibitem{notay2008recursive}
Y.~Notay and P.~S. Vassilevski.
\newblock Recursive krylov-based multigrid cycles.
\newblock {\em Numerical Linear Algebra with Applications}, 15(5):473--487,
  2008.

\bibitem{ostashev2005equations}
V.~E. Ostashev, D.~K. Wilson, L.~Liu, D.~F. Aldridge, N.~P. Symons, and
  D.~Marlin.
\newblock Equations for finite-difference, time-domain simulation of sound
  propagation in moving inhomogeneous media and numerical implementation.
\newblock {\em The Journal of the Acoustical Society of America},
  117(2):503--517, 2005.

\bibitem{pang1999introduction}
T.~Pang.
\newblock An introduction to computational physics, 1999.

\bibitem{pereda2001analyzing}
A.~Pereda, L.~A. Vielva, A.~Vegas, and A.~Prieto.
\newblock Analyzing the stability of the fdtd technique by combining the von
  neumann method with the routh-hurwitz criterion.
\newblock {\em IEEE Transactions on Microwave Theory and Techniques},
  49(2):377--381, 2001.

\bibitem{peyre2011numerical}
G.~Peyr{\'e}.
\newblock The numerical tours of signal processing-advanced computational
  signal and image processing.
\newblock {\em IEEE Computing in Science and Engineering}, 13(4):94--97, 2011.

\bibitem{press2007numerical}
W.~H. Press, S.~A. Teukolsky, W.~T. Vetterling, and B.~P. Flannery.
\newblock {\em Numerical recipes 3rd edition: The art of scientific computing}.
\newblock Cambridge university press, 2007.

\bibitem{ragan2013halide}
J.~Ragan-Kelley, C.~Barnes, A.~Adams, S.~Paris, F.~Durand, and S.~Amarasinghe.
\newblock Halide: a language and compiler for optimizing parallelism, locality,
  and recomputation in image processing pipelines.
\newblock {\em Acm Sigplan Notices}, 48(6):519--530, 2013.

\bibitem{ramani1989painleve}
A.~Ramani, B.~Grammaticos, and T.~Bountis.
\newblock The painlev{\'e} property and singularity analysis of integrable and
  non-integrable systems.
\newblock {\em Physics Reports}, 180(3):159--245, 1989.

\bibitem{rappaz2010numerical}
M.~Rappaz, M.~Bellet, and M.~Deville.
\newblock {\em Numerical modeling in materials science and engineering},
  volume~32.
\newblock Springer Science \& Business Media, 2010.

\bibitem{Rawat2018}
P.~S. Rawat, M.~Vaidya, A.~Sukumaran-Rajam, M.~Ravishankar, V.~Grover,
  A.~Rountev, L.-N. Pouchet, and P.~Sadayappan.
\newblock Domain-specific optimization and generation of high-performance gpu
  code for stencil computations.
\newblock {\em Proceedings of the IEEE}, 106(11):1902--1920, 2018.

\bibitem{renson2016numerical}
L.~Renson, G.~Kerschen, and B.~Cochelin.
\newblock Numerical computation of nonlinear normal modes in mechanical
  engineering.
\newblock {\em Journal of Sound and Vibration}, 364:177--206, 2016.

\bibitem{robert1981stable}
A.~Robert.
\newblock A stable numerical integration scheme for the primitive
  meteorological equations.
\newblock {\em Atmosphere-Ocean}, 19(1):35--46, 1981.

\bibitem{robert1982semi}
A.~Robert.
\newblock A semi-lagrangian and semi-implicit numerical integration scheme for
  the primitive meteorological equations.
\newblock {\em Journal of the Meteorological Society of Japan. Ser. II},
  60(1):319--325, 1982.

\bibitem{rota1973foundations}
G.-C. Rota, D.~Kahaner, and A.~Odlyzko.
\newblock On the foundations of combinatorial theory. viii. finite operator
  calculus.
\newblock {\em Journal of Mathematical Analysis and Applications},
  42(3):684--760, 1973.

\bibitem{russ1994image}
J.~C. Russ, J.~R. Matey, A.~J. Mallinckrodt, and S.~McKay.
\newblock The image processing handbook.
\newblock {\em Computers in Physics}, 8(2):177--178, 1994.

\bibitem{saad1989krylov}
Y.~Saad.
\newblock Krylov subspace methods on supercomputers.
\newblock {\em SIAM Journal on Scientific and Statistical Computing},
  10(6):1200--1232, 1989.

\bibitem{saleheen1997new}
H.~I. Saleheen and K.~T. Ng.
\newblock New finite difference formulations for general inhomogeneous
  anisotropic bioelectric problems.
\newblock {\em IEEE transactions on biomedical engineering}, 44(9):800--809,
  1997.

\bibitem{sano2011scalable}
K.~Sano, Y.~Hatsuda, and S.~Yamamoto.
\newblock Scalable streaming-array of simple soft-processors for stencil
  computations with constant memory-bandwidth.
\newblock In {\em 2011 IEEE 19th Annual International Symposium on
  Field-Programmable Custom Computing Machines}, pages 234--241. IEEE, 2011.

\bibitem{schumann1988fast}
U.~Schumann and R.~A. Sweet.
\newblock Fast fourier transforms for direct solution of poisson's equation
  with staggered boundary conditions.
\newblock {\em Journal of Computational Physics}, 75(1):123--137, 1988.

\bibitem{sitko2016time}
M.~Sitko, M.~Pietrzyk, and L.~Madej.
\newblock Time and length scale issues in numerical modelling of dynamic
  recrystallization based on the multi space cellular automata method.
\newblock {\em Journal of computational science}, 16:98--113, 2016.

\bibitem{somers1993direct}
J.~Somers.
\newblock Direct simulation of fluid flow with cellular automata and the
  lattice-boltzmann equation.
\newblock {\em Applied Scientific Research}, 51(1-2):127--133, 1993.

\bibitem{storti2013fft}
M.~A. Storti, R.~R. Paz, L.~D. Dalcin, S.~D. Costarelli, and S.~R. Idelsohn.
\newblock A fft preconditioning technique for the solution of incompressible
  flow on gpus.
\newblock {\em Computers \& Fluids}, 74:44--57, 2013.

\bibitem{strikwerda2004finite}
J.~C. Strikwerda.
\newblock {\em Finite difference schemes and partial differential equations}.
\newblock SIAM, 2004.

\bibitem{sweby1984high}
P.~K. Sweby.
\newblock High resolution schemes using flux limiters for hyperbolic
  conservation laws.
\newblock {\em SIAM journal on numerical analysis}, 21(5):995--1011, 1984.

\bibitem{szilard2004theories}
R.~Szilard.
\newblock Theories and applications of plate analysis: classical, numerical and
  engineering methods.
\newblock {\em Appl. Mech. Rev.}, 57(6):B32--B33, 2004.

\bibitem{taflove2005computational}
A.~Taflove and S.~C. Hagness.
\newblock {\em Computational electrodynamics: the finite-difference time-domain
  method}.
\newblock Artech house, 2005.

\bibitem{tam1994wall}
C.~K. Tam and Z.~Dong.
\newblock Wall boundary conditions for high-order finite-difference schemes in
  computational aeroacoustics.
\newblock {\em Theoretical and Computational Fluid Dynamics}, 6(5):303--322,
  1994.

\bibitem{tang2011pochoir}
Y.~Tang, R.~A. Chowdhury, B.~C. Kuszmaul, C.-K. Luk, and C.~E. Leiserson.
\newblock The pochoir stencil compiler.
\newblock In {\em Proceedings of the twenty-third annual ACM symposium on
  Parallelism in algorithms and architectures}, pages 117--128, 2011.

\bibitem{Tang2011}
Y.~Tang, R.~A. Chowdhury, B.~C. Kuszmaul, C.-K. Luk, and C.~E. Leiserson.
\newblock The pochoir stencil compiler.
\newblock In {\em ACM Symposium on Parallelism in Algorithms and
  Architectures}, pages 117--128, 2011.

\bibitem{teixeira2008time}
F.~L. Teixeira.
\newblock Time-domain finite-difference and finite-element methods for maxwell
  equations in complex media.
\newblock {\em IEEE Transactions on Antennas and Propagation},
  56(8):2150--2166, 2008.

\bibitem{teixeira1998finite}
F.~L. Teixeira, W.~C. Chew, M.~Straka, M.~Oristaglio, and T.~Wang.
\newblock Finite-difference time-domain simulation of ground penetrating radar
  on dispersive, inhomogeneous, and conductive soils.
\newblock {\em IEEE Transactions on Geoscience and remote sensing},
  36(6):1928--1937, 1998.

\bibitem{thijssen2007computational}
J.~Thijssen.
\newblock {\em Computational physics}.
\newblock Cambridge university press, 2007.

\bibitem{turkel1987preconditioned}
E.~Turkel.
\newblock Preconditioned methods for solving the incompressible and low speed
  compressible equations.
\newblock {\em Journal of computational physics}, 72(2):277--298, 1987.

\bibitem{tyrtyshnikov1996unifying}
E.~E. Tyrtyshnikov.
\newblock A unifying approach to some old and new theorems on distribution and
  clustering.
\newblock {\em Linear algebra and its applications}, 232:1--43, 1996.

\bibitem{van2003iterative}
H.~A. Van~der Vorst.
\newblock {\em Iterative Krylov methods for large linear systems}.
\newblock Number~13. Cambridge University Press, 2003.

\bibitem{van1984enhancements}
J.~P. Van~Doormaal and G.~D. Raithby.
\newblock Enhancements of the simple method for predicting incompressible fluid
  flows.
\newblock {\em Numerical heat transfer}, 7(2):147--163, 1984.

\bibitem{van1979towards}
B.~Van~Leer.
\newblock Towards the ultimate conservative difference scheme. v. a
  second-order sequel to godunov's method.
\newblock {\em Journal of computational Physics}, 32(1):101--136, 1979.

\bibitem{van2012numerical}
U.~Van~Rienen.
\newblock {\em Numerical methods in computational electrodynamics: linear
  systems in practical applications}, volume~12.
\newblock Springer Science \& Business Media, 2012.

\bibitem{verdoolaege2013}
S.~Verdoolaege, J.~Carlos~Juega, A.~Cohen, J.~Ignacio~Gomez, C.~Tenllado, and
  F.~Catthoor.
\newblock Polyhedral parallel code generation for cuda.
\newblock {\em ACM Transactions on Architecture and Code Optimization},
  9(4):1--23, 2013.

\bibitem{vese2002numerical}
L.~A. Vese and S.~J. Osher.
\newblock Numerical methods for p-harmonic flows and applications to image
  processing.
\newblock {\em SIAM Journal on Numerical Analysis}, 40(6):2085--2104, 2002.

\bibitem{vesely1994computational}
F.~J. Vesely.
\newblock {\em Computational Physics}.
\newblock Springer, 1994.

\bibitem{virieux1986p}
J.~Virieux.
\newblock P-sv wave propagation in heterogeneous media: Velocity-stress
  finite-difference method.
\newblock {\em Geophysics}, 51(4):889--901, 1986.

\bibitem{wang1993finite}
T.~Wang and G.~W. Hohmann.
\newblock A finite-difference, time-domain solution for three-dimensional
  electromagnetic modeling.
\newblock {\em Geophysics}, 58(6):797--809, 1993.

\bibitem{weickert1996theoretical}
J.~Weickert.
\newblock Theoretical foundations of anisotropic diffusion in image processing.
\newblock In {\em Theoretical foundations of computer vision}, pages 221--236.
  Springer, 1996.

\bibitem{weickert2000applications}
J.~Weickert.
\newblock Applications of nonlinear diffusion in image processing and computer
  vision.
\newblock 2000.

\bibitem{wesseling1996neumann}
P.~Wesseling.
\newblock von neumann stability conditions for the convection-diffusion
  eqation.
\newblock {\em IMA journal of Numerical Analysis}, 16(4):583--598, 1996.

\bibitem{Winograd1978}
S.~Winograd.
\newblock On computing the discrete fourier transform.
\newblock {\em Mathematics of computation}, 32(141):175--199, 1978.

\bibitem{Wolf1991}
M.~E. Wolf and M.~S. Lam.
\newblock A data locality optimizing algorithm.
\newblock In {\em ACM SIGPLAN conference on Programming language design and
  implementation}, pages 30--44, 1991.

\bibitem{Wolf1996}
M.~E. Wolf, D.~E. Maydan, and D.-K. Chen.
\newblock Combining loop transformations considering caches and scheduling.
\newblock In {\em IEEE/ACM International Symposium on Microarchitecture}, pages
  274--286, 1996.

\bibitem{Wolfe1987}
M.~J. Wolfe.
\newblock Iteration space tiling for memory hierarchies.
\newblock {\em Parallel Processing for Scientific Computing}, 357:361, 1987.

\bibitem{Wonnacott2002}
D.~Wonnacott.
\newblock Achieving scalable locality with time skewing.
\newblock {\em International Journal of Parallel Programming}, 30(3):181--221,
  2002.

\bibitem{xu2006simulations}
A.~Xu, G.~Gonnella, and A.~Lamura.
\newblock Simulations of complex fluids by mixed lattice boltzmann—finite
  difference methods.
\newblock {\em Physica A: Statistical Mechanics and its Applications},
  362(1):42--47, 2006.

\bibitem{yuki2015polybench}
T.~Yuki and L.-N. Pouchet.
\newblock Polybench 4.0, 2015.

\bibitem{yuste2005explicit}
S.~B. Yuste and L.~Acedo.
\newblock An explicit finite difference method and a new von neumann-type
  stability analysis for fractional diffusion equations.
\newblock {\em SIAM Journal on Numerical Analysis}, 42(5):1862--1874, 2005.

\end{thebibliography}

\end{document}